\newcolumntype{Y}{>{\centering\arraybackslash} X}
\newcommand{\comnospace}{\mbox{$\triangleright$}}
\newcommand{\com}{\mbox{\comnospace\ }}
\newcommand{\func}[1]{\mbox{\sc #1}}
\newcommand{\cas}{\mbox{CAS}}
\newcommand{\true}{\textsc{True}}
\newcommand{\node}{\mbox{Node-record}}
\newcommand{\rec}{\mbox{Data-record}}
\newcommand{\llt}{\func{LLX}}
\newcommand{\sct}{\func{SCX}}
\newcommand{\vlt}{\func{VLX}}
\newcommand{\del}{\func{Delete}}
\newcommand{\ins}{\func{Insert}}
\newcommand{\tryins}{\func{TryInsert}}
\newcommand{\trydel}{\func{TryDelete}}
\newcommand{\tryrebalance}{\func{TryRebalance}}
\newcommand{\cleanup}{\func{Cleanup}}
\newcommand{\search}{\func{Search}}
\newcommand{\fail}{\textsc{Fail}}
\newcommand{\finalized}{\textsc{Finalized}}
\newcommand{\nil}{\textsc{Nil}}
\newcommand{\presctlinked}{1}
\newcommand{\presctabainit}{2}
\newcommand{\presctaba}{3}
\newcommand{\conmarkallremovedrecs}{3}
\newcommand{\ritalic}[1]{r_{\mbox{\tiny\textsc{#1}}}}
\newcommand{\rxxl}{\ritalic{xxl}}
\newcommand{\rxxr}{\ritalic{xxr}}
\newcommand{\rxxrl}{\ritalic{xxrl}}
\newcommand{\rxxrll}{\ritalic{xxrll}}
\newcommand{\rxxrlr}{\ritalic{xxrlr}}
\newcommand{\uitalic}[1]{u_{\mbox{\tiny\textsc{#1}}}}
\newcommand{\ux}{\uitalic x}
\newcommand{\uxl}{\uitalic{xl}}
\newcommand{\uxr}{\uitalic{xr}}
\newcommand{\uxll}{\uitalic{xll}}
\newcommand{\uxlr}{\uitalic{xlr}}
\newcommand{\uxrl}{\uitalic{xrl}}
\newcommand{\uxrr}{\uitalic{xrr}}
\newcommand{\nitalic}[1]{n_{\mbox{\tiny\textsc{#1}}}}
\newcommand{\nL}{\nitalic{l}}
\newcommand{\nR}{\nitalic{r}}
\newcommand{\nLL}{\nitalic{ll}}
\newcommand{\nLLL}{\nitalic{lll}}
\newcommand{\nLR}{\nitalic{lr}}
\def\pwidth{2.5cm}
\lstdefinestyle{nonumbers}{numbers=none}
\newcommand{\preplisting}{\lstset{gobble=1, numbers=left, numberstyle=\tiny, numberblanklines=false, numbersep=-8.5pt, firstnumber=last, escapeinside={//}{\^^M}, breaklines=true, basicstyle=\footnotesize, keywordstyle=\bfseries, morekeywords={type,subtype,break,if,else,end,loop,while,do,done,exit,goto, when,then,return,read,and,or,not,,for,each,boolean,procedure,invoke,next,iteration,until}}}
\newcommand{\prepnewlisting}{\lstset{gobble=1, numbers=left, numberstyle=\tiny, numberblanklines=false, numbersep=-8.5pt, firstnumber=1, escapeinside={//}{\^^M}, breaklines=true, basicstyle=\footnotesize, keywordstyle=\bfseries, morekeywords={type,subtype,break,if,else,end,loop,while,do,done,exit,goto, when,then,return,read,and,or,not,for,each,boolean,procedure,invoke,next,iteration,until}}}
\newtheorem{thm}{Theorem}
\newtheorem{obs}[thm]{Observation}
\newtheorem{lem}[thm]{Lemma}
\newtheorem{cor}[thm]{Corollary}
\newtheorem{con}[thm]{Constraint}
\newtheorem{defn}[thm]{Definition}
\newcommand{\ourcomments}[2]{}
\newcommand{\edited}{}
\newcommand{\eric}[1]{\ourcomments{#1}{Eric}}
\newcommand{\strike}[1]{}
\newcommand{\vone}[1]{}
\newcommand{\after}[1]{}
\begin{document}

%\conferenceinfo{PPoPP~'14}{February 15--19, 2014, Orlando, Florida, USA} 
%\copyrightyear{2014} 
%\copyrightdata{978-1-4503-2656-8/14/02} 
%\doi{2555243.2555267}

\title{A General Technique for Non-blocking Trees} %\\
\author{
Trevor Brown$^1$, Faith Ellen$^1$ and Eric Ruppert$^2$}
\institute{
University of Toronto$^1$, York University$^2$}

\maketitle

\begin{abstract}

We describe a general technique for obtaining provably correct, non-blocking implementations of a large class of tree data structures where pointers are directed from parents to children. Updates are permitted to modify any contiguous portion of the tree atomically. Our non-blocking algorithms make use of the \llt, \sct\ and \vlt\ primitives, which are multi-word generalizations of the standard LL, SC and VL primitives and have been implemented from single-word CAS~\cite{paper1}.

To illustrate our technique, we describe how it can be used in a fairly straightforward way to obtain a non-blocking implementation of a chromatic tree, which is a relaxed variant of a red-black tree. The height of the tree at any time is $O(c+ \log n)$, where $n$ is the number of keys and $c$ is the number of updates in progress. We provide an experimental performance analysis which demonstrates that our Java implementation of a chromatic tree rivals, and often significantly outperforms, other leading concurrent dictionaries.
\end{abstract}

\section{Introduction}\label{sec-intro}

The binary search tree (BST) is among the most important data structures.
Previous concurrent implementations of balanced BSTs without locks
either used coarse-grained transactions, which limit concurrency, or lacked rigorous proofs of correctness.
In this paper, we describe a general technique for implementing 
\textit{any} data structure based on a down-tree (a directed acyclic graph 
of indegree one),
with updates that modify any connected subgraph of the tree atomically.
The resulting implementations are non-blocking, which means that some process is always guaranteed to make progress, even if processes crash.
Our approach drastically simplifies the task of proving correctness.
This makes it feasible to develop provably correct implementations of non-blocking balanced BSTs with fine-grained synchronization
(i.e., with updates that synchronize on a small constant number of nodes).

As with all concurrent implementations, the implementations obtained using our technique are more efficient if each update to the data structure involves a small number of nodes near one another.
We call such an update {\em localized}.
We use \emph{operation} to denote
an operation of the abstract data type (ADT) being implemented by the data structure.
Operations that cannot modify the data structure are called \emph{queries}.
For some data structures, such as Patricia tries and leaf-oriented BSTs,
operations modify the data structure using a single localized update.
In some other data structures, operations that modify the data structure
can be split into several
localized updates that can be freely interleaved.

A particularly interesting application of our technique is to implement \textit{relaxed-balance} versions of sequential data structures efficiently.
Relaxed-balance data structures decouple
updates that rebalance the data structure from operations,
and allow
updates that accomplish rebalancing to be delayed and freely interleaved with
other updates.
For example, a chromatic tree is a relaxed-balance version of a red-black tree (RBT) which splits up the insertion or deletion
of a key and any subsequent rotations into a sequence of 
localized updates.
There is a rich literature of relaxed-balance versions of sequential
data structures \cite{DBLP:journals/acta/Larsen98},
and several papers (e.g., \cite{LOS01}) have described general techniques
that can be used to easily produce
them from large classes of
existing sequential data structures.
The small number of nodes involved in each update makes relaxed-balance data structures perfect candidates for efficient implementation using our technique.

\subsubsection*{Our Contributions}

\begin{itemize}
\item We provide a simple template that can be filled in to obtain an implementation of any update for a data structure based on a down-tree.
We prove that any data structure that follows our template for all of its updates will automatically be linearizable and non-blocking.
The template takes care of all process coordination, so the data structure designer is able to think of updates as atomic steps.

\item  To demonstrate the use of our template, we provide a complete, provably correct, non-blocking linearizable implementation of a chromatic tree \cite{NS96}, which is a
relaxed-balanced version of a RBT.
To our
knowledge, this is the first provably correct, non-blocking balanced BST implemented using fine-grained synchronization.
Our chromatic trees always have height $O(c+\log n)$, where $n$ is the number of keys stored in the tree and $c$ is the number of insertions and deletions that are in progress (Section~\ref{sec-height}).

\item 
We show that sequential implementations of some queries
are linearizable,
even though they completely ignore concurrent updates.
For example, an ordinary BST search (that works when there is no concurrency) also works in our chromatic tree.
Ignoring updates makes searches very fast.
We also describe how to perform
successor queries in our chromatic tree, which
interact properly with updates that follow our template
(Section \ref{sec-chromatic-succ}).

\item We show experimentally that our Java implementation of a chromatic tree rivals, and often significantly outperforms, known highly-tuned concurrent dictionaries,
over a variety of workloads, contention levels and thread counts.
For example, with 128 threads, our algorithm outperforms Java's non-blocking skip-list by 13\% to 156\%, the lock-based AVL tree of Bronson et~al. by 63\% to 224\%, and a RBT that uses software transactional memory (STM) by 13 to 134 times (Section~\ref{sec-exp}).
\end{itemize}

\section{Related Work} \label{sec-related}

There are many lock-based implementations of search tree data structures.
(See \cite{AKKMT12,BCCO10:ppopp} for state-of-the-art examples.)
Here, we focus on implementations that do not use locks.
Valois \cite{Val95} sketched an implementation of
non-blocking
node-oriented
BSTs from CAS.
Fraser \cite{Fra03} gave a non-blocking BST %implementation 
using 8-word CAS, but did not provide 
a full proof of correctness.  He also described how
%the
multi-word CAS can be implemented 
from single-word CAS instructions.
Ellen et al.~\cite{EFRB10:podc} gave a
provably correct, non-blocking
implementation of leaf-oriented
BSTs directly from single-word CAS.
A similar approach was used for $k$-ary search trees \cite{BH11}
and Patricia tries \cite{Shafiei}.
All three
used the cooperative technique originated by Turek, Shasha and 
Prakash~\cite{TSP92} and Barnes~\cite{Barnes}.
Howley and Jones \cite{HJ12} used a similar approach to build
%a
node-oriented BSTs.  They tested
their implementation using a model checker, but did not prove it correct.
Natarajan and Mittal \cite{NM14-incompletecitation} give another leaf-oriented
BST implementation, together with a sketch of
correctness.
%a correctness proof.
Instead of marking nodes, it marks edges.
This enables insertions to be accomplished by a single CAS, so 
%helping is not needed.
they do not need to be helped.
It also combines deletions that would otherwise conflict.
All of these trees are not balanced, so the height of a tree with $n$ keys can be $\Theta(n)$.

Tsay and Li \cite{TL94} gave a general approach for implementing trees in a wait-free manner
using LL and SC operations (which can, in turn be implemented from CAS, e.g., \cite{AM95}).  However,
their technique requires every process accessing the tree (even for read-only operations such
as searches) to copy an entire path
of the tree starting from the root.
Concurrency is severely limited, since every operation
must change the root pointer.
Moreover, an extra level of indirection is required
%in
for
every child pointer.

Red-black trees  \cite{Bay72,GS78} are
well known BSTs
%a classical type of BST
that have height $\Theta(\log n)$.
Some
%previous
attempts have been made to implement RBTs without using locks.
It was observed that the approach of Tsay and Li could be used to 
implement wait-free RBTs
\cite{NSM13}
%\cite{Sav09}
and, furthermore, this could be done so
that only updates must copy a path;
searches may simply read the path.
However, the concurrency of updates is still very limited.
Herlihy et al.~\cite{HLMS03:podc} and Fraser and Harris \cite{FH07:tocs} 
experimented on RBTs implemented using
software transactional memory (STM),
which only satisfied obstruction-freedom, a weaker progress property.
Each insertion or deletion, together with necessary rebalancing is enclosed in a single
large transaction, which can touch all nodes on a path from the root to a leaf.

Some researchers have attempted fine-grained approaches to
build non-blocking balanced search trees, but they all
use extremely complicated process coordination schemes.
Spiegel and Reynolds \cite{SR10} described a non-blocking 
data structure that 
combines elements of B-trees and skip lists.
Prior to this paper, it was the leading implementation of an ordered dictionary.
However, the authors provided only
a brief justification of correctness.
Braginsky and Petrank~\cite{BP12} described
a B+tree implementation.
Although they
have posted
a correctness proof, it is very long and
complex.

\medskip

In a balanced search tree, a process is typically responsible
for restoring balance after an insertion or deletion
by performing a series of rebalancing steps along the path
from the root to the location
where the insertion or deletion occurred.
Chromatic trees, introduced by Nurmi and Soisalon-Soininen \cite{NS96},
decouple the updates that perform the insertion or deletion from the updates that perform the rebalancing steps.
Rather than treating an insertion or deletion and its associated rebalancing steps as a single, large
%atomic
update, it is broken into smaller, localized updates that can be interleaved, allowing more concurrency.
This decoupling originated in the work
of Guibas and Sedgewick \cite{GS78} and Kung and Lehman \cite{KL80}.
We use the leaf-oriented chromatic trees by
Boyar, Fagerberg and Larsen \cite{Boyar97amortizationresults}.
They provide a family of local rebalancing steps which can be executed in any
order, interspersed with insertions and deletions.
Moreover, an amortized \textit{constant} number of rebalancing steps
per \ins\ or \del\ is sufficient to restore balance for any sequence of operations.
We have also used our template to implement a non-blocking version of Larsen's
leaf-oriented relaxed AVL tree \cite{Lar00}.
In such a tree, an amortized \textit{logarithmic} number of rebalancing steps
per \ins\ or \del\ is sufficient to restore balance.

There is also a node-oriented relaxed AVL tree by
Boug\'{e} et~al.~\cite{BGMS98}, in which
an amortized \textit{linear} number of rebalancing steps per \ins\ or \del\ is sufficient to restore balance.
Bronson et~al.~\cite{BCCO10:ppopp} developed a highly optimized fine-grained locking implementation
of this data structure
using optimistic concurrency techniques to improve search performance.
Deletion of a key stored in an internal node with two children is done
by simply marking the node and a later insertion of the same key can
reuse the node by removing the mark.
If all internal nodes are marked, the tree is essentially leaf-oriented.
Crain et~al.
gave a different implementation using lock-based STM
\cite{Crain:2012:SBS:2145816.2145837}
and locks \cite{Crain:2013},
in which \emph{all} deletions are done by marking the node containing the key.
Physical removal of nodes and rotations are performed
by one separate thread. Consequently, the tree can become very unbalanced.
Drachsler et~al.~\cite{DVY14-incompletecitation}
give another fine-grained lock-based implementation,
in which deletion physically removes the node containing the key
and searches are non-blocking.
Each node also contains predecessor and successor pointers,
so when a search ends at an incorrect leaf, sequential search can be performed
to find the correct leaf.
A non-blocking implementation of Boug\'{e}'s tree has not appeared,
but our template would make it easy to produce one.

\section{\llt, \sct\ and \vlt\ Primitives} \label{sec-primitives}

The
load-link extended (\llt), store-conditional extended (\sct) and validate-extended (\vlt)
primitives
are multi-word generalizations of the well-known load-link (LL), store-conditional (SC) and validate (VL) primitives
%that
and they
have been implemented from single-word \cas\ \cite{paper1}.
The benefit of using \llt, \sct\ and \vlt\ to implement our template is two-fold:
the template can be described quite simply, and much of the complexity
of its correctness proof is encapsulated in that of
\llt, \sct\ and \vlt.

Instead of operating on single words, \llt, \sct\ and \vlt\ operate on
%a collection of
\rec s, each of which consists of a fixed number of mutable fields (which can change), and a fixed number of immutable fields (which cannot). 
\llt$(r)$ attempts to take a snapshot of the mutable fields of a \rec\ $r$.
If it is concurrent with an \sct\ involving~$r$, it may return \fail, instead.
Individual fields of a \rec\ can also be read directly.
An \sct$(V,R,fld,new)$ takes as arguments a sequence $V$ of \rec s, a subsequence $R$ of $V$,
a pointer $fld$ to a mutable field of one \rec\ in~$V$, and a new value $new$ for that field.
The \sct\ tries to atomically store the value $new$ in the field that $fld$
points to and {\it finalize} each \rec\ in $R$.
Once a \rec\ is finalized, its mutable fields cannot be changed by any subsequent \sct, 
and any \llt\ of the \rec\ will return \finalized\ instead of a snapshot.

Before a process invokes \sct\ 
or \vlt($V$),
it must perform an \llt$(r)$ on each \rec\ $r$ in $V$.
The last such \llt\ by the process
is said to be {\it linked} to the \sct\ or \vlt, and the linked \llt\ must return a snapshot of $r$ (not \fail\ or \finalized).
An \sct($V, R, fld, new$) by a process modifies the data structure only if each \rec\ $r$ in
$V$ has not been changed since its linked \llt($r$); otherwise the \sct\ fails.
Similarly, a \vlt$(V)$ returns \true\ only if each \rec\ $r$ in $V$ has not been changed since its linked \llt($r$) by the same process; otherwise the \vlt\ fails.
\vlt\ can be used
%One use of \vlt\ is
to obtain a snapshot of a set of \rec s.
Although \llt, \sct\ and \vlt\ can fail, their failures are limited in such a way that we can use them to build non-blocking data structures.
See \cite{paper1} for a more formal specification of these primitives.

These new primitives were designed to balance ease of use and efficient 
implementability using single-word \cas.
The implementation of the primitives from CAS in \cite{paper1} is more efficient 
if the user of the primitives can guarantee that two constraints, which we describe next, are satisfied.
The first constraint prevents the ABA problem for the \cas\ steps that actually perform the updates.

{\bf Constraint 1}:
Each invocation of  \sct$(V, R, fld, new)$ tries to change %the field
$fld$ to a value
$new$ that it never previously contained.

The implementation of \sct\ does something akin to locking the elements of $V$ in the order they are given. 
Livelock can be easily avoided by requiring all $V$ sequences to be sorted according to some total order on \rec s.
However, this ordering is necessary only to guarantee that \sct s continue to succeed.
Therefore, as long as \sct s are still succeeding in an execution, it does not matter how $V$ sequences are ordered.
This observation leads to the following constraint, which is much weaker.

{\bf Constraint 2}:
Consider each execution that contains
a configuration $C$ after which the value of no
field of any \rec\ changes.
There
is
%must be
a total order
of all \rec s created during this execution such that,
for every \sct\ whose linked \llt s begin after $C$,
the $V$ sequence passed to the \sct\ is sorted according to the total order.

It is easy to satisfy these two constraints using standard approaches, e.g., by attaching a version number to each field, and sorting $V$ sequences by any total order, respectively.
However, we shall see that Constraints 1 and 2 are \textit{automatically}
satisfied in a natural way when \llt\ and \sct\ are used according
to our tree update template.

Under these constraints, the implementation of \llt,  \sct, and \vlt\ in \cite{paper1} guarantees that there is a linearization of all \sct s that modify the data structure (which may include \sct s that do not terminate because a process crashed, but \textit{not} any \sct s that fail), and all \llt s and \vlt s that return,
but do not fail.

We assume there is a \rec\ $entry$ which acts as the entry point to the data structure and is never deleted.
This \rec\ points to the root of a down-tree.
We represent an empty down-tree by a pointer to an empty \rec.
%An empty down-tree can be represented by a \nil\ pointer or an empty \rec.
A \rec\ is {\em in the tree} if it can be reached by following pointers
from $entry$.
A \rec\ $r$ is {\em removed from the tree} by an \sct\ if $r$
is in the tree immediately prior to the linearization point of the \sct\ and
is not in the tree immediately afterwards.
Data structures produced using our template \textit{automatically} satisfy one additional constraint:
\after{note: there is something strange with the definition of ``removed from the tree'' if multiple \sct s can be linearized at the same time.}

{\bf Constraint \conmarkallremovedrecs}: A \rec\ is finalized when (and only when) 
it is removed from the tree.

\noindent
Under this additional constraint, the implementation of \llt\ and \sct\ in \cite{paper1}
also guarantees the following three properties.
\begin{compactitem}
\item
If \llt$(r)$ returns a snapshot, %value different from \fail\ or \finalized,
then $r$ is in the tree
just before the \llt\ is linearized.
\item
If an \sct$(V,R,fld,new)$ is linearized and $new$ is (a pointer to) a \rec, then this \rec\ 
is in the tree
immediately after
%when
the \sct\ is linearized.
\item
If an operation reaches a \rec\ $r$ by following pointers read from other \rec s, starting from $entry$, then $r$ was in the tree at some
earlier time during the operation.
\end{compactitem}
\noindent These properties are useful for proving the correctness of our template.
In the following, we sometimes abuse notation by treating the
sequences $V$ and $R$ as sets, in which case
we mean the set of all \rec s in the sequence.

The memory overhead introduced by the implementation of \llt\ and \sct\ is fairly low.
Each node in the tree is augmented with a pointer to a descriptor and a bit.
Every node that has had one of its child pointers changed by an \sct\ points to a descriptor.
(Other nodes have a \nil\ pointer.)
A descriptor can be implemented to use only three machine words after the update it describes has finished.
The implementation of \llt\ and \sct\ in \cite{paper1} assumes garbage collection, and we do the same in this work.
This assumption can be eliminated by using, for example, the new efficient memory reclamation scheme of Aghazadeh et~al. \cite{AGW13}.

\section{Tree Update Template} \label{sec-dotreeupdate}

Our tree update template implements updates that
atomically replace an old connected subgraph in a down-tree by 
a new connected subgraph.
Such an 
update can implement any change to
the tree, such as an insertion into a BST or a 
rotation used to rebalance a RBT.
The old subgraph includes all nodes
with a field (including a child pointer) to be modified.
The new subgraph
may have
pointers to
nodes in the old tree.
Since every node in a down-tree has indegree one, the update can be
performed by changing
a single child pointer of some node $parent$.  (See Figure~\ref{fig-replace-subtree}.)
However, problems could arise if a concurrent operation changes 
the part of the tree being updated. 
For example, nodes in the old subgraph, or even $parent$, could be
removed from the tree before $parent$'s child pointer is changed.
Our template takes care of the process coordination required to
prevent such problems.
\begin{figure}[tb]
	\centering
	\input{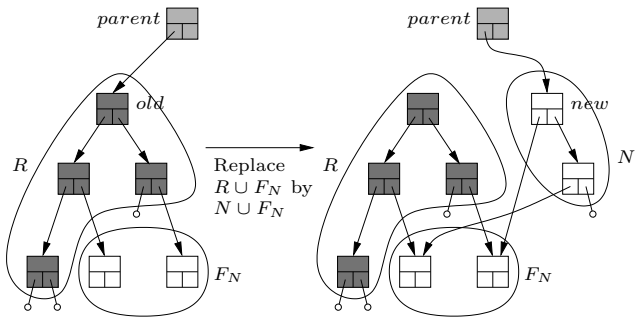}
	\caption{Example of the tree update template.
			$R$ is the set of nodes to be removed, % = \{r, r_1, r_2\}$,
			$N$ is a tree of new nodes that have never before appeared in the tree, and
			$F_N$ is the set of children of $N$ (and of $R$). %(which should not be removed), % = \{f_1, f_2, f_3\}$,
			%Pointers that are not shown are not affected by the update.  
			Nodes in $F_N$ may have children.  
			The shaded nodes (and possibly others) are in the sequence $V$  of the \sct\ that performs the update.
			The darkly shaded nodes are finalized by the \sct.
			}
	\label{fig-replace-subtree}
\end{figure}

\begin{figure}[tb]
	\centering
	\input{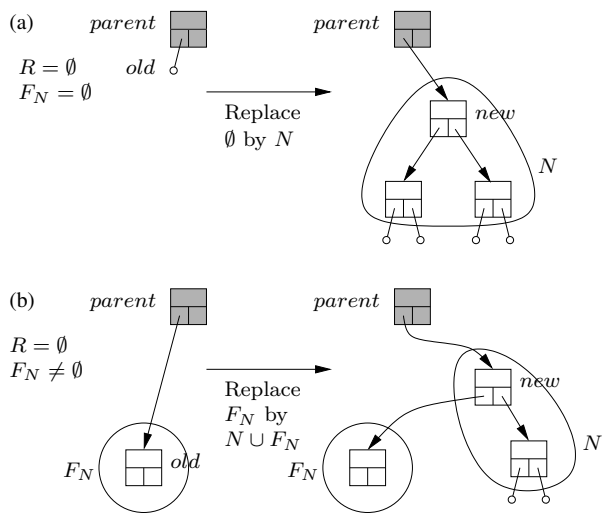}
	\caption{Examples of two special cases of the tree update template when no nodes are removed from the tree.  
	(a) Replacing a \nil\ child pointer:  In this case, $R=F_N=\emptyset$.  
	(b) Inserting new nodes in the middle of the tree:  In this case, $R=\emptyset$ and $F_N$ consists of a single node.}
	\label{fig-replace-subtree2}
\end{figure}

%We now describe the data structure, and the \dotreeupdate\ template for performing tree updates.
Each tree node is represented by a \rec\ with a fixed number of child pointers 
%$c_1, c_2, ..., c_y$ 
as its mutable fields (but different nodes may have different numbers of child fields).  
Each child pointer points to a \rec\ or contains \nil\ (denoted by $\multimap$ in our figures).
For simplicity, we assume that any other data in the node is stored 
in immutable fields. % of the \rec.
Thus, if an update must change some of this data, it makes a new copy of the node with
the updated data.

\begin{figure}[tbp]
\def\pwidth{4cm}
\prepnewlisting
%\hrule
\vspace{-2mm}
\begin{lstlisting}[mathescape=true]
    //\func{Template}$(args)$
      //follow zero or more pointers from $entry$ to reach a node $n_0$
      $i := 0$
      loop
        $s_i := \llt(n_i)$
        if $s_i \in \{\fail, \finalized\}$ then return $\fail$
        $s_i' :=$// immutable fields of $n_i$
        exit loop when $\func{Condition}(s_0, s_0', \ldots, s_i, s_i', args)$ // \hfill $\\ \com \func{Condition}$ must eventually return $\true$
        $n_{i+1} := \func{NextNode}(s_0, s_0', \ldots, s_i, s_i', args)$  // \hfill $\\ \com$ returns a non-\nil\ child pointer from one of $s_0, \ldots, s_i$
        $i := i+1$
      end loop
      if $\sct(\func{\sct-Arguments}(s_0, s_0', \ldots, s_i, s_i', args))$ then return $\func{Result}(s_0, s_0', \ldots, s_i, s_i', args)$
      else return $\fail$
\end{lstlisting}
%\vspace*{-3mm}
	\caption{Tree update template. \func{Condition}, \func{NextNode}, \func{SCX-Arguments} and \func{Result} can be filled in with any locally computable functions, provided that \func{SCX-Arguments} satisfies postconditions PC1 to PC\ref{con-V-sequences-ordered-consistently}.}
	\label{code-dotreeupdate}
\end{figure}

Our template for performing an update to the tree 
is fairly simple:
%Consider an algorithm $\mathcal{A}$ that implements $k$ different operations \func{UP$_1$}, \func{UP$_2$}, ..., \func{UP$_k$}.
%We describe what it means for an operation to \textit{follow} the template.
An update first performs \llt s on nodes in a contiguous portion of the tree,
including $parent$ and the set $R$ of nodes to be removed from the tree.  
Then, it performs an \sct\ that atomically
%swings
changes the child pointer 
as shown in Figure~\ref{fig-replace-subtree}
and finalizes nodes in $R$.  Figure~\ref{fig-replace-subtree2} shows two special cases where $R$ is empty.
An update that performs this sequence of steps is said to
\textit{follow} the template.

We now describe the tree update template in more detail.
An update \func{UP}$(args)$ that follows the template shown in Figure~\ref{code-dotreeupdate} takes any arguments, $args$, that are needed to perform the update.
\func{UP} first reads a sequence of child pointers starting from $entry$ to reach some node $n_0$.
Then, \func{UP} performs \llt s on a sequence $\sigma = \langle n_0, n_1, \ldots\rangle$ of nodes starting with $n_0$.
%
%A process $p$ first
%reads a sequence of child pointers starting from the $root$ to reach some node $top \neq \nil$.
%It then calls an operation \func{UP}$(top, args)$ that follows the template given in Figure~\ref{code-dotreeupdate}, which uses $top$ and any additional arguments, $args$, that are needed 
%to perform the operation.
%To ensure progress, the traversal of child pointers from the $root$ 
%to $top$
%must have been done after the last time $p$ invoked a tree update operation.
%%\eric{Trevor:  Preceding sentence may need fixing.  See comments preceding Lemma \ref{lem-dotreeup-only-one-finalized-llt}}
%%This is used when we prove an assumption made in \cite{paper1}, namely, that there is a bound on the 
%%number of times that any process will perform an $\llt(r)$ that returns \finalized.
%\func{UP} first performs \llt s on a sequence $\sigma = \langle node_0, node_1, %\ldots\rangle$ of nodes, starting with $top$.
For maximal flexibility of the template, 
the sequence  $\sigma$ can be constructed on-the-fly, as \llt s are performed.
Thus, \func{UP} chooses a  non-\nil\ child of one of the previous nodes to be the next node of $\sigma$
by performing some deterministic 
local computation (denoted by \func{NextNode} in Figure~\ref{code-dotreeupdate}) using any information
that is available locally, namely, the snapshots of mutable fields returned by \llt s on the previous elements of $\sigma$, values read from immutable fields of previous elements of $\sigma$, and $args$.
(This flexibility can be used, for example, to avoid unnecessary \llt s when deciding how to rebalance a BST.) %As an example of how this flexibility can be used, in a balanced BST, \func{UP} can decide exactly how it should rebalance the tree as it is performing \llt s.)
\func{UP} performs another local computation (denoted by \func{Condition} in Figure~\ref{code-dotreeupdate}) %using locally available information
to decide whether  more \llt s should be performed.
To avoid infinite loops, this function must eventually return \true\ in any execution of \func{UP}.
If any 
\llt\ in the sequence
%of the sequence of \llt s
returns \fail\ or \finalized, \func{UP} also
returns \fail, to indicate
that the attempted update has been aborted because of a concurrent
update on an overlapping portion of the tree.
If all of the \llt s successfully return snapshots, \func{UP} invokes \sct\
and returns a result calculated
locally by the \func{Result} function (or \fail\ if the \sct\ fails).

\func{UP} applies the function \func{\sct-Arguments} to use locally available information to construct the arguments $V$, $R$, $fld$ and $new$ for the \sct.
%We now specify constraints on the output of \func{Change}.
%Consider the invocation of \sct$(V, R, fld, new)$ that is performed by \func{UP}$(top, args)$, where 
%$V, R, fld,$ and $new$ are created by \func{Change}.
%(Because of this, what follows is essentially a sequence of constraints on \func{Change}.)
The postconditions that 
must be satisfied by \func{\sct-Arguments}
%\func{\sct-Arguments} must satisfy
are somewhat technical, but intuitively, they are
meant to ensure that the arguments produced describe an update
as shown in Figure~\ref{fig-replace-subtree} 
or 
%the special cases shown in
Figure~\ref{fig-replace-subtree2}.
% when $R=\emptyset$).
The update must 
%swings a single child pointer (stored in $fld$) of the $parent$ node to remove
remove
a connected set $R$ of nodes from the tree and replace it
by a connected set $N$ of newly-created nodes that is rooted at $new$ by
changing the
%swinging a single
child pointer stored in $fld$
%, which is a field of $parent$,
to point to $new$.
%
%(If a data structure has updates that delete the root, then \textit{root} should simply contain a single pointer to the \textit{real root} of the data structure.
%Then, one can delete the \textit{real root} by replacing it with a new, \nil-node that represents the empty tree.)
In order for this change to occur atomically, we include $R$ and the
node containing $fld$
%$parent$ node
in $V$.  This ensures
that if any of these nodes has changed since it was last accessed by
one of \func{UP}'s \llt s,
%\func{UP}'s \llt,
the \sct\ will fail.
The sequence $V$ may also include any other nodes in $\sigma$. 
%Note that $N \neq \emptyset$, since $new \in N$.

More formally, we require \func{SCX-Arguments} to satisfy nine postconditions.
The first three are basic requirements of \sct.
\begin{compactenum}[\hspace{3.4mm}\bfseries PC1:]
\item $V$ is a subsequence of $\sigma$.
\label{con-llt-on-all-nodes-in-V}
\item $R$ is a subsequence of $V$.
\label{con-R-subsequence-of-V}
\item The node $parent$ containing the mutable field $fld$ is in $V$.%
\label{con-parent-in-V}%
\end{compactenum}

\noindent
Let $G_N$
be the directed graph $(N \cup F_N,E_N)$,
where $E_N$ is the set of all child pointers of nodes in $N$
when they are initialized, and
%(at the time they are created) and
$F_N = \{ y : y\not\in N \mbox{ and }(x,y) \in E_N$ for some $x\in N$\}.
Let $old$ be the value read from $fld$ by the \llt\ on $parent$.
\begin{compactenum}[\hspace{4.1mm}\bfseries PC1:]
\setcounter{enumi}{3}
\item 
$G_{N}$ is a non-empty down-tree rooted at $new$.
\label{con-GN-non-empty-tree}
\item 
If $old=\nil$ then $R=\emptyset$ and $F_N = \emptyset$. 
\label{con-old-nil-then-R-empty}
\item 
If $R = \emptyset$ and $old \neq \nil$, then $F_N = \{ old \}$.
\label{con-fringe-of-GN-is-old}
\item
\func{UP} allocates memory for all nodes in $N,$ including $new$.%
\label{con-new-nodes}%
\end{compactenum}

\noindent
Postcondition PC\ref{con-new-nodes} requires $new$ to be a newly-created node, in order to satisfy Constraint 1.
There is no loss of generality in using this approach:
If we wish to change a child $y$ of node $x$ to \nil\ (to chop off the entire subtree rooted at $y$)
or to a descendant of $y$ (to splice out a portion of the tree),
then, instead, we can replace $x$ by a new copy
of $x$ with an updated child pointer.
%\trevor{Is this redundant after the above comment about deleting the last node in the tree?}
Likewise, if we want to delete the entire tree, then $entry$ can be changed to point to a new, empty \rec.

The next postcondition is used to guarantee Constraint 2, which is used to prove progress.
\begin{compactenum}[\hspace{4.1mm}\bfseries PC1:]
\setcounter{enumi}{7}
\item 
The sequences $V$ constructed by all updates %that follow the template
that take place entirely % between two consecutive modifications to the tree structure
during a period of time when no \sct s change the tree structure
must be ordered consistently according to a fixed tree traversal algorithm (for example, an in-order traversal or a breadth-first traversal).
\label{con-V-sequences-ordered-consistently}
\end{compactenum}

Stating the remaining postcondition formally requires some care, since the tree may be changing while \func{UP} performs its \llt s.
%To keep correctness proofs (very) simple, one should be able to show that these postconditions are satisfied \textit{without worrying about concurrent modifications}.
%Thus, we cannot express them in terms of the shared data structure (which can change).
%Instead, we state these postconditions in terms of the static view of the data structure that \func{UP} obtains from the \llt s it performs.
If $R \neq \emptyset$, let $G_R$ be the directed graph $(R \cup F_R,E_R)$,
where $E_R$ is the union of the sets of edges representing child pointers read from each $r \in R$
when it was last accessed by one of \func{UP}'s \llt s and
$F_R = \{ y : y\not\in R \mbox{ and }(x,y) \in E_R$ for some $x\in R$\}.
$G_R$ represents \func{UP}'s view of the nodes in $R$ according to its \llt s, and $F_R$ is the \textit{fringe} of $G_R$.
%If the tree does not change during \func{UP},
If other processes do not change the tree while \func{UP} is being performed,
then $F_R$ contains the nodes
that should remain in the tree, but
%in the tree
whose parents will be removed and replaced.
Therefore, we must ensure that the nodes in $F_R$ are reachable from nodes in $N$ (so they are not accidentally removed from the tree).
%Similarly,
Let $G_\sigma$ be the directed graph $(\sigma \cup F_\sigma,E_\sigma)$,
where $E_\sigma$ is the union of the sets of edges representing child pointers read from each $r \in \sigma$ when it was last accessed by one of \func{UP}'s \llt s and $F_\sigma = \{ y : y\not\in \sigma \mbox{ and }(x,y) \in E_\sigma$ for some $x\in \sigma$\}.
Since $G_\sigma$, $G_R$ and $G_N$ are not affected by concurrent updates, the following postcondition can be proved using purely sequential reasoning, ignoring the possibility that concurrent updates could modify the tree during \func{UP}.
\begin{compactenum}[\hspace{4.1mm}\bfseries PC1:]
\setcounter{enumi}{8}
\item 
If $G_\sigma$ is a down-tree and $R \neq \emptyset$,
then $G_{R}$ is a non-empty down-tree rooted at $old$
and $F_N = F_R$.
\label{con-R-non-empty-then-GR-a-non-empty-tree}
\end{compactenum}%

\subsection{Correctness and Progress}

For brevity, we only sketch the main ideas of the proof here.
The full proof appears in Appendix~\ref{app-tree-proof}.
%The full version of this paper, containing a complete proof, is available from 
%%in the full version of this paper from
%\url{http://www.cs.utoronto.ca/~tabrown}.
%Due to lack of space, we only sketch the main ideas of the proof here. %detailed proofs appear in Appendix \ref{app-tree-proof}.  We sketch the main ideas here.
%
Consider a data structure in which all updates %are performed
%We assume that \sct s on nodes are invoked only
% following 
follow the tree update template and \func{\sct-Arguments} satisfies postconditions PC1 
to PC\ref{con-R-non-empty-then-GR-a-non-empty-tree}.
We prove, by induction on the sequence of steps in an execution,
that the data structure is always a tree, each call to \llt\ and \sct\ satisfies its preconditions, Constraints 1 to 3 are satisfied, and each successful \sct\ 
atomically replaces a connected subgraph containing nodes $R \cup F_N$
with another connected subgraph containing nodes $N \cup F_N,$
finalizing and removing the nodes in $R$ from the tree, and
adding the new nodes in $N$ to the tree.
We also prove no node in the tree is finalized, every removed node is finalized, and removed nodes are never reinserted. % into the tree.

We linearize each update UP that follows the template and performs an \sct\ that modifies the data structure at the linearization point of its \sct.
%We linearize each operation that follows the template and successfully performs an \sct\ at the linearization point of its \sct.
%We prove linearizability by showing, 
We prove the following correctness properties. % for each such update UP.
\begin{compactenum}[\hspace{4.1mm}\bfseries C1:]
\setcounter{enumi}{0}
\item If UP were performed atomically at its linearization point, then it would perform \llt s on the same nodes, and these \llt s would return the same values.
\label{prop-corr-lin-same-llt}%
\end{compactenum}%
%We show that, for each such operation $O$,
%%that follows the template, 
%if $O$ were performed atomically at its linearization point, then it would perform \llt s on the same nodes, and these \llt s would return the same values.
This implies that UP's \func{SCX-Arguments} and \func{Result} computations must be the same as they would be if UP were performed atomically at its linearization point, so we obtain the following.
\begin{compactenum}[\hspace{4.1mm}\bfseries C1:]
\setcounter{enumi}{1}
\item If UP were performed atomically at its linearization point, then it would perform the same \sct\ (with the same arguments) and return the same value. % in the concurrent and linearized executions.
\label{prop-corr-lin-same-sct}%
\end{compactenum}%
Additionally, a property is proved in \cite{paper1} that allows some query operations to be performed very efficiently using only \func{read}s,
for example, \func{Get} in Section \ref{sec-chromatic}.
%\func{Search}
%.  (For example, searches are done this way in Section \ref{sec-chromatic}.)
\begin{compactenum}[\hspace{4.1mm}\bfseries C1:]
\setcounter{enumi}{2}
\item If a process $p$ follows child pointers starting from a node in the tree at time $t$ and reaches a node $r$ at time $t' \geq t$,
then $r$ {\it was} in the tree at some time between $t$ and $t'$.
Furthermore, if $p$ reads $v$ from a mutable field of $r$ at time
$t'' \ge t'$ then, at some time between $t$ and $t''$, node $r$ was in the tree and this field contained $v$.
\label{prop-corr-queries}%
\end{compactenum}%

The following properties, which come from \cite{paper1}, can be used to prove non-blocking progress of queries.

\begin{compactenum}[\hspace{4.1mm}\bfseries P1:]
\setcounter{enumi}{0}
\item If \llt s are performed infinitely often, then they 
return snapshots or \finalized\ infinitely often.
%succeed infinitely often.
\label{prop-progress-llx}%
\item If \vlt s are performed infinitely often, and \sct s are not performed infinitely often, then \vlt s  return \true\  infinitely often.
\label{prop-progress-vlx}%
\end{compactenum}%

\noindent Each update that follows the template is wait-free.
Since updates can fail, we also prove the following progress property. %

\begin{compactenum}[\hspace{4.1mm}\bfseries P1:]
\setcounter{enumi}{2}
\item If updates that follow the template are performed infinitely often, then updates succeed infinitely often.
\label{prop-progress}%
\end{compactenum}
\noindent
A successful update performs an \sct\ that modifies the tree.
Thus, it is necessary to show that \sct s succeed infinitely often.
Before an invocation of \sct$(V, R, fld, new)$ can succeed, it must perform an \llt$(r)$ that returns a snapshot, for each $r \in V$.
Even if P\ref{prop-progress-llx} is satisfied, it is possible for \llt s to always return \finalized, preventing any \sct s from being performed.
We prove that any algorithm whose updates follow the template automatically guarantees that, for each \rec\ $r$, each process performs at most one invocation of \llt$(r)$ that returns \finalized.
We use this fact to prove P\ref{prop-progress}.

\section{Application: %Non-blocking
Chromatic Trees} \label{sec-chromatic}

Here, we show how the tree update template can be used to implement an ordered dictionary ADT using
chromatic trees.
%Due to space restrictions, 
Due to space restrictions, we only sketch the algorithm and its correctness proof.
All details of the implementation and its correctness proof appear in Appendix~\ref{chromatic}. %are in the full version of the paper.
%Complete pseudocode and proofs appear in Appendix~\ref{chromatic}.
The ordered dictionary stores a set of keys, each with an associated value, 
where the keys are drawn from a totally ordered universe.
The dictionary supports five operations.
%the following operations. 
If $key$ is in the dictionary, \func{Get}$(key)$ returns its associated value.
Otherwise, \func{Get}$(key)$ returns $\bot$.
\func{Successor}$(key)$ returns the smallest key in the dictionary that is larger than $key$ (and its associated value), or $\bot$ if no key in the dictionary is larger than $key$.
\func{Predecessor}$(key)$ is analogous.
%queries can be handled in a symmetric way.
%returns the largest key in the dictionary that is smaller than $key$ (and its associated value), or $\bot$ if there is no such key. 
%\func{InsertIfAbsent}$(key, value)$, which associates $value$ with $key$ and returns \true\ if $key$ is not in the dictionary, and returns \false\ otherwise,
\ins$(key, value)$ replaces the value associated with $key$ by $value$ and returns the previously associated value, or $\bot$ if $key$ was not in the dictionary.
If the dictionary contains $key$, \del$(key)$ removes it and returns the value that was associated immediately beforehand.
%If there is an association between $key$ and any value,
%\del$(key)$ removes it and returns the value.
Otherwise, \del($key$) simply returns $\bot$.
%\del$(key)$ removes any association between $key$ and a value and returns this value (or $\bot$ if there was none).

A RBT is a BST in which the root and all leaves are coloured black, and every other node is coloured either red or black, subject to the constraints that no red node has a red parent, and the number of black nodes on a path from the root to a leaf is the same for all leaves.
These properties guarantee that the height of a RBT is logarithmic in the number of nodes it contains.
We consider search trees that are leaf-oriented, meaning the dictionary keys are stored in the leaves, and internal nodes store keys that are used only to direct searches towards the correct leaf.
In this context, the BST property says that, for each node $x$, all descendants of $x$'s left child have keys less than $x$'s key and all descendants of $x$'s right child have keys that are greater than {\it or equal to} $x$'s key.

\begin{figure}[tb]
%\begin{wrapfigure}{R}{0.28\textwidth}
%\begin{minipage}{0.49\textwidth}
%\vspace{-2mm}
\centering
\raisebox{10.9mm}{
\includegraphics[scale=.7]{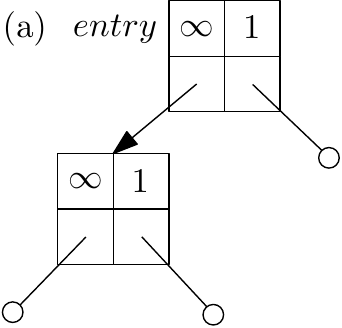}}\hspace*{3mm}
\includegraphics[scale=.7]{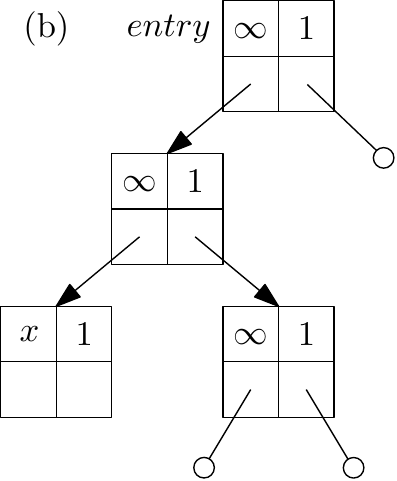}
%\end{minipage}
\caption{(a) empty tree, (b) non-empty tree.}% (weight on right).}
\label{fig-treetop}
%\end{wrapfigure}
\vspace{-2mm}
\end{figure}

To decouple rebalancing steps from insertions and deletions, so that each is localized, and rebalancing steps can be interleaved with insertions and deletions, it is necessary to relax the balance properties of RBTs.
A {\it chromatic tree} \cite{NS96} is a relaxed-balance RBT in which colours are replaced by non-negative integer weights, where weight zero corresponds to red and weight one corresponds to black.  As in RBTs, 
the sum of the weights on each path from the root to a leaf is the same.  However, 
%a chromatic tree  allows 
RBT properties can be violated in the following two ways.
First, a red child node may have a red parent, in which case we say that a
\textit{red-red violation} occurs at this child.
Second, a node may have weight $w>1$, in which case we say that
$w-1$ \textit{overweight violations} occur at this node.
%\eric{Is my change here correct: do we count $w-1$ violations for each a node of weight $w$?}
The root always has weight one,
%(Each operation that modifies the root blindly sets its weight to one.)
so no violation can occur at the root.

To avoid special cases when the chromatic tree is empty, we add sentinel nodes % with the special key $\infty$
at the top of the tree (see Figure \ref{fig-treetop}).
The sentinel nodes and $entry$ have key $\infty$ to avoid special cases for \search, \ins\ and \del, and weight one to avoid special cases for rebalancing steps.
Without having a special case for \ins, we automatically get the two sentinel nodes in Figure~\ref{fig-treetop}(b), which also eliminate special cases for \del.
The chromatic tree is rooted at the leftmost grandchild of $entry$.
The sum of weights is the same for all paths from the root of the chromatic tree to its leaves, but not for paths that include $entry$ or the sentinel nodes.
%The requirement that the sum of weights is the same for all paths applies only within the chromatic tree.
%In particular, this does not apply to $entry$ or the sentinel nodes.

{\it Rebalancing steps} are localized updates to a chromatic tree that are performed at the location of a violation.
Their goal is to eventually eliminate all red-red and overweight violations, while maintaining the invariant that the tree is a chromatic tree.
If no rebalancing step can be applied to a chromatic tree (or, equivalently, the 
chromatic tree contains no violations), then it is a RBT.
We use the set of rebalancing steps of Boyar, Fagerberg and 
Larsen~\cite{Boyar97amortizationresults},
%(shown in Figure \ref{fig-chromatic-rotations} of Appendix \ref{chromatic}),
which have a number of desirable properties:
No rebalancing step increases the number of violations in the tree,
%Each rebalancing step either {\it eliminates} one violation
%reduces the number of violations in the tree, or removes one violation, and creates another, closer to the root (where it must eventually be eliminated).
%or {\it moves} a violation closer to the root (i.e., removes one violation and creates another higher up the tree).
%Thus,
rebalancing steps can be performed in any order, and,
after sufficiently many rebalancing steps, the tree will always become a RBT.
Furthermore, in any sequence of insertions, deletions and rebalancing steps starting from an empty chromatic tree, the amortized number of rebalancing steps
%required per update is constant (specifically,
is at most three per insertion and one per deletion.
%(on average, at most three per \ins\ and one per \del).
%When a rebalancing step removes one violation, and creates another, we say the step \textit{moves} a violation.
%We say a rebalancing step \textit{eliminates} a violation if it removes a violation and does not create any.

\subsection{Implementation}

We represent each node by a \rec\ with two mutable child pointers, and immutable fields $k$, $v$ and $w$ that contain the node's key, associated value, and weight, respectively.
The child pointers of a leaf are always \nil, and the value field of an internal node is always \nil.

\func{Get}, \ins\ and \del\ each execute an auxiliary procedure, \func{Search}($key$), which starts at $entry$ and traverses nodes as in an ordinary BST
search, using \func{Read}s of child pointers until reaching a leaf,
which it then returns (along with the leaf's parent and grandparent).
Because of the sentinel nodes shown in Figure~\ref{fig-treetop}, the leaf's parent always exists, and the grandparent exists whenever the chromatic tree is non-empty.
If it is empty, \func{Search} returns \nil\ instead of the grandparent.
We define the \textit{search path} for $key$ at any time to be the path that \func{Search}($key$) would follow, if it were done instantaneously.
The \func{Get}($key$) operation simply executes a \func{Search}($key$) and then returns
the value found in the leaf if the leaf's key is $key$, or $\bot$ otherwise.

\begin{figure}
\prepnewlisting
\vspace{-5mm}
\hrule
\vspace{-2mm}
\begin{lstlisting}[mathescape=true]
    //\func{Get}$(key)$
      $\langle -, -, l \rangle := \func{Search}(key)$
      return $(key = l.k)\ ?\ l.v : \nil$// \\ \vspace{-2mm} \hrule \vspace{1mm} %
      
    //\func{Search}$(key)$ %
      %//\com Returns the parent and grandparent of the leaf found by doing a BST search for $key$
      
      $n_0 := \nil; n_1 := entry; n_2 := entry.\mbox{\textit{left}}$
      while $n_2$// is internal
        $n_0 := n_1; n_1 := n_2$
        $n_2 := (key < n_1.k)\ ?\ n_1.\mbox{\textit{left}} : n_1.\mbox{\textit{right}}$
      return $\langle n_0, n_1, n_2 \rangle$// \\ \vspace{-2mm} \hrule \vspace{1mm} %
    
    //\del$(key)$ %
      %//\com Deletes $key$ and returns its associated value, or returns $\bot$ if $key$ was not in the dictionary
      
      do
        $result := \trydel(key)$
      while $result = \fail$
      $\langle value, violation \rangle := result$
      if $violation$ then $\cleanup(key)$
      return $value$// \\ \vspace{-2mm} \hrule \vspace{1mm} %
    
    //\cleanup$(key)$
      //\com Eliminates the violation created by an \ins\ or \del\ of $key$
      while $\true$
        //\com Save four last nodes traversed
        $n_0 := \nil; n_1:=\nil; n_2:=entry; n_3 := entry.\mbox{\textit{left}}$ //\label{cleanup-start}
        while //\true 
          if $n_3.w > 1$ or ($n_2.w = 0$ and $n_3.w = 0$) then
            //\com Found a violation at node $n_3$ \label{find-violation}
            $\tryrebalance(n_0, n_1, n_2, n_3)$ //\hfill \com Try to fix it
            exit loop //\hfill \com Go back to $entry$ and search again\label{cleanup-end}
          else if $n_3\mbox{ is a leaf}$ then return//\label{cleanup-terminate}
            //\com Arrived at a leaf without finding a violation %\vspace{1.5mm}%
          
          if $key < n_3.k$ then
            $n_0 := n_1; n_1 := n_2; n_2 := n_3; n_3:=n_3.\mbox{\textit{left}}$ //\label{move-l-left}
          else $n_0 := n_1; n_1 := n_2; n_2 := n_3; n_3:=n_3.\mbox{\textit{right}}$ //\label{move-l-right} \vspace{-2mm}
\end{lstlisting}
	\caption{\func{Get}, \func{Search}, \func{Delete} and \func{Cleanup}.}
	\label{code-chromatic1}
\end{figure}

\begin{figure}
\prepnewlisting
\vspace{-5mm}
\hrule
\vspace{-2mm}
\begin{lstlisting}[mathescape=true]
    //\trydel$(key)$
      //\com If successful, returns $\langle value, violation \rangle$, where $value$ is the value associated with $key$, or $\nil$ if $key$ was not in the dictionary, and $violation$ indicates whether the deletion created a violation.  Otherwise, \fail\ is returned.%
      
      $\langle n_0, -, - \rangle := \func{Search}(key)$//\label{del-search-line} %\vspace{1.5mm}%
      
      //\com Special case: there is no grandparent of the leaf reached
      if $n_0 = \nil$ then return $\langle \nil, \false \rangle$//\label{del-no-gp}%\vspace{1.5mm}%
      
      //\com Template iteration 0 (grandparent of leaf)
      $s_0 := \llt(n_0)$
      if $s_0 \in \{\fail, \finalized\}$ then return $\fail$
      $n_1 := (key < s_0.\mbox{\textit{left}}.k)\ ?\ s_0.\mbox{\textit{left}} : s_0.\mbox{\textit{right}}$//%\vspace{1.5mm}%

      //\com Template iteration 1 (parent of leaf)
      $s_1 := \llt(n_1)$
      if $s_1 \in \{\fail, \finalized\}$ then return $\fail$
      $n_2 := (key < s_1.\mbox{\textit{left}}.k)\ ?\ s_1.\mbox{\textit{left}} : s_1.\mbox{\textit{right}}$//%\vspace{1.5mm}%
      
      //\com Special case: $key$ is not in the dictionary
      if $n_2.k \neq key$ then return $\langle \bot, \false \rangle$ //\label{del-notin}%\vspace{1.5mm}%

      //\com Template iteration 2 (leaf)
      $s_2 := \llt(n_2)$
      if $s_2 \in \{\fail, \finalized\}$ then return $\fail$
      $n_3 := (key < s_1.\mbox{\textit{left}}.k)\ ?\ s_1.\mbox{\textit{right}} : s_1.\mbox{\textit{left}}$//\label{del-getsibling}%\vspace{2mm}%
      
      //\com Template iteration 3 (sibling of leaf)
      $s_3 := \llt(n_3)$
      if $s_3 \in \{\fail, \finalized\}$ then return $\fail$//%\vspace{2mm}%
    
      //\com Computing \func{\sct-Arguments} from locally stored values
      $w := (n_1.k = \infty$ or $n_0.k = \infty)\ ?\ 1 : n_1.w + n_3.w$//\label{del-weight}
      $new :=$// new node with weight $w$, key $n_3.k$, value $n_3.v$, and children $s_3.\mbox{\textit{left}}, s_3.\mbox{\textit{right}}$\label{del-create-new}
      $V := (key < s_1.\mbox{\textit{left}}.k)\ ?\ \langle n_0, n_1, n_2, n_3 \rangle : \langle n_0, n_1, n_3, n_2 \rangle$
      $R := (key < s_1.\mbox{\textit{left}}.k)\ ?\ \langle n_1, n_2, n_3 \rangle : \langle n_1, n_3, n_2 \rangle$
      $fld := (key < s_0.\mbox{\textit{left}}.k)\ ?\ \&n_0.\mbox{\textit{left}} : \&n_0.\mbox{\textit{right}}$//%\vspace{1.5mm}%
      
      if $\sct(V, R, fld, new)$ then return $\langle n_2.v, (w > 1) \rangle$
      else return $\fail$// \vspace{-2mm}
\end{lstlisting}
	\caption{\func{TryDelete}.}
	\label{code-chromatic2}
\end{figure}

At a high level, \ins\ and \del\ are quite similar to each other.
\ins($key$, $value$) and \del($key$) each perform \func{Search}($key$) 
and then make the required update at the leaf reached,
in accordance with the tree update template.
If the modification fails, then the operation restarts from scratch. 
If it succeeds, it may increase the number of violations in the tree by one, 
and the new violation occurs on the search path to $key$.
If a new violation is created, then an auxiliary procedure \cleanup\ is invoked to fix it before the \ins\ or \del\ returns.

Detailed pseudocode for \func{Get}, \func{Search}, \del\ and \func{Cleanup} is given in Figure~\ref{code-chromatic1} and~\ref{code-chromatic2}.
(The implementation of \ins\ is similar to that of \del, and its pseudocode is omitted due to lack of space.)
Note that an expression of the form $P\ ?\ A : B$ evaluates to $A$ if the predicate $P$ evaluates to true, and $B$ otherwise.
The expression $x.y$, where $x$ is a \rec, denotes field $y$ of $x$, and the expression $\&x.y$ represents a pointer to field $y$.

\del($key$) invokes \trydel\ to search for a leaf containing $key$ and perform the localized update that actually deletes $key$ and its associated value.
The effect of \trydel\ is illustrated in Figure~\ref{fig-dotreeupdate-dorb2}.
There, nodes drawn as squares are leaves, shaded nodes are in $V$, $\otimes$ denotes a node in $R$ to be finalized, and $\oplus$ denotes a new node.
The name of a node appears below it or to its left.
The weight of a node appears to its right.

\trydel\ first invokes \func{Search}$(key)$ to find the grandparent, $n_0$, of the leaf on the search path to $key$.
If the grandparent does not exist, then the tree is empty (and it looks like Figure~\ref{fig-treetop}(a)), so \trydel\ returns successfully at line~\ref{del-no-gp}. %, indicating that $key$ was not in the dictionary and that \trydel\ did not create a violation.
\trydel\ then performs \llt$(n_0)$, and uses the result to obtain a pointer to the parent, $n_1$, of the leaf to be deleted.
Next, it performs \llt$(n_1)$, and uses the result to obtain a pointer to the leaf, $n_2$, to be deleted.
% and $l$, double checking that $l$ is a child of $p$, which is a child of $gp$.
%This double checking, or validation, is necessary because, for example, the tree might change between when \func{Search} followed a pointer from $gp$ to $p$, and when the \llt$(gp)$ was performed.
%\trydel\ then obtains a pointer to the sibling, $s$, of $l$ from its \llt$(p)$, and performs $\llt(s)$.
%(Here, no validation is needed, since $s$ was obtained from the result of \llt.)
If $n_2$ does not contain $key$, then the tree does not contain $key$, and \trydel\ returns successfully at line~\ref{del-notin}.
So, suppose that $n_2$ does contain $key$.
Then \trydel\ performs \llt$(n_2)$.
At line~\ref{del-getsibling}, \trydel\ uses the result of its previous \llt$(n_1)$ to obtain a pointer to the sibling, $n_3$, of the leaf to be deleted.
A final \llt\ is then performed on $n_3$.
Over the next few lines, \trydel\ computes \func{\sct-Arguments}.
Line~\ref{del-weight} computes the weight of the node $new$ in the depiction of \del\ in Figure~\ref{fig-treetop}, ensuring that it has weight one if it is taking the place of a sentinel or the root of the chromatic tree.
Line~\ref{del-create-new} creates $new$, reading the key, and value directly from $n_3$ (since they are immutable) and the child pointers from the result of the \llt$(n_3)$ (since they are mutable).
Next, \trydel\ uses locally stored values to construct the sequences $V$ and $R$ that it will use for its \sct, ordering their elements according to a breadth-first traversal, in order to satisfy PC\ref{con-V-sequences-ordered-consistently}.
Finally, \trydel\ invokes \sct\ to perform the modification.
If the \sct\ succeeds, then \trydel\ returns a pair containing the value stored in node $n_2$ (which is immutable) and the result of evaluating the expression $w > 1$.

\del\ can create an overweight violation (but not a red-red violation), so the result of $w > 1$ indicates whether \trydel\ created a violation.
If any \llt\ returns \fail\ or \finalized, or the \sct\ fails, \trydel\ simply returns \fail, and \del\ invokes \trydel\ again.
If \trydel\ creates a new violation, then \del\ invokes \cleanup$(key)$ (described in Section~\ref{sec-chromatic-rebalancing-alg}) to fix it before \del\ returns.

\after{The template says that we read immutable fields of a \rec\ after  we read its mutable fields, but here we are reading them in the opposite order.
Let's try to fix this for the camera ready copy.}

A simple inspection of the pseudocode suffices to prove that \func{\sct-Arguments} satisfies postconditions PC1 to PC\ref{con-R-non-empty-then-GR-a-non-empty-tree}.
\trydel\ follows the template except when it returns at line~\ref{del-no-gp} or line~\ref{del-notin}.
In these cases, not following the template does not impede our efforts to prove correctness or progress, since \trydel\ will not modify the data structure, and returning at either of these lines will cause \del\ to terminate.

We now describe how rebalancing steps are implemented from \llt\ and \sct, using the tree update template.
As an example, we consider one of the 22 rebalancing steps, named \func{RB2} (shown in Figure~\ref{fig-dotreeupdate-dorb2}), which eliminates a red-red violation at node $n_3$.
The other 21 are implemented similarly.
To implement \func{RB2}, a sequence of \llt s are performed,
starting with node $n_0$. A pointer to node $n_1$ is obtained from
the result of \llt($n_0$), pointers to nodes $n_2$ and $f_3$ are
obtained from the result of \llt($n_1$), and a pointer to node $n_3$
is obtained from the result of \llt($n_2$).
Since node $n_3$ is to be removed from the tree, an \llt\ is performed
on it, too.
If any of these \llt s returns \fail\ or \finalized, then 
this update fails.
For \func{RB2}
to be applicable, $n_1$ and $f_3$ must have positive weights
and $n_2$ and $n_3$ must both have weight 0.
%It is also necessary to check that $n_1.w > 0$, $f_3.w > 0$, $n_2.w = 0$ and $n_3.w = 0$.
Since weight fields are immutable, they can be read
any time after the pointers to $n_1$, $f_3$, $n_2$, and $n_3$
have been obtained.
Next, $new$ and its two children are created.
$N$ consists of these three nodes.
Finally, \sct$(V, R, fld, new)$ is invoked,
where $fld$ is the child pointer of $n_0$
that pointed to $n_1$ in the result of \llt($n_0$).
%Since this field is mutable, Condition~C\ref{con-parent-in-V}
%is satisfied.

\begin{figure}[tb]
\vspace{-1mm}
\centering
\includegraphics[scale=1]{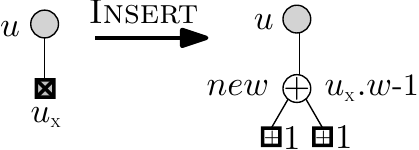}%\hfill
%\raisebox{3mm}{\includegraphics[scale=1]{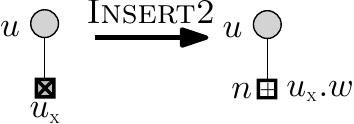}}

\vspace{2mm}
\includegraphics[scale=1]{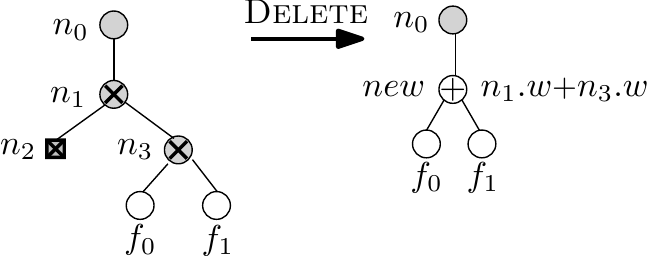}

\vspace{2mm}
%\begin{wrapfigure}{R}{0.4\textwidth}
%\begin{minipage}{0.38\textwidth}
%\vspace{-1.5cm}
\includegraphics[scale=1]{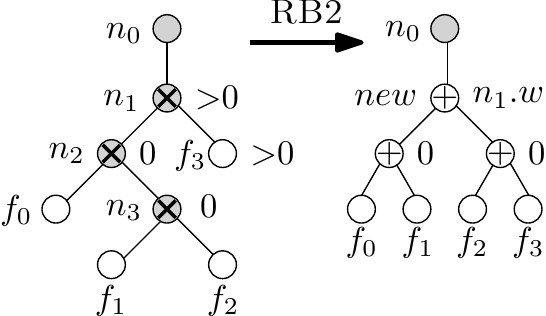}
%\end{minipage}
%\begin{minipage}{0.62\textwidth}
%\preplisting
%\hrule
%\vspace{-2mm}
%\begin{lstlisting}[mathescape=true]
%    //$\func{DoRB2}(u,n_1,n_2,f_3,f_0,n_3,f_1,f_2)$
%      //Construct the tree rooted at $n$ according to the diagram
%      if $n_1 = n_0l$ then $ptr := \&u.left$ else $ptr := \&u.right$
%      $\sct(\langle u,n_1,n_1x,n_3 \rangle, \langle n_1,n_2,n_3 \rangle, ptr, n)$ // \vspace{2mm} \hrule %
%\end{lstlisting}
%\end{minipage}
	\caption{Examples of chromatic tree updates.
%	\del\ removes a key and its value.
%	\ins2 associates a new value with an existing key.
%	\func{RB2} removes a red-red violation at node $n_3$.
}
	\label{fig-dotreeupdate-dorb2}
%\end{wrapfigure}
\vspace{-2mm}
\end{figure}

If the \sct\ modifies the tree, then no node $r \in V$ has changed since the
update performed \llt$(r)$.
In this case, the \sct\ replaces the directed graph $G_R$
by the directed graph $G_N$
and the nodes in $R$ are finalized. This ensures that
other updates cannot erroneously modify these old nodes after
they have been replaced.
The nodes in the set $F_R = F_N = \{f_0, f_1, f_2, f_3\}$
each have the same keys, weights, and child pointers before and after the
rebalancing step, so they can be reused.
$V = \langle n_0, n_1, n_2, n_3 \rangle$ is
simply the sequence of nodes %$s_0, s_1, \ldots$ 
on which \llt\ is performed, and $R = \langle n_1, n_2, n_3 \rangle$ is a subsequence 
of $V$, so PC\ref{con-llt-on-all-nodes-in-V}, PC\ref{con-R-subsequence-of-V} and~PC\ref{con-parent-in-V} are satisfied.
Clearly, we satisfy PC\ref{con-GN-non-empty-tree} and~PC\ref{con-new-nodes} when we create $new$ and its two children.
It is easy to verify that PC\ref{con-old-nil-then-R-empty}, PC\ref{con-fringe-of-GN-is-old} and~PC\ref{con-R-non-empty-then-GR-a-non-empty-tree} are satisfied.
If the tree does not change during the update, then the nodes in
$V$ are ordered consistently with a breadth-first traversal of the tree.
Since this is true for all updates,
PC\ref{con-V-sequences-ordered-consistently} is satisfied.
%The implementation of the other rebalancing steps is similar.

One might wonder why $f_3$ is not %included 
in $V$, since RB2 should be applied only if $n_1$ has a right child with positive weight.
Since weight fields are immutable, the only way that 
this can change after we check $f_3.w > 0$ is if
the right child field of $n_1$ is altered.
If this happens, the \sct\ will fail.
%However, if weights were mutable, then we would have 
%to include $f_3$ in $V$ to ensure that the \sct\ only succeeds
%if $f_3.w$ does not change.

\subsection{The rebalancing algorithm} \label{sec-chromatic-rebalancing-alg}

Since rebalancing is decoupled from updating,
there must be
%we must design
a scheme that determines 
when processes should perform rebalancing steps to eliminate violations.
In \cite{Boyar97amortizationresults}, the authors suggest maintaining one or more \textit{problem queues} which contain pointers to nodes that contain violations, and dedicating one or more \textit{rebalancing processes} to simply perform rebalancing steps as quickly as possible.
This approach does not yield a bound on the height of the tree, since rebalancing may lag behind insertions and deletions.
It is possible to obtain a height bound with a different queue based scheme, but we present a way to bound the tree's height without the (significant) overhead of maintaining any auxiliary data structures.
%Note that this effort is orthogonal to the task of applying our template to implement chromatic trees.
The linchpin of our method is the following 
claim concerning violations.
%invariant.
%\begin{compactenum}[\hspace{6mm}\bfseries {I}1:]
%\begin{compactenum}[\hspace{3mm}\bfseries {INV}:]
\begin{compactenum}[\hspace{3mm}\bfseries {VIOL}:]
%\begin{compactenum}[\hspace{3mm}\bfseries {}]
 \item If a violation is on the search path to $key$ before a rebalancing step, then the violation is still on the search path to $key$ after the rebalancing step, or it has been eliminated.
\end{compactenum}
%\label{inv-moving-viol}
%\end{compactenum}
While studying the rebalancing steps in \cite{Boyar97amortizationresults}, we realized that most of them
satisfy VIOL.
%satisfy this claim.
%maintain this invariant.
Furthermore, any time a rebalancing step would violate
VIOL
%the claim
%INV,
another rebalancing step that
%maintains INV
satisfies VIOL
%satisfies the claim
can be applied instead.
%While studying these rebalancing steps, we realized that any time an applicable rebalancing step would violate \ref{inv-moving-viol}, another rebalancing step that maintains this invariant can also be applied.
Hence, we always choose to perform rebalancing so that each violation created by an \ins$(key)$ or \del$(key)$ stays on the search path to $key$ until it is eliminated.
In our implementation, each 
\func{Insert} or \func{Delete}
%dictionary update
that increases the number of violations cleans up after itself. %\trevor{update changed to ``dictionary update''}.
It does this by invoking a procedure \func{Cleanup}$(key)$, which behaves like \func{Search}($key$) until it finds the first node $n_3$ on the search path where a violation occurs.
Then, \func{Cleanup}$(key)$ attempts to eliminate or move the violation at $n_3$ by %calling another procedure \tryrebalance, which applies one localized rebalancing step at $n_3$, following the tree update template.
invoking another procedure \tryrebalance\, which applies one localized rebalancing step at $n_3$, following the tree update template.
(\tryrebalance\ is similar to \del, and pseudocode is omitted, due to lack of space.)
\func{Cleanup}$(key)$ repeats these actions, searching for $key$ and invoking \tryrebalance\ to perform a rebalancing step, until the search goes all the way to a leaf without finding a violation.

\edited
In order to prove that each \ins\ or \del\ cleans up after itself, we must prove that while an invocation of \func{Cleanup}$(key)$ searches for $key$ by reading child pointers, it does not somehow miss the violation it is responsible for eliminating, even if a concurrent rebalancing step moves the violation upward in the tree, above where \func{Cleanup} is currently searching.
To see why this is true,
consider any rebalancing step that occurs while \cleanup\ is searching.
The rebalancing step is implemented using the tree update template, and looks like Figure~\ref{fig-replace-subtree}.
It takes effect at the point it changes
a child pointer $fld$ of some node $parent$ 
from a node $old$ to a node $new$.
If \func{Cleanup} reads $fld$ while searching, % for a violation
we argue that it
does not matter whether $fld$ contains $old$ or $new$.
First, suppose the violation is at a
node that is removed from the tree 
by the rebalancing step, or a child of such a node.
If the search passes through $old$, it will definitely reach the violation,
since nodes do not change after they are removed from the tree.
If the search passes through $new$,
%the claim about violations
VIOL
%INV
implies that the rebalancing step
either eliminated the violation, or moved it to a new node that will still be on the search path through $new$.
Finally, if the violation is further down in the tree, below the section modified by the concurrent rebalancing step, a search through either $old$ or $new$ will 
reach it.

Showing that \tryrebalance\ follows the template (i.e., by defining the procedures in Figure~\ref{code-dotreeupdate}) is complicated by the fact that it must decide which of the chromatic tree's 22 rebalancing steps to perform.
It is more convenient to unroll the loop that performs \llt s, and write \tryrebalance\ using conditional statements.
A helpful technique is to consider each path through the conditional statements in the code, and check that the procedures \func{Condition}, \func{NextNode}, \func{\sct-Arguments} and \func{Result} can be defined to produce this single path.
It is sufficient to show that this can be done for each path through the code, since it is always possible to use conditional statements to combine the procedures for each path into procedures that handle all paths.

\subsection{Proving a bound on the height of the tree} \label{sec-height}

\edited
%Each \ins\ or \del\ creates at most one violation.
Since we always perform rebalancing steps that 
%satisfy the claim about violations,
satisfy VIOL,
%maintain INV,
if we reach a leaf without finding the violation that an \ins\ or \del\ created, then the violation has been eliminated.
This allows us to prove that the number of violations
in the tree at any time is bounded above by $c$, the number of insertions and deletions that are currently in progress.
Further, since removing all violations would yield a red-black tree with height $O(\log n)$, and eliminating each violation reduces the height by at most one,
%we prove that
the height of the chromatic tree is $O(c + \log n)$.

\subsection{Correctness and Progress}

As mentioned above, \func{Get}($key$) invokes \func{Search}$(key)$, which traverses a path from $entry$ to a leaf by reading child pointers.
Even though this search can pass through nodes that have been removed by concurrent updates,
we prove by induction that every node visited {\it was} on the search path for $key$ at some time during the search.
\func{Get} %($key$)
can thus be linearized when the leaf it reaches is on the search path for $key$ (and, hence, this leaf is the only one in the tree that could contain $key$).

Every \del\ operation that performs an update, and every \ins\ operation, is linearized at the \sct\ that performs the update.
Other \del\ operations (that return at line~\ref{del-no-gp} or \ref{del-notin}) behave like queries, and are linearized in the same way as \func{Get}.
Because no rebalancing step
modifies the set of keys stored in leaves,
the set of leaves always represents the set of dictionary entries.

The fact that our chromatic tree is non-blocking follows from P1 %the progress properties we prove for the tree update template 
and the fact that at most $3i+d$ rebalancing steps can be performed after $i$ insertions and $d$ deletions have occurred (proved in \cite{Boyar97amortizationresults}).

\subsection{\func{Successor} queries} \label{sec-chromatic-succ}

\func{Successor}($key$) runs an ordinary BST search algorithm, using \llt s to read the child fields of each node visited, until it reaches a leaf.
%begins by using \llt s to search for $key$ until
%it fails or accesses a finalized node (in which case it restarts) or
%reaching a leaf.
If the key of this leaf
is larger than $key$, it is returned and the operation is
linearized at any time during the operation
when this leaf was on the search path for $key$.
Otherwise, \func{Successor} finds the next leaf.
To do this, it remembers the last time it followed a left child pointer and, instead, follows one right child pointer, and then left child pointers until it reaches a leaf, using \llt s to read the child fields of each node visited.
If any \llt\ it performs
returns \fail\ or \finalized, \func{Successor} restarts.
Otherwise, it performs a validate-extended (\vlt), which returns \true\ only if all nodes on the path connecting the two leaves have
not changed.  If the \vlt\ succeeds, the key of the second leaf found is returned and
the query is linearized at the linearization point of the \vlt.
If the \vlt\ fails, \func{Successor} restarts.

\subsection{Allowing more violations}

Forcing insertions and deletions to rebalance the chromatic tree after creating only a single violation can cause unnecessary rebalancing steps to be performed, for example, because an overweight violation created by a deletion might be eliminated by a subsequent insertion.
In practice, we can reduce the total number of rebalancing steps that occur by modifying our \ins\ and \del\ procedures so that \cleanup\ is invoked only once the number of violations on a path from $entry$ to a leaf exceeds some constant $k$.
The resulting data structure has height $O(k + c + \log n)$.
Since searches in the chromatic tree are extremely fast, slightly increasing search costs to reduce update costs can yield significant benefits for update-heavy workloads.
%Our experiments show that this technique yields moderate performance improvements for workloads that include updates.

\section{Experimental Results} \label{sec-exp}

\begin{figure*}[tb]
\def\darkness{45}
\def\expscale{0.75}
\def\expleftwidth{0.03\textwidth}
\centering
\hspace{-1cm}
\begin{minipage}{\expleftwidth}
%\hspace{3mm}
%\vspace{3mm}
\end{minipage}
\begin{minipage}{0.333\textwidth}
\small
\centering
\hspace{6mm}
\textbf{50\% \func{Ins}, 50\% \func{Del}, 0\% \func{Get}}
\end{minipage}
\begin{minipage}{0.333\textwidth}
\small
\centering
\hspace{4.5mm}
\textbf{20\% \func{Ins}, 10\% \func{Del}, 70\% \func{Get}}
\end{minipage}
\begin{minipage}{0.333\textwidth}
\small
\centering
\hspace{6mm}
\textbf{0\% \func{Ins}, 0\% \func{Del}, 100\% \func{Get}}
\end{minipage}\\
\hspace{-1cm}
\begin{minipage}{\expleftwidth}
\centering
\vspace{-2mm}
\includegraphics[scale=\expscale]{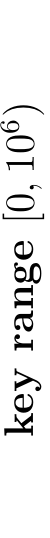}
\vspace{2mm}

\vspace{-3mm}
\includegraphics[scale=\expscale]{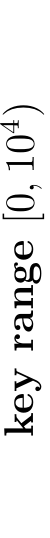}
\vspace{3mm}

\vspace{-6mm}
\includegraphics[scale=\expscale]{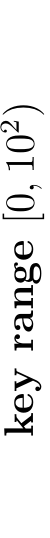}
\vspace{6mm}
\end{minipage}
\begin{minipage}{0.333\textwidth}
\centering
\includegraphics[scale=\expscale]{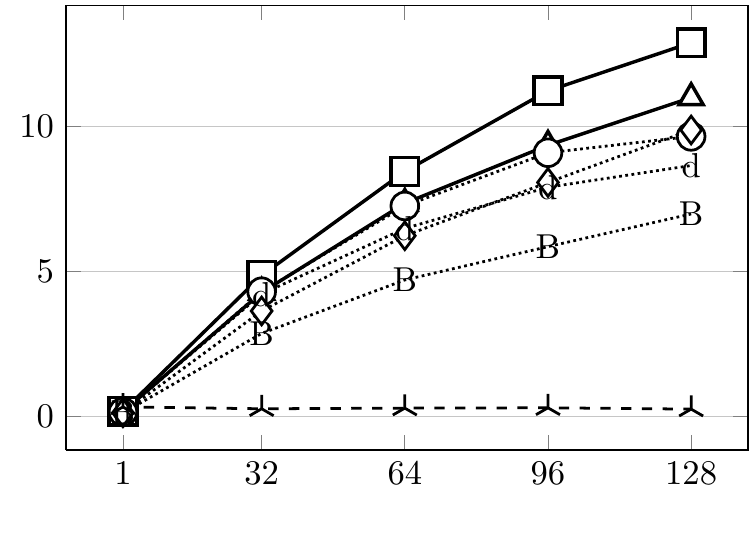}

\vspace{-2mm}
\includegraphics[scale=\expscale]{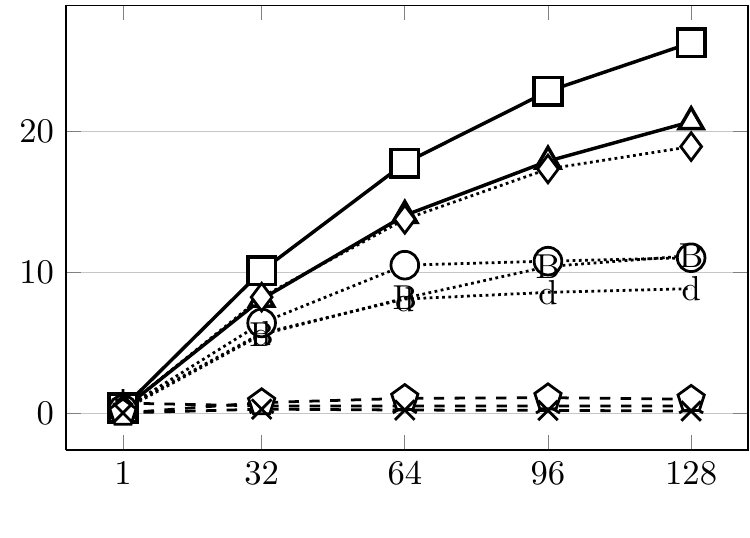}

\vspace{-2mm}
\includegraphics[scale=\expscale]{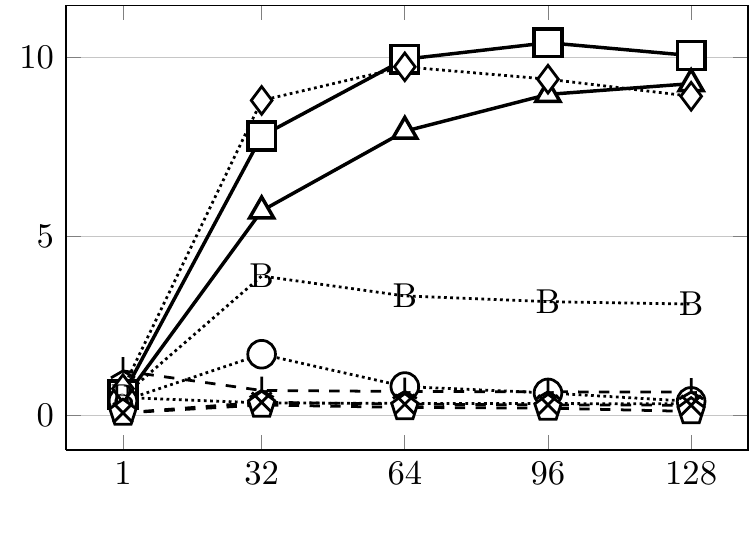}
\end{minipage}
\begin{minipage}{0.333\textwidth}
\centering
\vspace{-1.45mm}
\includegraphics[scale=\expscale]{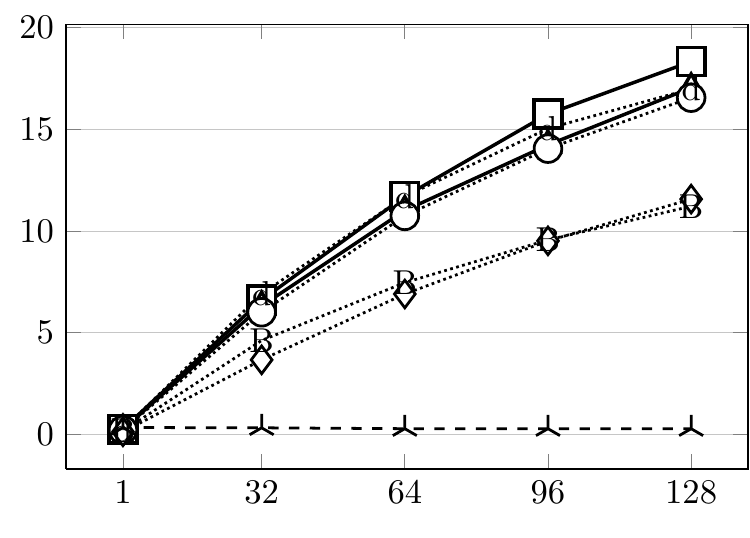}
\vspace{1.45mm}

\vspace{-6mm}
\includegraphics[scale=\expscale]{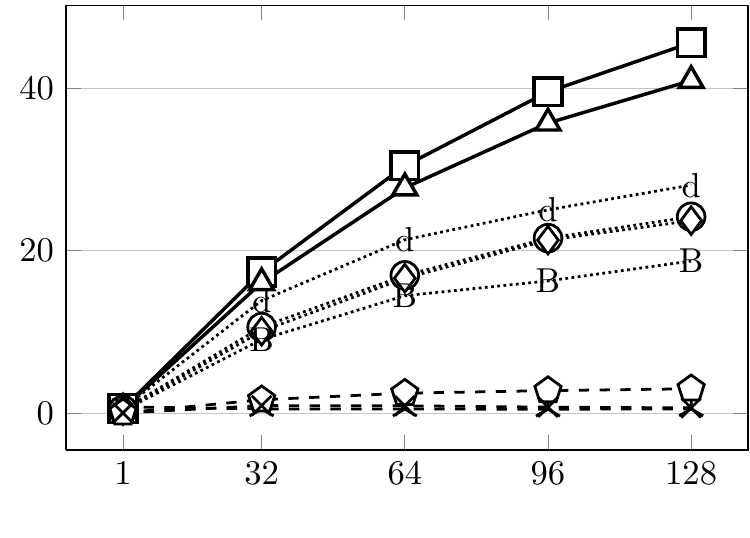}

\vspace{-2mm}
\includegraphics[scale=\expscale]{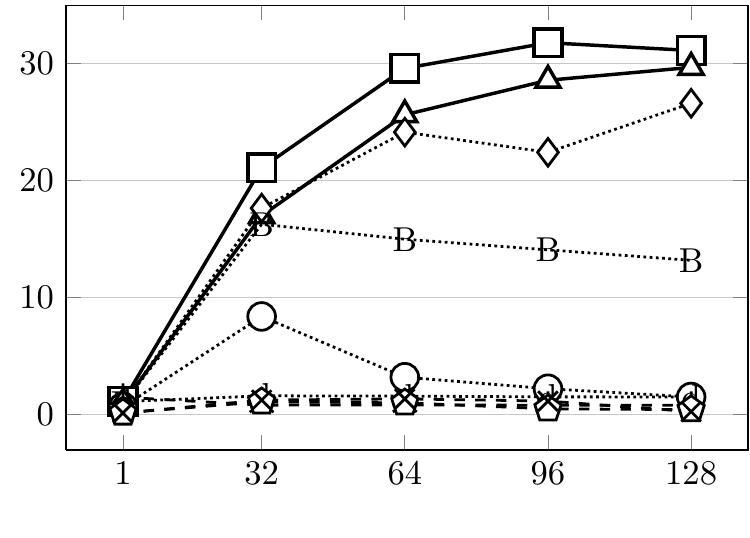}
\end{minipage}
\begin{minipage}{0.333\textwidth}
\centering
\vspace{1mm}
\includegraphics[scale=\expscale]{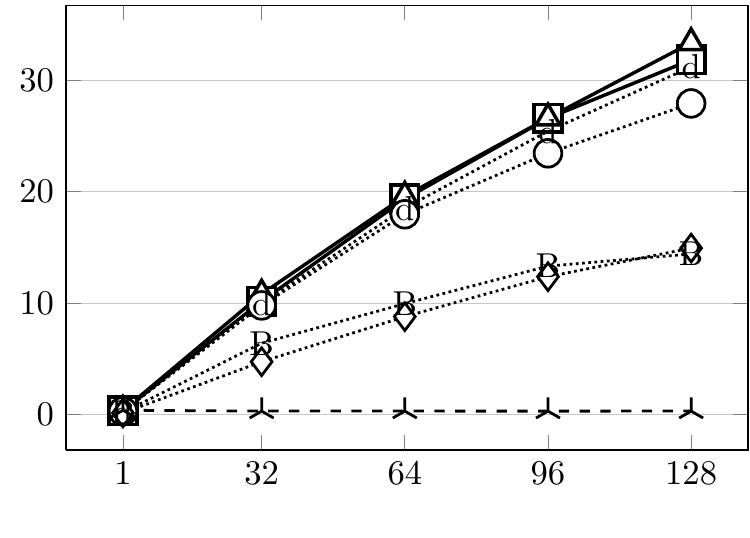}

\vspace{-2mm}
\vspace{-0.88mm}
\includegraphics[scale=\expscale]{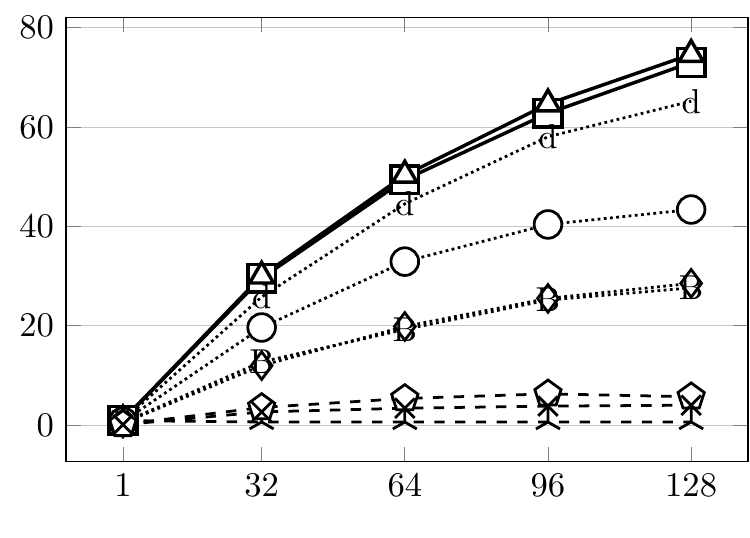}
\vspace{0.88mm}

\vspace{-6mm}
\includegraphics[scale=\expscale]{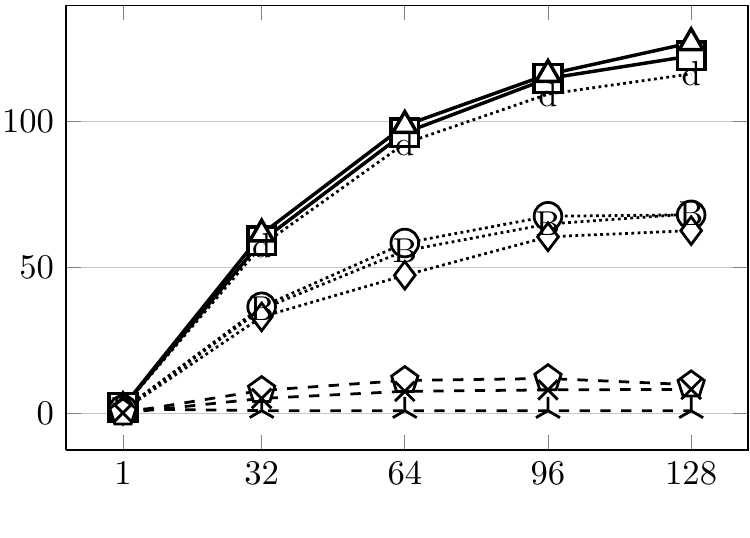}
\end{minipage}\\
\vspace{-4mm}
\includegraphics[scale=0.15]{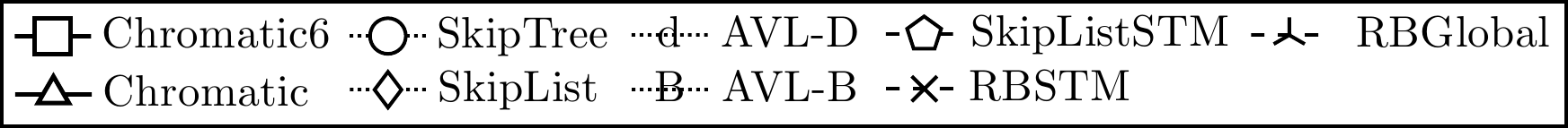}
\vspace{-1mm}
\caption{\textit{Multithreaded} throughput (millions of operations/second) for 2-socket SPARC T2+ (128 hardware threads) on y-axis versus number of threads on x-axis.}
\label{fig-experiments}
\vspace{-1mm}
\end{figure*}

We compared the performance of our chromatic tree (Chromatic) and the variant of our chromatic tree that invokes \cleanup\ only when the number of violations on a path exceeds six (Chromatic6) against several leading data structures that implement ordered dictionaries:
the non-blocking skip-list (SkipList) of the Java Class Library,
the non-blocking multiway search tree (SkipTree) of Spiegel and Reynolds~\cite{SR10},
the lock-based relaxed-balance AVL tree with non-blocking searches (AVL-D) of Drachsler et~al.~\cite{DVY14-incompletecitation}, and
the lock-based relaxed-balance AVL tree (AVL-B) of Bronson et~al.~\cite{BCCO10:ppopp}.
Our comparison also includes an STM-based red-black tree optimized by Oracle engineers (RBSTM) \cite{HLMS03:podc}, an STM-based skip-list (SkipListSTM), and the highly optimized Java red-black tree, \texttt{java.util.TreeMap}, with operations protected by a global lock (RBGlobal).
The STM data structures are implemented using DeuceSTM 1.3.0, which is one of the fastest STM implementations that does not require modifications to the Java virtual machine.
We used DeuceSTM's offline instrumentation capability to eliminate any STM instrumentation at running time that might skew our results.
All of the implementations that we used were made publicly available by their respective authors.
For a fair comparison between data structures, we made slight modifications to RBSTM and SkipListSTM to use generics, instead of hardcoding the type of keys as \texttt{int}, and to store values in addition to keys. %implement the dictionary ADT, rather than the set ADT.
%These changes decreased the throughput of these algorithms by approximately 14\%.
Java code for Chromatic and Chromatic6 is available from
\mbox{\url{http://implementations.tbrown.pro}}.

We tested the data structures for three different operation mixes, 0i-0d, 20i-10d and 50i-50d, where $x$i-$y$d denotes $x$\% \ins s, $y$\% \del s, and $(100-x-y)$\% \func{Get}s, to represent the cases 
when all of the operations are queries, 
when a moderate proportion of the operations are \ins s and \del s, and 
when all of the operations are \ins s and \del s. %, respectively.
%(We also performed experiments for 5i-5d, with similar results to 0i-0d.)
We used three key ranges, $[0,10^2), [0,10^4)$ and $[0,10^6)$, to test different contention levels.
%, since the data structures can only grow roughly as large as the number of keys they contain.
For example, for key range $[0,10^2)$, data structures will be small, so 
%processes performing operations on uniformly random keys are likely to interfere with one another.
updates are likely to affect overlapping parts of the data structure.

For each data structure, each operation mix, each key range, and each thread count in 
\{1, 32, 64, 96, 128\}, we ran five trials which each measured the total throughput (operations per second) of all threads for five seconds.
Each trial began with an untimed prefilling phase, which continued until the data structure was within 5\% of its expected size in the steady state.
For operation mix 50i-50d, the expected size is half of the key range.
This is because, eventually, each key in the key range has been inserted or deleted at least once, and the last operation on any key is equally likely to be an insertion (in which case it is in the data structure) or a deletion (in which case it is not in the data structure).
Similarly, 20i-10d yields an expected size of two thirds of the key range since, eventually, each key has been inserted or deleted and the last
operation
%update
on it is twice as likely to be an insertion as a deletion.
For %the operation mix 
0i-0d, we prefilled to half of the key range.

We used a Sun SPARC Enterprise T5240 with 32GB of RAM and two UltraSPARC T2+ processors, for a total of 16 1.2GHz cores supporting a total of 128 hardware threads.
%We were interested in measuring true concurrency, so we focused on the performance up to 16 threads.
%We also included results for 32 threads to see how performance changes when threads must share cores.
The Sun 64-bit JVM version 1.7.0\_03 was run in server mode, with 3GB minimum and maximum heap sizes.
Different experiments run within a single instance of a Java virtual machine (JVM) are not statistically independent, so each batch of five trials was run in its own JVM instance.
Prior to running each batch, a fixed set of three trials was run to cause the Java HotSpot compiler to optimize the running code.
Garbage collection was manually triggered before each trial.
The heap size of 3GB was small enough that garbage collection was performed regularly (approximately ten times) in each trial.
%This allowed us to measure the true performance of each algorithm, including the impact of garbage collection in the steady state.
We did not pin threads to cores, since this is unlikely to occur in practice. %threads will typically not be pinned in practice.

Figure~\ref{fig-experiments} shows our experimental results.
Our algorithms are drawn with solid lines.
Competing handcrafted implementations are drawn with dotted lines.
Implementations with coarse-grained synchronization are drawn with dashed lines.
Error bars are not drawn because they are mostly too small to see: The standard deviation is less than 2\% of the mean for half of the data points, and less than 10\% of the mean for 95\% of the data points.
The STM data structures are not included in the graphs for key range $[0, 10^6)$, because of the enormous length of time needed just to perform prefilling (more than 120 seconds per five second trial).

Chromatic6 nearly always outperforms Chromatic.
The only exception is for an all query workload, where Chromatic performs slightly better.
Chromatic6 is prefilled with the Chromatic6 insertion and deletion algorithms, so it has a slightly larger average leaf depth than Chromatic; this accounts for the performance difference.
In every graph, Chromatic6 rivals or outperforms the other data structures, even the highly optimized implementations of SkipList and SkipTree which were crafted with the help of Doug Lea and %members of 
the Java Community Process JSR-166 Expert Group.
Under high contention (key range $[0,10^2)$), Chromatic6 outperforms every competing data structure except for SkipList in case 50i-50d and AVL-D in case 0i-0d.
In the former case, SkipList %achieves less than half the throughput of Chromatic6 for an all query workload, it
approaches the performance of Chromatic6 when there are many \ins s and \del s, due to the simplicity of its updates.
In the latter case, the non-blocking searches of AVL-D allow it to perform nearly as well as Chromatic6; this is also evident for the other two key ranges.
SkipTree, AVL-D and AVL-B all experience negative scaling beyond 32 threads when there are updates.
For SkipTree, this is because its nodes contain many child pointers, and processes modify a node by replacing it (severely limiting concurrency when the tree is small).
For AVL-D and AVL-B, this is likely because processes waste time waiting for locks to be released when they perform updates.
%AVL-B must additionally acquired locks while the tree is being traversed to find a location to modify.
%
Under moderate contention (key range $[0,10^4)$), in cases 50i-50d and 20i-10d, Chromatic6 significantly outperforms the other data structures. % for most thread counts.
Under low contention, the advantages of a non-blocking approach are less pronounced, but Chromatic6 is still at the top of each graph (likely because of low overhead and searches that ignore updates).

Figure~\ref{fig-single-thread} compares the single-threaded performance of the data structures, relative to the performance of the sequential RBT, \texttt{java.util.TreeMap}.
This demonstrates that the overhead introduced by our technique is relatively small.

\begin{figure}[tb]
\vspace{-3mm}
\centering
\includegraphics[scale=0.33]{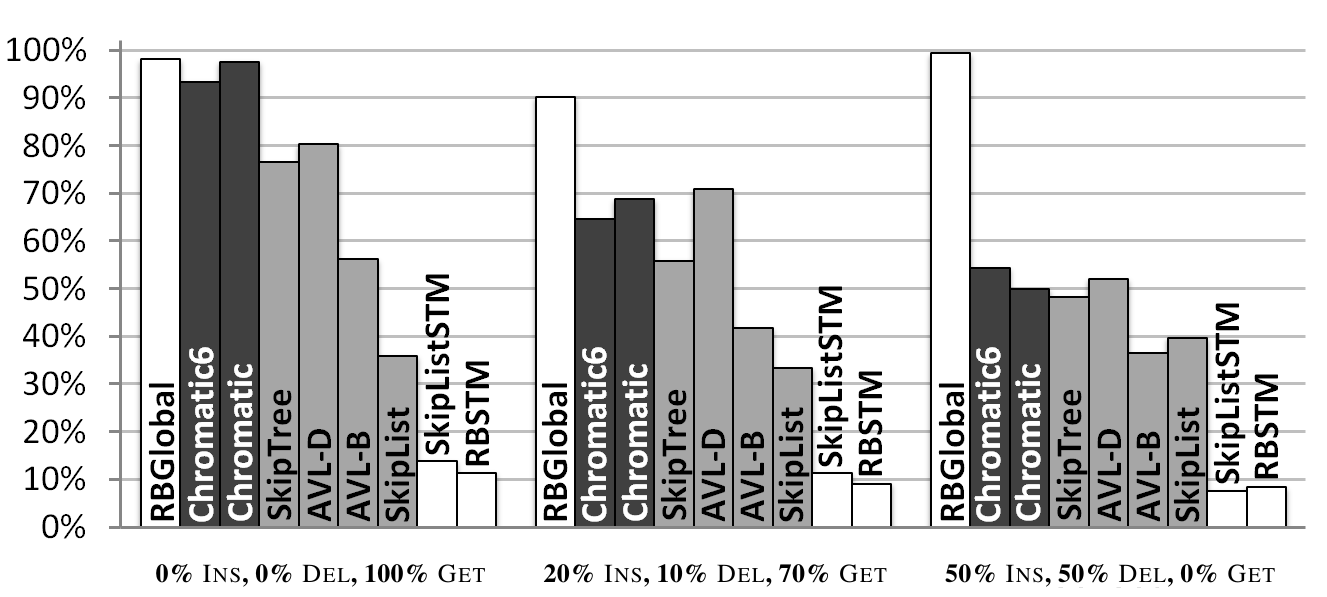}
%\vspace{-5mm}
\caption{\textit{Single threaded} throughput of the data structures relative to Java's sequential RBT for key range $[0,10^6)$.}
\label{fig-single-thread}
%\vspace{-2mm}
\end{figure}
Although balanced BSTs are designed to
give performance guarantees for worst-case sequences of operations,
the experiments are performed using random sequences.
For such sequences, BSTs without rebalancing operations
are balanced with high probability and, hence, will have better performance
because of their lower overhead.
Better experiments are needed to evaluate balanced BSTs.

\section{Conclusion} \label{sec-conclusion}

In this work, we presented a template that can be used to obtain non-blocking implementations of any data structure based on a down-tree, and demonstrated its use by implementing a non-blocking chromatic tree.
To the authors' knowledge, this is the first provably correct, non-blocking balanced BST with fine-grained synchronization.
Proving the correctness of a direct implementation of a chromatic tree from hardware primitives would have been completely intractable.
By developing our template abstraction and our chromatic tree in tandem, we were able to avoid introducing any extra overhead, so our chromatic tree is very efficient.

Given a copy of \cite{Lar00}, and this paper, a first year undergraduate student produced our Java implementation of a relaxed-balance AVL tree in less than a week.
Its performance was slightly lower than that of Chromatic.
After allowing more violations on a path before rebalancing, its performance was indistinguishable from that of Chromatic6.

We hope that this work sparks interest in developing more relaxed-balance sequential versions of data structures, since it is now easy to obtain efficient concurrent implementations of them using our template. %implement them concurrently to be as efficient as handcrafted data structures.

\subsection*{Acknowledgements}
This work was supported by NSERC.
We thank Oracle for providing the machine used for our experiments.

%\newpage
\bibliographystyle{abbrv}
\bibliography{bibliography}

\newpage
\appendix

\section{Primitives}
\label{primitives-spec}

Since the definition of \llt, \vlt\ and \sct\
has not yet appeared in print, we include
the specification of the primitives here for completeness.
However, a much fuller discussion of the primitives can be found in \cite{paper1}.

The \vlt\ primitive is an extension of the usual validate primitive:  it takes as an argument
a sequence $V$ of \rec s.  Intuitively, a \vlt($V$) by a process $p$
returns true only if no \sct\ has modified
any record $r$ in $V$ since the last \llt\ by $p$.  

\begin{defn} \label{defn-llt-linked-to-sct}
Let $I'$ be an invocation of \sct$(V, R, fld, new)$ or \vlt$(V)$ by a process $p$, and $r$ be a \rec \ in $V$.
We say an invocation $I$ of \llt$(r)$ is \textbf{linked to} $I'$ if and only if:
\begin{enumerate}
\item $I$ returns a value different from \fail\ or \finalized, and
\label{prop-returns-value-different-from-fail-or-finalized}
\item no invocation of \llt$(r)$, \sct$(V', R', fld', new')$, or \vlt$(V')$, where $V'$ contains $r$, is performed by $p$ between $I$ and $I'$.
\label{prop-no-sct-or-vlt-between-linked-llt-and-sct-or-vlt}
\end{enumerate}
\end{defn}

Before calling \sct\ or \vlt,
$p$ must perform a linked \llt($r$) on each \rec\ $r$ in $V$. 

For every execution, there is a linearization of all successful \llt s,
all successful \sct s, a subset of the non-terminating \sct s,
all successful \vlt s, and all reads, such that the following conditions
are satisfied.
%\begin{compactenum}[{\bf C\arabic{enumi}}:]
\begin{compactitem}
	\item 
		Each read of a field $f$ of a \rec\ $r$ returns
		the last value stored in $f$ by an \sct\ linearized before the read
		(or $f$'s initial value, if no such \sct\ has modified $f$).
	\item 
%		Each \llt$(r)$ that returns a value different from \fail \ or \finalized \
Each linearized \llt($r$) that does not return \finalized\
		returns	the last value stored in each mutable field $f$ of $r$
		by an \sct\ linearized before the \llt (or $f$'s initial value, if no such \sct\ has modified~$f$).
	\item 
		Each linearized \llt$(r)$ returns \finalized\ if and only if it is linearized
		after an %\textit{successful}
		\sct($V, R, fld,$ $new$) with $r$ in $R$.
	\item 
For each linearized invocation $I$ of \sct($V, R,$ $fld, new$) or \vlt$(V)$,
and for each $r$ in $V$, 
no \sct($V'$, $R'$, $fld'$, $new'$) with $r$ in $V'$ is linearized between the
\llt$(r)$ linked to $I$ and $I$.
\end{compactitem}

Moreover, we have the following progress properties.
\begin{compactitem}
%\begin{compactenum}[{\bf P\arabic{enumi}}:]
	\item 
        Each terminating \llt$(r)$ returns \finalized\ if it begins after the end of a successful \sct($V, R,$ $fld, new$) with $r$ in $R$ or after another \llt$(r)$ has returned \finalized. 
    \item If operations are performed infinitely often, then operations succeed infinitely often.
    \item If \sct\ and \vlt\ operations are performed infinitely often, then \sct\ or \vlt\ operations succeed infinitely often.
    \item If \sct\ operations are performed infinitely often, then \sct\ operations succeed infinitely often.
\end{compactitem}

\section{Proof for Tree Update Template}
\label{app-tree-proof}

\subsection{Additional properties of \llt/\sct/\vlt}

We first import some useful definitions and properties of \llt/\sct/\vlt\ that were proved in the appendix of the companion paper \cite{paper1}.

\begin{defn} \label{defn-set-up-sct}
A process $p$ \textbf{sets up} an invocation of \sct$(V, R, fld, new)$ by invoking \llt$(r)$ for each $r \in V$, and then invoking \sct$(V, R, fld, new)$ if none of these \llt s return \fail\ or \finalized.
\end{defn}

\begin{defn} \label{defn-rec-in-added-removed}
A \rec \ $r$ is \textbf{in the data structure} in some configuration $C$ if and only if $r$ is reachable by following pointers from an entry point.
We say $r$ is \textbf{initiated} if it has ever been in the data structure.
We say $r$ is \textbf{removed (from the data structure) by} some step $s$ if and only if $r$ is in the data structure immediately before $s$, and $r$ is not in the data structure immediately after $s$.
We say $r$ is \textbf{added (to the data structure) by} some step $s$ if and only if $r$ is not in the data structure immediately before $s$, and $r$ is in the data structure immediately after $s$.
\end{defn}

Note that a \rec \ can be removed from or added to the data structure only by a linearized invocation of \sct.
%Observe that the following constraint is implied by Lemma~\ref{lem-dotreeup-constraints-invariants}.\ref{claim-dotreeup-finalized-before-removed}.
The results of this section holds only if Constraint~3 (from section~\ref{sec-primitives}) is satisfied. %the following constraint is satisfied.

%\begin{con} \label{con-mark-all-removed-recs}
%For each linearized invocation $S$ of \sct$(V, R, fld, new)$, $R$ contains precisely the \rec s that are removed from the data structure by $S$.
%\end{con}
%
%\eric{Isn't the above constraint identical to Constraint 3?}

%\begin{lem} \label{lem-if-initiated-rec-not-in-data-structure-then-does-not-change}
%If a \rec \ $r$ is removed from the data structure for the first time by step $s$, then no linearized invocation of \sct$(V, R, fld, new)$, where $fld$ is a mutable field of $r$, occurs at or after $s$.
%(Hence, $r$ does not change at or after $s$.)
%\end{lem}

\begin{lem} {\rm\cite{paper1}}\label{lem-rec-in-data-structure-just-before-llt}
If an invocation $I$ of $\llt(r)$ returns a value different from \fail\ or \finalized, then $r$ is in the data structure just before $I$ is linearized.
\end{lem}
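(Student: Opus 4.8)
The plan is to derive this statement from the linearization properties of \llt/\sct/\vlt\ given in Appendix~\ref{primitives-spec}, together with Constraint~3, by an induction on the prefix of the execution ending at the linearization point of $I$. In the companion paper this lemma is actually one clause of a larger simultaneous induction that maintains several invariants at once; here I sketch the clauses that bear directly on the statement. Let $C$ denote the configuration just before $I$ is linearized.

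First I would pin down the role of finalization. Since $I$ returns a value different from \fail\ and \finalized, it is one of the linearized \llt s, so the property ``each linearized \llt$(r)$ returns \finalized\ if and only if it is linearized after an \sct$(V,R,fld,new)$ with $r\in R$'' applies and tells us that no \sct\ with $r$ in its $R$ component is linearized before $I$. By Constraint~3 a \rec\ is finalized exactly when it is removed from the data structure, and (as noted immediately after Definition~\ref{defn-rec-in-added-removed}) a \rec\ can be removed only by a linearized \sct. Hence $r$ is not removed from the data structure at any linearization point at or before $C$; in particular, if $r$ is in the data structure at \emph{some} earlier configuration, it is still in the data structure at $C$.

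It therefore remains to show that $r$ has been in the data structure at some configuration at or before $C$, i.e.\ that $r$ is \emph{initiated}. This is the point where the mutual induction is needed: the process performing $I$ holds a pointer to $r$, and using the companion invariant --- that any pointer a process holds to a \rec\ was obtained by following child pointers from $entry$ through \rec s that were themselves in the tree when read, and that an \sct\ whose $new$ value is a \rec\ puts that \rec\ into the tree --- one concludes that $r$ was reachable from $entry$, hence in the data structure, at some earlier configuration. Combining this with the previous paragraph gives that $r$ is in the data structure at $C$, as required.

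I expect the main obstacle to be exactly this last step: ruling out an \llt\ on a \rec\ that has never been in the tree (for instance a freshly allocated node a process created but has not yet linked in). Handling it cleanly forces one to state and carry, in the same induction, the companion reachability property, which in turn depends on this lemma and on the ``\sct\ adds $new$'' property --- so the three statements have to be proved together rather than in sequence. By contrast, the base case (the initial configuration, where the data structure is the initial tree and every \llt\ returns initial field values) and the inductive step across an \sct\ linearization (which alters reachability precisely as described by the \sct\ specification and Constraint~3) should be routine.
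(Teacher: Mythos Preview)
The paper does not actually prove this lemma: it is imported verbatim from the companion paper~\cite{paper1} (note the citation attached to the lemma statement), as are the two lemmas that follow it. So there is no in-paper proof to compare against; your sketch is a reconstruction of what the proof in~\cite{paper1} might look like.

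Your argument is essentially sound, but the ``main obstacle'' you worry about---establishing that $r$ has been in the data structure at some earlier point---is in fact a non-issue, and you are making the proof harder than it needs to be. The paper explicitly records, just before Lemma~\ref{lem-dotreeup-constraints-invariants}, that \llt$(r)$ carries the \emph{precondition} ``$r$ has been initiated.'' In the setting of~\cite{paper1} this precondition is an obligation on the caller of the primitive, not something the primitive has to derive; the lemma is proved under the assumption that the precondition holds. With that in hand, your first paragraph already finishes the job: $r$ is initiated (precondition), no \sct\ with $r$ in its $R$ sequence is linearized before $I$ (since $I$ does not return \finalized), hence by Constraint~3 $r$ has not been removed, so $r$ is still in the data structure just before $I$ is linearized. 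No mutual induction with a reachability invariant is required at this level.

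The mutual-induction work you anticipate does happen, but in the \emph{present} paper rather than in~\cite{paper1}: Lemma~\ref{lem-dotreeup-constraints-invariants} is where one shows, for the tree-update template specifically, that every \llt\ invoked by the template satisfies its precondition (Claim~\ref{claim-dotreeup-llt-sct-preconditions}), and that proof does use Lemma~\ref{lem-rec-in-data-structure-just-before-llt} and Lemma~\ref{lem-if-rec-traversed-then-rec-in-data-structure}. So the layering is: \cite{paper1} proves the lemma assuming the precondition; this paper proves the precondition is met by template operations. Your proposal conflates the two layers.
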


\begin{lem} {\rm\cite{paper1}}
\label{lem-rec-in-data-structure-after-linearized-sct}
If $S$ is a linearized invocation of \sct$(V, R, fld, new)$, where $new$ is a \rec, then $new$ is in the data structure just after $S$.
\end{lem}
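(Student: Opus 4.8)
The plan is to show that, immediately after the linearization point of $S$, the \rec\ $new$ is reachable from the entry point, and is therefore in the data structure. Let $parent$ denote the (unique) \rec\ in $V$ that contains the mutable field $fld$. When $S$ is linearized, it atomically writes $new$ into $fld$, so just after $S$ there is a pointer from $parent$ to $new$. It therefore suffices to establish that $parent$ is in the data structure just after $S$: extending any path from the entry point to $parent$ by this freshly written pointer then witnesses that $new$ is in the data structure.

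The first ingredient is to locate $parent$ in the data structure at the time of its \llt\ linked to $S$. Since the process performing $S$ only invokes \sct\ after setting it up, and $S$ is a linearized (in particular, non-failing) \sct, the process must have performed an \llt$(parent)$ linked to $S$ that returned a snapshot rather than \fail\ or \finalized. Lemma~\ref{lem-rec-in-data-structure-just-before-llt} then gives that $parent$ is in the data structure just before that linked \llt\ is linearized.

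The crux of the argument---and the step I expect to be the main obstacle---is ruling out that $parent$ leaves the data structure between its linked \llt\ and $S$. Suppose, for contradiction, that $parent$ is not in the data structure just before $S$. Since a \rec\ can be removed from the data structure only by a linearized \sct, there is a linearized invocation \sct$(V', R', fld', new')$ (for instance, the last one before $S$ that changes whether $parent$ is in the data structure) that removes $parent$ and whose linearization point lies strictly between those of the linked \llt$(parent)$ and $S$. By Constraint~3, a \rec\ is finalized exactly when it is removed from the tree, so $parent \in R' \subseteq V'$. This contradicts the linearization property that, for every linearized invocation $I$ of \sct\ and every $r$ in its $V$-sequence, no \sct\ whose $V$-sequence contains $r$ is linearized between the \llt$(r)$ linked to $I$ and $I$ (applied with $I = S$ and $r = parent$). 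Hence $parent$ is in the data structure just before $S$.

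Finally, I would observe that the step $S$ modifies only the single field $fld$, which is a pointer out of $parent$. No pointer on any path from the entry point to $parent$ is affected by $S$, since such pointers belong to proper ancestors of $parent$ or to the entry point itself, none of which is $parent$. Therefore $parent$ remains reachable from the entry point just after $S$ (in particular, $parent$ is not removed by $S$), so $parent$---and hence $new$, via the pointer $S$ installs---is in the data structure just after $S$, as required.
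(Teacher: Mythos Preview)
The paper does not supply its own proof of this lemma: it is imported from the companion paper \cite{paper1} (along with Lemma~\ref{lem-rec-in-data-structure-just-before-llt} and Lemma~\ref{lem-if-rec-traversed-then-rec-in-data-structure}) under the standing assumption that Constraint~3 holds. So there is no in-paper argument to compare against; your proposal is a reasonable and essentially correct reconstruction of what such a proof must look like.

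Your three steps are sound. The use of Lemma~\ref{lem-rec-in-data-structure-just-before-llt} to place $parent$ in the data structure at its linked \llt, followed by the linearization property of \sct\ combined with Constraint~3 to keep it there until $S$, is exactly the right mechanism. The only place the wording could be tightened is the final step: you write that pointers on a path from the entry point to $parent$ ``belong to proper ancestors of $parent$,'' which presupposes a down-tree structure, whereas this lemma lives at the level of the \llt/\sct\ primitives where the pointer structure is a priori arbitrary. The underlying point survives without that assumption: any \emph{simple} path from the entry point to $parent$ ends with an edge \emph{into} $parent$ and otherwise uses edges among other \rec s, so the single outgoing field $fld$ that $S$ modifies is not on it. With that phrasing adjustment, the argument goes through cleanly.
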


Let $C_1$ and $C_2$ be configurations in the execution.
We use $C_1 < C_2$ to mean that $C_1$ precedes $C_2$ in the execution.
We say $C_1 \le C_2$ precisely when $C_1 = C_2$ or $C_1 < C_2$.
We denote by $[C_1, C_2]$ the set of configurations $\{C \mid C_1 \le C \le C_2\}$.

\begin{lem} {\rm\cite{paper1}} \label{lem-if-rec-traversed-then-rec-in-data-structure}
Let $r_1,r_2,...,r_l$ be a sequence of \rec s, where $r_1$ is an entry point, and $C_1,C_2,...,C_{l-1}$ be a sequence of configurations satisfying $C_1 < C_2 < ... < C_{l-1}$.
If, for each $i \in \{1, 2, ..., l-1\}$, a field of $r_i$ points to $r_{i+1}$ in configuration $C_i$, then $r_{i+1}$ is in the data structure in some configuration in $[C_1, C_i]$.
%(And $r_1$ is always in the data structure.)
Additionally, if a mutable field $f$ of $r_l$ contains a value $v$ in some configuration $C_l$ after $C_{l-1}$ then, in some configuration in $[C_1, C_l]$, $r_l$ is in the data structure and $f$ contains $v$.
\end{lem}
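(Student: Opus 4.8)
The plan is to prove the first statement by induction on $i$, relying on the invariant (established earlier in the excerpt) that the data structure is always a down-tree rooted at the entry point $r_1$, which is never removed, so that ``$r$ is in the data structure'' coincides with ``$r$ is reachable from the entry point by following pointers''. The additional claim will then come from one more application of the same reasoning at $r_l$. Beyond the down-tree invariant, the ingredients I would use are: the entry point is in the data structure in every configuration; and, by Constraint~3 together with the specification of \llt/\sct, a \rec\ becomes finalized exactly when it is removed, a removed \rec\ stays removed (and finalized), and the mutable fields of a finalized \rec\ never change again.

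The base case $i=1$ is immediate: $r_1$ is an entry point, hence in the data structure in $C_1$, and the field of $r_1$ pointing to $r_2$ in $C_1$ witnesses that $r_2$ is reachable from the entry point in $C_1 \in [C_1,C_1]$. For the inductive step, I would assume $r_i$ is in the data structure in some configuration $C' \in [C_1, C_{i-1}]$ and let $g$ be the field of $r_i$ pointing to $r_{i+1}$ in $C_i$. If $r_i$ is in the data structure in $C_i$ as well, then $r_{i+1}$ is reachable through $r_i$ in $C_i$ and we are done. Otherwise $r_i$ leaves the data structure somewhere in $(C',C_i]$; let $s$ be the \emph{last} step removing it before $C_i$, and $C''$ the configuration just before $s$, so $r_i$ is in the data structure in $C'' \in [C_1,C_i]$. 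It then suffices to show $g$ still points to $r_{i+1}$ in $C''$. This is trivial if $g$ is immutable; if $g$ is mutable, then since $s$ is the last removal of $r_i$ before $C_i$, the \rec\ $r_i$ stays removed (hence finalized) from $s$ through $C_i$, so $g$ does not change after $s$, and the \sct\ performing $s$ writes only the single field it updates, so either that field is not $g$ (and $g$ is untouched by $s$, giving the claim) or it is $g$ (in which case $r_{i+1}$ is the new \rec\ installed by $s$, which the primitives put into the data structure immediately after $s$). The additional claim follows identically: the first part gives a configuration $D \in [C_1,C_{l-1}]$ with $r_l$ in the data structure, and then either $r_l$ is in the data structure in $C_l$ (done), or the last removal $s$ of $r_l$ before $C_l$ and the configuration just before $s$ work, since $s$ does not write the field $f$ of $r_l$ and $r_l$ is finalized from $s$ onward, so $f$ still holds $v$ there.

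I expect the field-stability step to be the only real obstacle. Making it rigorous requires (i) invoking Constraint~3 so that $r_i$ (or $r_l$) is finalized the instant it leaves the tree, (ii) using the choice of the \emph{last} removal before $C_i$ to rule out a later re-insertion that could overwrite $g$ before $C_i$, and (iii) arguing from the structure of \sct\ --- it changes only one field --- that the removing step itself does not perturb $g$ (handling the degenerate case where it does by appealing to the guarantee that a newly installed \rec\ is in the data structure right after its \sct). Everything else is bookkeeping about reachability in a down-tree whose root is never removed.
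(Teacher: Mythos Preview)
This lemma is not proved in the present paper; it is imported from the companion paper~\cite{paper1} (note the citation attached to the lemma statement and the preceding sentence ``We first import some useful definitions and properties of \llt/\sct/\vlt\ that were proved in the appendix of the companion paper''). So there is no proof here to compare your proposal against.

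That said, your proposal is a reasonable reconstruction, and the overall strategy---induct on $i$, and when $r_i$ has already left the data structure by $C_i$ use Constraint~3 to freeze its mutable fields at the moment of removal---is the natural one. One asymmetry is worth tightening: in the first part you correctly cover the degenerate sub-case where the removing \sct\ itself writes the field $g$ (by appealing to the guarantee that $new$ is in the data structure just after that \sct), but in the additional claim you simply assert ``$s$ does not write the field $f$ of $r_l$'' without justification. If that sub-case can occur, the argument for the additional claim has a gap: just after $s$ you would have $f=v$ but $r_l$ removed, and just before $s$ you would have $r_l$ in the tree but possibly $f\neq v$, so neither configuration witnesses the conclusion. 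You should either argue that an \sct\ never modifies a field of a record it simultaneously finalizes (this does hold under the tree-update template, where $parent\notin R$, but the lemma is stated at the level of the primitives), or carry that case analysis through explicitly.
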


\subsection{Correctness of the tree update template}\label{sec-dotreeup-correctness}

In this section, we refer to an operation that follows the tree update template simply as a {\it tree update operation}.
The results of this section apply only if the following constraint is satisfied.

\begin{con} \label{con-dotreeup-exclusively-does-sct}
Every invocation of \sct\ is performed by a tree update operation.
\end{con}

In the proofs that follow, we sometimes argue about when invocations of \llt, \sct\ and \vlt\ are linearized.
However, the arguments we make do not require any knowledge of the linearization points chosen by any particular implementation of these primitives.
We do this because the behavior of \llt, \sct\ and \vlt\ is defined in terms of when operations are linearized, relative to one another.
Similarly, we frequently refer to \textit{linearized} invocations of \sct, rather than \textit{successful} invocations, because it is possible for a non-terminating invocation of \sct\ to modify the data structure, and we linearize such invocations.

Since we refer to the preconditions of \llt\ and \sct\ in the following, we reproduce them here.
\begin{compactitem}
\item \llt$(r)$: $r$ has been initiated
\item \sct$(V, R, fld, new)$:
\begin{compactenum}
\item for each $r \in V$, $p$ has performed an invocation $I_r$ of $\llt(r)$ linked to this \sct
\item $new$ is not the initial value of $fld$
\item for each $r \in V$, no \sct$(V', R', fld, new)$ was linearized before $I_r$ was linearized
\end{compactenum}
\end{compactitem}

The following lemma establishes Constraint~\conmarkallremovedrecs, and some other properties that will be useful when proving linearizability.

%\eric{Has this paper said what the preconditions of \llt and \sct are?  Below they are referred to by number}

\begin{lem} \label{lem-dotreeup-constraints-invariants}
The following properties hold in any execution of tree update operations.
%Each tree update operation satisfies the following properties.
\begin{enumerate}
\item    Let $S$ be a linearized invocation of \sct$(V, R, fld, new)$, 
         and $G$ be the directed graph induced by the edges read by the \llt s linked to $S$.
%and $N$ and $G$ be as described in the code of \dotreeupdate.
         $G$ is a sub-graph of the data structure at all times after the last \llt\ linked to $S$ and before $S$ is linearized, and no node in the $N$ set of $S$ is in the data structure before $S$ is linearized.
\label{claim-dotreeup-G-in-data-structure}
\item    Every invocation of \llt\ or \sct\ performed by a tree update operation has valid arguments, and satisfies its preconditions.
%         Additionally, if an invocation of \dotreeupdate\ performs a linearized invocation of \sct$(V, R, fld, new)$, then the elements of $V$ are distinct.
\label{claim-dotreeup-llt-sct-preconditions}
\item    Let $S$ be a linearized invocation of \sct$(V, R, fld, new)$, where $fld$ is a field of $parent$, and $old$ is the value read from $fld$ by the $\llt(parent)$ linked to $S$. %be the root of $G_R$. %, and $N$, $F$ and $top$ be as described in the code of \dotreeupdate.
         $S$ changes $fld$ from $old$ to $new$, replacing a connected subgraph containing nodes $R \cup F_N$ with another connected subgraph containing nodes $N \cup F_N$.
         Further, the \rec s added by $S$ are precisely those in $N$, and the \rec s removed by $S$ are precisely those in $R$.
\label{claim-dotreeup-finalized-before-removed}
\item    At all times, $root$ is the root of a tree of \node s.
         (We interpret $\bot$ as the empty tree.)
\label{claim-dotreeup-tree}
\end{enumerate}
\end{lem}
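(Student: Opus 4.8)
The plan is to prove all four properties simultaneously by strong induction on the steps of the execution. The reason a joint induction is needed is that the statements are deeply intertwined: property~\ref{claim-dotreeup-llt-sct-preconditions} (preconditions hold) is needed so that the linearization guarantees of \llt/\sct/\vlt\ from Appendix~\ref{primitives-spec} can be invoked, property~\ref{claim-dotreeup-tree} (the data structure is always a tree) is the hypothesis under which the sequential postcondition PC\ref{con-R-non-empty-then-GR-a-non-empty-tree} becomes usable, property~\ref{claim-dotreeup-finalized-before-removed} describes exactly what each linearized \sct\ does to the tree, and property~\ref{claim-dotreeup-G-in-data-structure} is the bridge that connects what a tree update operation \emph{saw} (via its linked \llt s) to what is actually \emph{in} the data structure when its \sct\ takes effect. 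I would set up the induction hypothesis as: "up through the first $m$ steps, Constraints 1--3 hold, every \llt/\sct\ invoked so far has satisfied its preconditions, every \sct\ linearized so far behaved as in property~\ref{claim-dotreeup-finalized-before-removed}, and the data structure is a tree of \node s in every configuration so far."

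The key steps, in order, are as follows. \textbf{(i)} First establish property~\ref{claim-dotreeup-llt-sct-preconditions} for the step being added: \llt$(r)$ only needs $r$ to be initiated, which follows from Lemma~\ref{lem-if-rec-traversed-then-rec-in-data-structure} since the template reaches $n_0$ by following pointers from $entry$ and each subsequent $n_{i+1}$ is a child pointer read from an earlier $s_j$; the three \sct\ preconditions follow because the template performs a linked \llt\ on every node of $V$ before the \sct\ (PC\ref{con-llt-on-all-nodes-in-V}, and the template structure), because $new$ is freshly allocated by PC\ref{con-new-nodes} (so it cannot be the initial value of $fld$, giving Constraint~1), and because of Constraint~2, which is guaranteed by PC\ref{con-V-sequences-ordered-consistently}. \textbf{(ii)} Prove property~\ref{claim-dotreeup-G-in-data-structure}: by Lemma~\ref{lem-rec-in-data-structure-just-before-llt} each node $r\in\sigma$ is in the data structure just before its linked \llt\ is linearized; the fourth linearization bullet of the primitives spec says no \sct\ touching $r$ is linearized between $r$'s linked \llt\ and $S$, so the edges $G_\sigma$ read by those \llt s are still present (unchanged) at all configurations between the last linked \llt\ and $S$'s linearization — hence $G_\sigma\supseteq G$ is a subgraph of the current tree; and $N$'s nodes are freshly allocated, hence unreachable, hence not in the data structure before $S$. \textbf{(iii)} Prove property~\ref{claim-dotreeup-finalized-before-removed} for a linearized \sct$(V,R,fld,new)$: here we use the inductive hypothesis that the data structure is a tree (so $G_\sigma$, being a connected-by-reachability subgraph of a tree, is itself a down-tree), invoke the sequential postcondition PC\ref{con-R-non-empty-then-GR-a-non-empty-tree} to conclude $G_R$ is a down-tree rooted at $old$ with $F_R=F_N$, and then do the concrete pointer-surgery argument: changing $fld$ from $old$ to $new$ detaches exactly the subtree of old nodes rooted at $old$ (which, by step (ii) and the tree property, equals $R\cup F_N$ as a connected subgraph) and attaches $G_N=N\cup F_N$ rooted at $new$; the nodes of $F_N$ remain reachable through $N$, so the net removed set is $R$ and the net added set is $N$. \textbf{(iv)} Deduce Constraint~3: by the third linearization bullet a node returns \finalized\ from \llt\ iff it lies in the $R$ set of some linearized \sct, and $R$ is exactly the removed set by step (iii), so finalized $\iff$ removed; the PC\ref{con-new-nodes} freshness also gives "removed nodes are never reinserted". \textbf{(v)} Finally re-establish property~\ref{claim-dotreeup-tree}: the step is either a read/\llt/\vlt\ (which changes no pointers, so the tree is preserved trivially), or a linearized \sct, and step (iii) shows it replaces one down-subtree by another down-subtree ($G_N$ is a down-tree by PC\ref{con-GN-non-empty-tree}), with the fringe correctly re-attached, so the whole structure remains a tree of \node s — and when $old=\nil$, PC\ref{con-old-nil-then-R-empty} handles the leaf/empty-subtree case.

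The main obstacle I expect is step~(iii), and specifically the interaction between the \emph{sequential} reasoning licensed by the postconditions and the \emph{concurrent} reality of the execution. The postcondition PC\ref{con-R-non-empty-then-GR-a-non-empty-tree} is stated purely in terms of the static graphs $G_\sigma$, $G_R$, $G_N$ that the operation recorded via its \llt s; to turn "$G_R$ is a down-tree rooted at $old$, $F_R=F_N$" into "the \sct\ actually removes exactly $R$ and adds exactly $N$ from the \emph{live} data structure", one must argue that at the linearization point of $S$ the relevant portion of the live tree \emph{coincides} with $G_R$. This requires carefully combining: the fourth linearization bullet (no conflicting \sct\ between linked \llt s and $S$) applied to \emph{all} of $V$ — which includes $parent$ and all of $R$ — to pin down that the child pointers out of $R\cup\{parent\}$ are exactly those recorded; the fact that $F_N$-nodes may have children that changed concurrently but that is harmless since those nodes are not in $V$ and are simply carried along; and the tree property (inductive hypothesis) to rule out any "side entrance" into the subtree rooted at $old$. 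Getting the bookkeeping of which edges are guaranteed-stable versus which are allowed-to-change exactly right, and making sure the "connected subgraph containing $R\cup F_N$" language matches what pointer surgery on a single field actually accomplishes, is where the proof will need the most care.
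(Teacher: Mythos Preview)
Your proposal is essentially correct and follows the same joint-induction-on-steps approach as the paper, and you correctly identify step~(iii)---matching the \emph{recorded} graph $G_R$ to the \emph{live} subtree below $old$ at the linearization point---as the place requiring the most care. One small slip worth fixing: in step~(i) you conflate the \sct\ preconditions with Constraints~1--2; the third \sct\ precondition (no earlier linearized \sct\ with the same $fld,new$) is not Constraint~2 but follows from the freshness of $new$ (PC\ref{con-new-nodes}) together with Lemma~\ref{lem-rec-in-data-structure-after-linearized-sct}, and indeed the paper proves Claim~\ref{claim-dotreeup-G-in-data-structure} for the current step \emph{before} Claim~\ref{claim-dotreeup-llt-sct-preconditions} precisely so that ``$new$ not yet initiated'' is available for both \sct\ preconditions~2 and~3.
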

\begin{proof}
We prove these claims by induction on the sequence of steps taken in the execution.
%The only steps that can affect these claims are linearized invocations of \llt\ and \sct.
Clearly, these claims hold initially.
%Clearly, these claims hold before any such step has occurred.
Suppose they hold before some step $s$.
We prove they hold after $s$.
Let $O$ be the operation that performs $s$.

\textbf{Proof of Claim~\ref{claim-dotreeup-G-in-data-structure}.}
To affect this claim, $s$ must be a linearized invocation of \sct$(V, R, fld, new)$.
Since $s$ is linearized, the semantics of \sct\ imply that, for each $r \in V $, no \sct$(V', R', fld', new')$ with $r \in V'$ is linearized between the linearization point of the invocation $I$ of $\llt(r)$ linked to $s$ and the linearization point of $s$.
%Thus, $O$ satisfies Constraints~\conDoTreeUpRootAndR, \conDoTreeUpParent\ and \conDoTreeUpDistinct\ of \dotreeupdate.
Thus, for each $r \in V $, no mutable field of $r$ changes between when $I$ and $s$ are linearized.
We now show that all nodes and edges of $G$ are in the data structure at all times after the last \llt\ linked to $s$ is linearized, and before $s$ is linearized.
Fix any arbitrary $r \in V$.
By inductive Claim~\ref{claim-dotreeup-llt-sct-preconditions}, $I$ satisfies its precondition, so $r$ was initiated when $I$ started and, hence, was in the data structure before $I$ was linearized.
By the semantics of \sct, since $I$ returns a value different from \fail\ or \finalized, no invocation of \sct$(V'', R'', fld'', new'')$ with $r \in R''$ is linearized before $I$ is linearized.
By %Observation~\ref{obs-dotreeup-satisfies-con-mark-all-removed-recs} and
inductive Claim~\ref{claim-dotreeup-finalized-before-removed}, Constraint~\conmarkallremovedrecs\ is satisfied at all times before $s$.
%Hence, we can freely apply results from Appendix~\ref{sec-properties} at all times before $s$.
Thus, $r$ is not removed before $I$.
Since $s$ is linearized, no invocation of \sct$(V'', R'', fld'', new'')$ with $r \in R''$ is linearized between the linearization points of $I$ and $s$.
(If such an invocation were to occur then, since $r$ would also be in $V''$, the semantics of \sct\ would imply that $s$ could not be linearized.)
Since the linearization point of $I$ is before that of $s$, %Therefore, Constraint~\conmarkallremovedrecs\ implies that
$r$ is not removed before $s$ is linearized.
When $I$ is linearized, since $r$ is in the data structure, all of its children are also in the data structure.
Since no mutable field of $r$ changes between the linearization points of $I$ and $s$, all of $r$'s children read by $I$ are in the data structure throughout this time.
Thus, each node and edge in $G$ is in the data structure at all times after the last \llt\ linked to $s$, and before $s$.

Finally, we prove that no node in $N$ is in the data structure before $s$ is linearized.
Since $O$ follows the tree update template, $s$ is its only modification to shared memory.
Since each $r' \in N$ is newly created by $O$, it is clear that $r'$ can only be in the data structure after $s$ is linearized.
%Tree update template Property~\ref{con-N-not-in-data-structure} completes the proof.

\textbf{Proof of Claim~\ref{claim-dotreeup-llt-sct-preconditions}.}
Suppose $s$ is invocation of $\llt(r)$.
Then, $r \neq \nil$ (by the discussion in Sec.~\ref{sec-dotreeupdate}).
By the code in Figure~\ref{code-dotreeupdate}, either $r = top$, or $r$ was obtained from the return value of some invocation $L$ of \llt$(r')$ previously performed by $O$.
If $r$ was obtained from the return value of $L$, %then, $L$ satisfied its precondition by the inductive hypothesis, so $r$ is initiated before (and, hence, when) $s$ begins.
then Lemma~\ref{lem-rec-in-data-structure-just-before-llt} implies that $r'$ is in the data structure when $L$ is linearized.
Hence, $r$ is in the data structure when $L$ is linearized, which implies that $r$ is initiated when $s$ occurs.
Now, suppose $r = top$.
By the precondition of $O$, $r$ was reached by following child pointers from $root$ since the last operation by $p$.
By %Observation~\ref{obs-dotreeup-satisfies-con-mark-all-removed-recs} and 
inductive Claim~\ref{claim-dotreeup-finalized-before-removed}, Constraint~\conmarkallremovedrecs\ is satisfied at all times before $s$.
Therefore, we can apply Lemma~\ref{lem-if-rec-traversed-then-rec-in-data-structure}, which implies that $r$ was in the data structure at some point before the start of $O$ (and, hence, before $s$).
%Therefore, the precondition of $O$ and Lemma~\ref{lem-if-rec-traversed-then-rec-in-data-structure} imply that $r$ was in the data structure at some point before the start of $O$ (and, hence, before $s$).
By Definition~\ref{defn-rec-in-added-removed}, $r$ is initiated when $s$ begins.

Suppose $s$ is an invocation of \sct$(V, R, fld, new)$.
%Let $parent$ be the node that contains the mutable (child) field $fld$ changed by $s$.
By Condition~C\ref{con-parent-in-V}, $fld$ is a mutable (child) field of some node $parent \in V$.
By Condition~C\ref{con-R-subsequence-of-V}, $R$ is a subsequence of $V$.
Therefore, the arguments to $s$ are valid.
By Condition~C\ref{con-llt-on-all-nodes-in-V} and the definition of $\sigma$, for each $r \in V$, $O$ performs an invocation $I$ of $\llt(r)$ before $s$ that returns a value different from \fail\ or \finalized\ (and, hence, is linked to $s$), %.
%By Property~\ref{defn-llt-linked-to-sct}, $I$ is linked to $s$, 
so $s$ satisfies Precondition~\presctlinked\ of \sct.

We now prove that $s$ satisfies \sct\ Precondition~\presctabainit.
%We first show that $new$ is not the initial value of $fld$.
Let $parent.c_i$ be the field pointed to by $fld$.
If $parent.c_i$ initially contains $\bot$ then, by Condition~C\ref{con-GN-non-empty-tree}, %and Property~\ref{con-fringe-of-GN-is-F},
$new$ is a \node, and we are done.  %\eric{May need modifying after Faith's change to constraints.}
Suppose $parent.c_i$ initially points to some \node\ $r$.
We argued in the previous paragraph that $O$ performs an \llt$(parent)$ linked to $s$ before $s$ is linearized.
By %Observation~\ref{obs-dotreeup-satisfies-con-mark-all-removed-recs} and 
inductive Claim~\ref{claim-dotreeup-finalized-before-removed}, Constraint~\conmarkallremovedrecs\ is satisfied at all times before $s$.
Therefore, we can apply Lemma~\ref{lem-if-rec-traversed-then-rec-in-data-structure} to show that $r$ was in the data structure at some point before the start of $O$ (and, hence, before $s$).
However, by inductive Claim~\ref{claim-dotreeup-G-in-data-structure} (which we have proved for $s$), %Property~\ref{con-N-not-in-data-structure}, 
$new$ cannot be initiated before $s$, so $new \neq r$.

Finally, we show $s$ satisfies \sct\ Precondition~\presctaba.
Fix any $r' \in V$, and let $L$ be the \llt$(r')$ linked to $s$ performed by $O$.
To derive a contradiction, suppose an invocation $S'$ of \sct$(V', R', fld, new)$ is linearized before $L$ (which is before $s$).
By Lemma~\ref{lem-rec-in-data-structure-after-linearized-sct} (which we can apply since Constraint~\conmarkallremovedrecs\ is satisfied at all times before $s$), $new$ would be in the data structure (and, hence, initiated) before $s$ is linearized.
However, this contradicts our argument that $new$ cannot be initiated before $s$ occurs.

\textbf{Proof of Claim~\ref{claim-dotreeup-finalized-before-removed} and Claim~\ref{claim-dotreeup-tree}.}
%$\bigstar$
To affect these claims, $s$ must be a linearized invocation of \sct$(V,$ $R,$ $fld,$ $new)$.
Let $t$ be when $s$ is linearized.
%%%%%%%%%%%%%%%%% TODO: %%%%%%%%%%%%%%%%%%%%%% \eric{I think t should be replaced by s throughout this argument, although there is something weird here about whether s is an invocation or a  step or a linearization point or...  But leave it for now; we'll fix it later.}
The semantics of \sct\ and the fact that $s$ is linearized imply that, for each $r \in V$, no \sct$(V', R', fld', new')$ with $r \in V'$ is linearized after the invocation $I$ of $\llt(r)$ linked to $s$ is linearized and before $t$.
Thus, $O$ satisfies tree update template Condition~C\ref{con-old-nil-then-R-empty}, C\ref{con-R-non-empty-then-GR-a-non-empty-tree}, C\ref{con-GN-non-empty-tree} and C\ref{con-fringe-of-GN-is-old}.
%Further, for each $r \in V$, no mutable field of $r$ changes between $I$ and $s$. \trevor{needed?}
By inductive Claim~\ref{claim-dotreeup-G-in-data-structure} (which we have proved for $s$), all nodes and edges in $G$ are in the data structure just before $t$, and no node in $N$ is in the data structure before $t$.
Let $parent.c_i$ be the mutable (child) field changed by $s$, and $old$ be the value read from $parent.c_i$ by the $\llt(parent)$ linked to $s$.
%
%We prove by cases that the following claims hold just before $s$.
%\begin{compactitem}
%\item each node and edge of $G_R$ is in the data structure, and each node in $G_R$ is a descendent of $old$
%\item no $R \cap F$ is empty, and no node in $R$ is a descendent of a node in $F$
%\item every path from $root$ to a node in $\{$descendents of $old\} - R$ passes through a node in $F$
%\item $parent \notin (R \cup F \cup N)$ % is not a descendent of $old$
%\end{compactitem}

Suppose $R \neq \emptyset$ (as in Fig.~\ref{fig-replace-subtree}).
%In this case, $F = Fringe(R, G_R)$, which implies that each node in $F$ is in $G_R$.
Then, by Condition~C\ref{con-R-non-empty-then-GR-a-non-empty-tree}, $G_R$ is a tree rooted at $old$ and $F_N = F_R$.
%Since $R$ is a subsequence of $V$, and $O$ performs an \llt$(r)$ linked to $s$ for each $r \in V$, $O$ performs an \llt$(r)$ linked to $s$ for each $r \in R$.
Since $G_R$ is a sub-graph of $G$, inductive Claim~\ref{claim-dotreeup-G-in-data-structure} implies that each node and edge of $G_R$ is in the data structure just before $t$.
%Therefore, $G_R$ is a sub-graph of $G$, which implies that each node and edge of $G_R$ is in the data structure just before $t$.
Further, since $O$ performs an \llt$(r)$ linked to $s$ for each $r \in R$, and no child pointer changes between this \llt\ and time $t$, $G_R$ contains every node that was a child of a node in $R$ just before $t$.
Thus, $F_R$ contains every node $r \notin R$ that was a child of a node in $R$ just before $t$.
This implies that, just before $t$, for each node $r \notin R$ in the sub-tree rooted at $old$, $F_R$ contains $r$ or an ancestor of $r$.
By inductive Claim~\ref{claim-dotreeup-tree}, just before $t$, every path from $root$ to a descendent of $old$ passes through $old$.
Therefore, just before $t$, every path from $root$ to a node in $\{$descendents of $old\} - R$ passes through a node in $F_R$. %% this seems like something we want to export from all cases
Just before $t$, by the definition of $F_R$, and the fact that the nodes in $R$ form a tree, $R \cap F_R$ is empty and no node in $R$ is a descendent of a node in $F$.
%Since, just before $t$, the nodes in $R$ form a tree, and each node in $F$ is in $Fringe(R, G_R)$, $R \cap F$ is empty and no node in $R$ is a descendent of a node in $F$.
%%% maybe split up cases based on what is above here... actually, that seems like a bad idea %%
By Condition~C\ref{con-GN-non-empty-tree}, $G_N$ is a non-empty tree rooted at $new$ with node set $N \cup F_N = N \cup F_R$, where $N$ contains nodes that have not been in the data structure before $t$.
Since $parent.c_i$ is the only field changed by $s$, $s$ replaces a connected sub-graph with node set $R \cup F_R$ by a connected sub-graph with node set $N \cup F_R$.
We prove that $parent$ was in the data structure just before $t$.
Since $s$ modifies $parent.c_i$, just before $t$, $parent$ must not have been finalized.
Thus, no \sct$(V', R', fld', new')$ with $parent \in R'$ can be linearized before $t$.
By inductive Claim~\ref{claim-dotreeup-finalized-before-removed}, Constraint~\conmarkallremovedrecs\ is satisfied at all times before $t$, so $parent$ cannot be removed from the data structure before $t$.
By inductive Claim~\ref{claim-dotreeup-llt-sct-preconditions}, the precondition of the \llt$(parent)$ linked to $s$ implies that $parent$ was initiated, so $parent$ was in the data structure just before $t$.
Since no node in $N$ is in the data structure before $t$, the \rec s added by $s$ are precisely those in $N$.
Since, just before $t$, no node in $R$ is in $F$, or a descendent of a node in $F$, and every $r \in \{$descendents of $old\}-R$ is reachable from a node in $F_R$, the \rec s removed by $s$ are precisely those in $R$.

Now, to prove Claim~\ref{claim-dotreeup-tree}, we need only show that $parent.c_i$ is the root of a sub-tree just after $t$.
%%argue parent is not in any subtree rooted at an element of F
We have argued that $old$ is the root of $G_R$ just before $t$.
Since $parent.c_i$ points to $old$ just before $t$, the inductive hypothesis implies that $old$ is the root of a subtree, and $parent$ is not a descendent of $old$.
%
%Since $parent.c_i$ points to $old$ just before $t$, the inductive hypothesis implies that $old$ is a child of $parent$ just before $t$.
%%%% this next step appears to rely on the particular case that we are in
%By Property~\ref{con-R-non-empty-then-GR-a-non-empty-tree}, $G_R$ is a tree rooted at $old$, and we have argued that it is a sub-graph of the data structure just before $t$.
%Thus, just before $t$, $parent$ is not a descendent of $old$. %% this sounds like what we want to export for our case
Therefore, just before $t$, $parent$ is not in any sub-tree rooted at a node in $F_R$.
This implies that no descendent of $old$ is changed by $s$.
%Therefore, for each \node\ $r \in F$, no descendent of $r$ is changed by $s$.
By inductive Claim~\ref{claim-dotreeup-tree}, each $r \in F$ is the root of a sub-tree just before $t$, so each $r \in F$ is the root of a sub-tree just after $t$.
%
%%%%%%%%%%%%%%%%%%%%%%%%%%%%%%%%%%%%%%%%%%%
%
Finally, since Condition~C\ref{con-GN-non-empty-tree} states that $G_N$ is a non-empty down-tree rooted at $new$, and we have argued that $G_N$ has node set $N \cup F_R$, $parent$ is the root of a sub-tree just after $t$.

The two other cases, where $R = \emptyset$ and $old = \nil$ (as in Fig.~\ref{fig-replace-subtree2}(a)), and where $R = \emptyset$ and $old \neq \nil$ (as in Fig.~\ref{fig-replace-subtree2}(b)), are similar (and substantially easier to prove).
\end{proof}

\begin{obs} \label{obs-dotreeup-satisfies-con-mark-all-removed-recs}
Constraint~\conmarkallremovedrecs\ is implied by Lemma~\ref{lem-dotreeup-constraints-invariants}.\ref{claim-dotreeup-finalized-before-removed}.
\end{obs}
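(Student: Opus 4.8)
[Proof sketch of Observation~\ref{obs-dotreeup-satisfies-con-mark-all-removed-recs}]
The plan is to read off Constraint~\conmarkallremovedrecs\ as a direct corollary of Lemma~\ref{lem-dotreeup-constraints-invariants}.\ref{claim-dotreeup-finalized-before-removed} together with the semantics of \sct. Recall that Constraint~\conmarkallremovedrecs\ asserts that a \rec\ is finalized exactly when it is removed from the tree, so we must establish a biconditional, matched up in time.

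First I would observe that both relevant events are caused only by linearized invocations of \sct: a \rec\ is removed from the data structure only by a linearized \sct\ (the remark following Definition~\ref{defn-rec-in-added-removed}), and, by the specification of the primitives in Appendix~\ref{primitives-spec}, a \rec\ $r$ becomes finalized only when (and exactly when) an \sct$(V, R, fld, new)$ with $r \in R$ is linearized. So it suffices to show that, for every linearized invocation $S$ of \sct$(V, R, fld, new)$, the set of \rec s finalized by $S$ equals the set of \rec s removed by $S$, and that these happen at the same step, namely the linearization point of $S$.

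By the specification of \sct, the set of \rec s finalized by $S$ is precisely $R$. By Lemma~\ref{lem-dotreeup-constraints-invariants}.\ref{claim-dotreeup-finalized-before-removed}, the \rec s removed by $S$ are also precisely those in $R$, and this happens at the linearization point of $S$ (the single shared-memory modification performed by the tree update operation that invokes $S$). Hence the two sets coincide and the two events occur together, which gives both directions of Constraint~\conmarkallremovedrecs: if a \rec\ is removed, the removing \sct\ finalized it at that step, and if a \rec\ is finalized, the finalizing \sct\ removed it at that step.

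There is essentially no obstacle here: the work has already been done inside the induction of Lemma~\ref{lem-dotreeup-constraints-invariants}, where Claim~\ref{claim-dotreeup-finalized-before-removed} was proved by carefully tracking which nodes of $G$, $G_R$, and $G_N$ are in the data structure just before and just after the linearization point. The only care needed in writing this observation is to state explicitly that the correspondence between $R$ and the finalized set comes from the primitive specification (not from the template), and that the matching is at the granularity of a single step, so that ``when (and only when)'' is justified and not merely ``if and only if.''
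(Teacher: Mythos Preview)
Your argument is correct. The paper provides no proof for this observation; it is stated as immediate from Lemma~\ref{lem-dotreeup-constraints-invariants}.\ref{claim-dotreeup-finalized-before-removed}, and your proposal simply spells out the implicit reasoning (matching the set $R$ finalized by a linearized \sct\ with the set of \rec s that the lemma identifies as removed by that \sct).
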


We call a tree update operation \textbf{successful} if it performs a linearized invocation of \sct\ (which either returns \true, or does not terminate).
%Note that each successful tree update operation performs exactly one linearized invocation of \sct. %, and each successful tree update operation is effective.
We linearize each successful tree update operation at its linearized invocation of \sct.
Clearly, each successful tree update operation is linearized during that operation.

\begin{thm} \label{thm-dotreeup-linearizable}
At every time $t$, the tree $T$ rooted at $root$ is the same as the tree $T_L$ that would result from the atomic execution of all tree update operations linearized up until $t$, at their linearization points.
Moreover, the return value of each tree update operation is the same as it would be if it were performed atomically at its linearization point.
\end{thm}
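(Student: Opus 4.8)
The plan is to prove the two statements together by induction over the configurations of the execution, maintaining as the invariant that $T=T_L$. Both $T$ and $T_L$ change only at a linearization point of a successful tree update operation: $T_L$ by definition, and $T$ by parts~\ref{claim-dotreeup-finalized-before-removed} and~\ref{claim-dotreeup-tree} of Lemma~\ref{lem-dotreeup-constraints-invariants}, which say the only tree-modifying steps are linearized invocations of \sct. Hence between consecutive linearization points there is nothing to prove, and at the start $T$ and $T_L$ are both the initial tree, establishing the base case. So it suffices to show: when a successful operation $O$, performing \sct$(V,R,fld,new)$, is linearized at time $t^*$, and $T=T_L$ holds just before $t^*$, the atomic execution of $O$ on $T_L$ at $t^*$ performs exactly the modification to $T_L$ that $O$'s real \sct\ performs to $T$, and returns the same value.

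The crux is the auxiliary claim (property C\ref{prop-corr-lin-same-llt} of the body): run atomically at $t^*$, $O$ would follow the same pointers out of $entry$, call \llt\ on the same sequence $\sigma=\langle n_0,n_1,\ldots\rangle$, and each \llt\ would return the same snapshot as in the real run. I would prove this by induction along $\sigma$. Since $O$'s \sct\ is linearized, Lemma~\ref{lem-dotreeup-constraints-invariants}.\ref{claim-dotreeup-G-in-data-structure} (with $parent\in V$ by PC\ref{con-parent-in-V}) together with Lemma~\ref{lem-if-rec-traversed-then-rec-in-data-structure} shows the nodes and edges $O$ used to descend from $entry$ to $n_0$ are present in $T$ at $t^*$, so the same \func{Read}s are obtained. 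Inductively, for each $n_i$ already shown to be reached with the same data: the semantics of \sct\ guarantee that no \sct\ modifies $n_i$ between the \llt$(n_i)$ linked to $O$'s \sct\ and $t^*$, and Lemma~\ref{lem-dotreeup-constraints-invariants}.\ref{claim-dotreeup-G-in-data-structure} guarantees $n_i$ is in the tree throughout that interval, so an atomic \llt$(n_i)$ at $t^*$ returns the same snapshot; \func{NextNode} and \func{Condition} are deterministic functions of the snapshots, the immutable fields read, and $args$, so they pick the same $n_{i+1}$ and make the same loop-exit decision.

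Granting the auxiliary claim, \func{\sct-Arguments}, again a deterministic function of the same data, returns the same $V$, $R$, $fld$, $new$. Lemma~\ref{lem-dotreeup-constraints-invariants}.\ref{claim-dotreeup-llt-sct-preconditions} gives that this \sct\ is invoked with valid arguments satisfying its preconditions; at $t^*$, $parent\in V$ is in $T$, $fld$ still contains the value $old$ read by $O$'s linked \llt$(parent)$, and (when $R\neq\emptyset$) $G_R$ together with $F_R=F_N$ is present in $T$, by Lemma~\ref{lem-dotreeup-constraints-invariants}.\ref{claim-dotreeup-G-in-data-structure} and PC\ref{con-R-non-empty-then-GR-a-non-empty-tree}; so the atomic \sct\ succeeds and, by Lemma~\ref{lem-dotreeup-constraints-invariants}.\ref{claim-dotreeup-finalized-before-removed}, replaces the connected subgraph on $R\cup F_N$ by the one on $N\cup F_N$ exactly as the real \sct\ did to $T$. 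This preserves $T=T_L$. For return values, $O$'s result is computed by \func{Result} from the \llt\ snapshots, the immutable fields, $args$, and the fact that the \sct\ succeeded, all identical in the atomic run, so the two results agree; an operation that returns \fail\ is not linearized, modifies nothing, and is consistent trivially.

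I expect the main obstacle to be the auxiliary claim, specifically the two points where \sct\ semantics offer no direct protection: the plain \func{Read}s used to descend from $entry$ to $n_0$, and any node of $\sigma$ that is \llt'd but left out of $V$. For the reads one must argue the descent path is stable enough at $t^*$ (I expect this from Lemma~\ref{lem-if-rec-traversed-then-rec-in-data-structure} plus the fact that $n_0$ and the relevant path hang below nodes of $V$ whose child pointers have not changed), and for the $\sigma\setminus V$ case one must use postconditions PC\ref{con-llt-on-all-nodes-in-V} to PC\ref{con-R-non-empty-then-GR-a-non-empty-tree} (noting that in our chromatic tree $\sigma$ and $V$ contain the same nodes, so the issue does not arise there). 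A secondary subtlety is that this argument and the proof of Lemma~\ref{lem-dotreeup-constraints-invariants} are both inductions over the execution, so one must check that the tree invariant is available just before $t^*$ precisely where it is used.
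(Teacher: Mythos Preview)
Your approach is essentially the paper's: induct on the sequence of linearized \sct s, use the \sct\ semantics to argue that each \llt$(r)$ returns the same snapshot it would if performed atomically at the linearization point, then appeal to determinism of \func{NextNode}, \func{Condition}, \func{\sct-Arguments} and \func{Result}. The paper runs an inner induction on the index $k$ of the \llt, which is exactly your ``induction along $\sigma$.''

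Two remarks on the obstacles you flag. First, in the appendix the paper writes the template operation as $OP(top, args)$ with $top$ passed in as an explicit argument, so the reads from $entry$ to $n_0$ are outside the operation being linearized; the base case is then simply $a^1 = a^1_L = top$, and your first obstacle evaporates by fiat. Second, your $\sigma\setminus V$ concern is well-founded: the \sct\ semantics only guarantee that nodes in $V$ are unchanged between their linked \llt s and $S$, yet the paper's proof asserts the stability claim ``for each \llt$(r)$ performed by $O$'' without further argument. So the paper glosses over exactly the gap you identify; it is harmless for the chromatic tree, where $V$ always coincides with $\sigma$, but would need an additional argument for a template instance with $V \subsetneq \sigma$.
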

\begin{proof}
The steps that affect $T$ are linearized invocations of \sct.
The steps that affect $T_L$ are successful tree update operations, each of which is linearized at the linearized invocation of \sct\ that it performs.
Thus, the steps that affect $T$ and $T_L$ are the same.
We prove this claim by induction on the sequence of linearized invocations of \sct.
Clearly, the claim holds before any linearized invocation of \sct\ has occurred.
We suppose it holds just before some linearized invocation $S$ of \sct$(V, R, fld, new)$, and prove it holds just after $S$.
Consider the tree update operation $OP(top, args)$ that performs $S$.
Let $O_L$ be the tree update operation $OP(top, args)$ in the linearized execution that occurs at $S$.
By the inductive hypothesis, before $S$, $T = T_L$.
To show that $T = T_L$ holds just after $S$, we must show that $O$ and $O_L$ perform invocations of \sct\ with exactly the same arguments.
In order for the arguments of these \sct s to be equal, the \func{SCX-Arguments}$(s_0, ..., s_i, args)$ computations performed by $O$ and $O_L$ must have the same inputs.
Similarly, in order for the return values of $O$ and $O_L$ to be equal, their \func{Result}$(s_0, ..., s_i, args)$ computations must have the same inputs.
Observe that $O$ and $O_L$ have the same $args$, and $s_0, ..., s_i$ consist of the return values of the \llt s performed by the operation, and the immutable fields of the nodes passed to these \llt s.
Therefore, it suffices to prove that the arguments and return values of the \llt s performed by $O$ and $O_L$ are the same.

We prove that each \llt\ performed by $O$ returns the same value as it would if it were performed atomically at $S$ (when $O_L$ is atomically performed).
Since $S$ is linearized, for each $\llt(r)$ performed by $O$, no invocation of \sct$(V', R', fld', new')$ with $r \in V'$ is linearized between when this $\llt$ is linearized and when $S$ is linearized.
Thus, the $\llt(parent)$ performed by $O$ returns the same result that it would if it were performed atomically when $S$ occurs.

Let $I^k$ and $I_L^k$ be the $k$th \llt s by $O$ and $O_L$, respectively, $a^k$ and $a_L^k$ be the respective arguments to $I^k$ and $I_L^k$, and $v^k$ and $v_L^k$ be the respective return values of $I^k$ and $I_L^k$.
%We denote the prefix length $k$ of a sequence $s$ by $s^k$.
We prove by induction that $a^k = a_L^k$ and $v^k = v_L^k$ for all $k \ge 1$.

\textbf{Base case:}
Since $O$ and $O_L$ have the same $top$ argument, $a^1 = a_L^1 = top$.
Since each \llt\ performed by $O$ returns the same value as it would if it were performed atomically at $S$, $v^1 = v_L^1$.

\textbf{Inductive step:}
Suppose the inductive hypothesis holds for $k-1$ ($k > 1$). %$a^{k-1} = a_L^{k-1}$ and $v^{k-1} = v_L^{k-1}$ for $k > 1$.
The \func{NextNode} computation from which $O$ obtains $a^k$ depends only on $v^1, ..., v^{k-1}$ and the immutable fields of nodes $a^1, ..., a^{k-1}$.
Similarly, the \func{NextNode} computation from which $O_L$ obtains $a_L^k$ depends only on $v_L^1, ..., v_L^{k-1}$ and the immutable fields of nodes $a_L^1, ..., a_L^{k-1}$.
Thus, by the inductive hypothesis, $a^k = a_L^k$.
Since each \llt\ performed by $O$ returns the same value as it would if it were performed atomically at $S$, $v^k = v_L^k$.
Therefore, the inductive hypothesis holds for $k$, and the claim is proved.
\end{proof}

\begin{lem} \label{lem-dotreeupdate-rec-cannot-be-added-after-removal}
After a \rec\ $r$ is removed from the data structure, it cannot be added back into the data structure.
\end{lem}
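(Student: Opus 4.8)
The plan is a short proof by contradiction that reuses two facts about linearized \sct s already established in Lemma~\ref{lem-dotreeup-constraints-invariants}. First I would recall from Definition~\ref{defn-rec-in-added-removed} (and the remark following it) that a \rec\ can be removed from, or added to, the data structure only by a linearized invocation of \sct, and that ``$r$ is removed by step $s$'' means $r$ is in the data structure immediately before $s$ and not immediately after. So I would suppose, for contradiction, that $r$ is removed from the data structure by a linearized \sct\ $S_1$ and, at some later configuration, added back by a linearized \sct\ $S_2$. Since the linearized invocations of \sct\ are totally ordered, $S_1$ is linearized before $S_2$.

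Next I would invoke Lemma~\ref{lem-dotreeup-constraints-invariants}, Claim~\ref{claim-dotreeup-finalized-before-removed}, for $S_2$: the \rec s added by $S_2$ are exactly the nodes of its new-node set $N$, so $r \in N$. Then Claim~\ref{claim-dotreeup-G-in-data-structure} of the same lemma, applied to $S_2$, says that no node of $N$ is in the data structure at any configuration before $S_2$ is linearized; in particular $r$ is not. But $S_1$ removes $r$, so $r$ is in the data structure in the configuration immediately before $S_1$, and, because $S_1$ is linearized strictly before $S_2$, that configuration precedes $S_2$'s linearization point. This contradicts the previous sentence, which finishes the proof. (The same reasoning applied to every pair of a removal and a later addition yields the statement as phrased, namely that $r$ cannot be added back after any removal.)

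I expect no genuine obstacle here: the argument is essentially bookkeeping, because all the substantive work---that removed \rec s are finalized, and that the members of an $N$ set are freshly allocated and hence were never previously in the tree---is already encapsulated in Lemma~\ref{lem-dotreeup-constraints-invariants}. The only points that need a little care are applying Claims~\ref{claim-dotreeup-G-in-data-structure} and~\ref{claim-dotreeup-finalized-before-removed} to the \emph{adding} invocation $S_2$ rather than to $S_1$, and noting that ``immediately before $S_1$'' is indeed a configuration preceding $S_2$'s linearization point, which is immediate from the total order on linearized \sct s.
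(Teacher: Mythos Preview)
Your proposal is correct and follows essentially the same approach as the paper: both arguments use Lemma~\ref{lem-dotreeup-constraints-invariants}, Claim~\ref{claim-dotreeup-finalized-before-removed} to conclude that any \rec\ added by a linearized \sct\ lies in its $N$ set, and then Claim~\ref{claim-dotreeup-G-in-data-structure} to observe that members of $N$ were never in the data structure before that \sct, contradicting the fact that $r$ was in the data structure before its removal. The only cosmetic difference is that the paper phrases it directly (``no such adding \sct\ can exist'') rather than naming $S_1$ and $S_2$ and deriving an explicit contradiction.
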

\begin{proof}
Suppose $r$ is removed from the data structure.
The only thing that can add $r$ back into the data structure is a linearized invocation $S$ of \sct$(V, R, fld, new)$.
By Constraint~\ref{con-dotreeup-exclusively-does-sct}, such an \sct\ must occur in a tree update operation $O$.
By Lemma~\ref{lem-dotreeup-constraints-invariants}.\ref{claim-dotreeup-finalized-before-removed}, every \rec\ added by $S$ is in $O$'s $N$ set.
By Lemma~\ref{lem-dotreeup-constraints-invariants}.\ref{claim-dotreeup-G-in-data-structure}, no node in $N$ is in the data structure at any time before $S$ is linearized.
Thus, $r$ cannot be added to the data structure by $S$.
\end{proof}

Recall that a process $p$ \textit{sets up} an invocation $S$ of
\sct$(V, R, fld, new)$ if it performs a \llt$(r)$ for each $r \in V$, which each return a snapshot, and then invokes $S$.
In the companion paper \cite{paper1}, we make a general assumption that there is a bound on the number of times that any process will perform an \llt$(r)$ that returns \finalized, for any \rec\ $r$.
Under this assumption, we prove that
if a process sets up 
%\eric{has this been defined in this paper?}
invocations of \sct \ infinitely often, then invocations of \sct \ will succeed infinitely often.
We would like to use this fact to prove progress.
Thus, we must show that our algorithm respects this assumption.
%We do this in the next result.
%We now prove a technical lemma that we will use to prove progress.

%We now prove that our algorithm satisfies an assumption that we made in the paper.  \textbf{[[[must do the same for the multiset!!!  is there a reasonable way to factor this lemma out of both of these proofs?  I remember trying to find a way, and deciding that it wasn't reasonable to add more confusing constraints, but I'm not sure anymore...]]]}

%\eric{This next lemma doesn't quite match what we said.  We said in every call of a tree update template, the loop's exit condition eventually returns true.  So it is finite but not bounded.  But can't you instead use the fact that there is at most ONE LLT in any tree update op that returns finalized?  Then you should be able to do the proof without having a bound on the length of $\sigma$.
%
%There is a discrepency between the precondition of dotreeupdate (which you use to prove the following lemma) and the description of that precond in the body of the text (I marked the spot with a comment).  When I asked you about this earlier, I thought you said you wanted the version in the text, not the one in the code.  I don't care which one, just make it consistent.  (It seems like you need the simpler, stronger version in the code to make the following proof work.)
%
%Also, do you mean no process performs more than ... DURING TREE UPDATE OPERATIONS?  (You didn't say that processes cannot do \llt s outside tree update ops.}

\begin{lem} \label{lem-dotreeup-only-one-finalized-llt}
%Let $k$ be the size of the largest sequence $\sigma$ for any tree update operation in an execution.
No process performs more than one invocation of $\llt(r)$ that returns \finalized, for any $r$, during tree update operations.
\end{lem}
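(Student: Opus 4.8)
The plan is to combine three facts: (i) by the semantics of \llt\ (the third bullet in Appendix~\ref{primitives-spec}), a linearized \llt$(r)$ returns \finalized\ if and only if it is linearized after some \sct$(V,R,fld,new)$ with $r\in R$; (ii) by Constraint~\conmarkallremovedrecs\ (equivalently Observation~\ref{obs-dotreeup-satisfies-con-mark-all-removed-recs}, which follows from Lemma~\ref{lem-dotreeup-constraints-invariants}.\ref{claim-dotreeup-finalized-before-removed}), a \rec\ is finalized exactly when it is removed from the data structure, and by Lemma~\ref{lem-dotreeupdate-rec-cannot-be-added-after-removal} a removed \rec\ is never re-added; and (iii) the structure of the tree update template in Figure~\ref{code-dotreeupdate}: a tree update operation $O$ performs \llt s only on nodes $n_0, n_1, \ldots$, where $n_0$ is reached by following child pointers from $entry$, and each subsequent $n_{i+1}$ is a non-\nil\ child pointer read from the snapshot of some earlier $n_j$.

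First I would fix a \rec\ $r$ and a process $p$, and suppose for contradiction that $p$ performs two invocations $I_1$ and $I_2$ of \llt$(r)$, say with $I_1$ before $I_2$, both returning \finalized. Each of these \llt s occurs inside some tree update operation by $p$ (by Constraint~\ref{con-dotreeup-exclusively-does-sct} and the template); call them $O_1$ and $O_2$. Since $I_2$ returns \finalized, $r$ has been finalized, hence removed from the data structure, hence (by Lemma~\ref{lem-dotreeupdate-rec-cannot-be-added-after-removal}) $r$ is never in the data structure again after $I_1$.

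The crux of the argument is: how did $O_2$ decide to perform an \llt\ on $r$ in the first place? By the template, either $r = n_0$ for $O_2$, i.e.\ $O_2$ reached $r$ by following child pointers from $entry$, or $r = n_{i+1}$ for some $i$, i.e.\ $O_2$ read a pointer to $r$ from the snapshot returned by an earlier \llt\ in $O_2$ on a node $n_j$ that was in the data structure (by Lemma~\ref{lem-rec-in-data-structure-just-before-llt}, the earlier \llt\ returned a snapshot, so $n_j$ was in the tree when it was linearized) with $r$ among its children. In the first case, I would invoke Lemma~\ref{lem-if-rec-traversed-then-rec-in-data-structure} to conclude $r$ was in the data structure at some configuration during $O_2$'s traversal from $entry$. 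In the second case, since $n_j$ was in the tree with a child pointer to $r$ at the linearization point of that earlier \llt, $r$ was in the tree at that time too. Either way, $r$ is in the data structure at some point during $O_2$, hence after $I_1$ (which precedes $O_2$'s relevant steps, or at least $I_1 \le$ the time $r$ is shown in the tree — one must be slightly careful about the ordering here: $I_1$ is linearized while $r$ is already finalized, and $I_2$ is a later operation by the same process, so everything $O_2$ does happens after $I_1$ returns). This contradicts the conclusion of the previous paragraph that $r$ is never in the data structure after $I_1$.

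The main obstacle I expect is the bookkeeping in the second case: making precise that if $O_2$ obtains a pointer to $r$ from the snapshot of $n_j$, then $r$ was genuinely a child of $n_j$ in the tree at a real configuration, and that this configuration is after $I_1$. This needs the \llt\ semantics (a snapshot reflects the last \sct-installed values of $n_j$'s mutable fields linearized before the \llt) together with Lemma~\ref{lem-rec-in-data-structure-just-before-llt} applied to $n_j$; one also has to rule out the degenerate possibility that $n_j$ itself is $r$ or was removed in a way that still leaves a stale pointer — but Lemma~\ref{lem-if-rec-traversed-then-rec-in-data-structure} is exactly designed to push such reasoning all the way back to $entry$, so chaining it from $entry$ down to $r$ via the \llt-read pointers of $O_2$ gives a clean statement that $r$ was in the tree at some configuration no earlier than the start of $O_2$. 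Since $I_1$ (and its \finalized\ return, hence $r$'s removal) strictly precedes everything $O_2$ does, this is the desired contradiction.
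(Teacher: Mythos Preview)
Your proposal is correct and follows essentially the same route as the paper's proof: set up two \llt$(r)$'s returning \finalized, use the semantics of \llt/\sct\ together with Lemma~\ref{lem-dotreeup-constraints-invariants}.\ref{claim-dotreeup-finalized-before-removed} and Lemma~\ref{lem-dotreeupdate-rec-cannot-be-added-after-removal} to conclude $r$ is permanently out of the tree before $O_2$ begins, and then derive a contradiction by case-splitting on whether $r$ was the initial node $n_0$ of $O_2$ (handled via Lemma~\ref{lem-if-rec-traversed-then-rec-in-data-structure}) or was obtained from the snapshot of an earlier \llt\ in $O_2$ (handled via Lemma~\ref{lem-rec-in-data-structure-just-before-llt}).

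Two small points worth tightening. First, you should state explicitly that $O_1\neq O_2$: the template returns \fail\ immediately after any \llt\ returns \finalized, so a single tree update operation performs at most one such \llt; the paper makes this step explicit. Second, your sentence ``Since $I_2$ returns \finalized, $r$ has been finalized, hence removed \ldots hence $r$ is never in the data structure again after $I_1$'' mixes $I_1$ and $I_2$; the cleaner phrasing (which the paper uses) is that $I_1$ already witnesses that some \sct\ with $r$ in its $R$-set was linearized before $I_1$, so $r$ is out of the tree from that point on, and in particular throughout $O_2$. Also, Constraint~\ref{con-dotreeup-exclusively-does-sct} concerns \sct s, not \llt s; the fact that $I_1,I_2$ occur inside tree update operations is simply part of the lemma's hypothesis.
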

\begin{proof}
Fix any \rec\ $r$.
Suppose, to derive a contradiction, that a process $p$ performs two different invocations $L$ and $L'$ of $\llt(r)$ that return \finalized, during tree update operations.
Then, since each tree update operation returns \fail\ immediately after performing an $\llt(r)$ that returns \finalized, $L$ and $L'$ must occur in different tree update operations $O$ and $O'$.
Without loss of generality, suppose $O$ occurs before $O'$.
Since $L$ returns \finalized, it occurs after an invocation $S$ of \sct$(\sigma, R, fld, new)$ with $r \in R$ (by the semantics of \sct).
By Lemma~\ref{lem-dotreeup-constraints-invariants}.\ref{claim-dotreeup-finalized-before-removed}, $r$ is removed from the data structure by $S$.
By Lemma~\ref{lem-dotreeupdate-rec-cannot-be-added-after-removal}, $r$ cannot be added back into the data structure.
Thus, $r$ is not in the data structure at any time after $S$ occurs.
%%%%%%%%%%%%%%%%%%%%%%%%%%%%%%%%%%%%%%%%%%%%%
By the precondition of the tree update template, the argument $top$ to $O'$ was obtained by following child pointers from the $root$ entry point since $O$.
By Observation~\ref{obs-dotreeup-satisfies-con-mark-all-removed-recs}, Constraint~\conmarkallremovedrecs\ is satisfied.
Thus, Lemma~\ref{lem-if-rec-traversed-then-rec-in-data-structure} implies that $top$ is in the data structure at some point between $O$ and $O'$.
This implies that $r \neq top$.
Since $O'$ performs $L'$, and $r \neq top$, $O'$ must have obtained $r$ from the return value of an invocation of \llt$(r')$ that $O'$ performed before $L'$.
%
%Since $O'$ performs an $\llt(r)$, $r \in \sigma - \{top}$.
%Therefore, Constraint~\conDoTreeUp\sigmaFrom\ of \dotreeupdate\ implies that $r$ was obtained from the values returned by some invocation $L''$ of \llt$(r')$ performed by $O'$ before $L'$.
By Lemma~\ref{lem-rec-in-data-structure-just-before-llt}, $r'$ must be in the data structure just before $L''$ is linearized.
Since $L''$ returns $r$, $r'$ also points to $r$ just before $L''$ is linearized.
Therefore, $r$ is in the data structure just before $L''$, which contradicts the fact that $r$ is not in the data structure at any time after $S$ occurs.
\end{proof}

\begin{lem} \label{lem-dotreeup-wait-free}
Tree update operations are wait-free.
\end{lem}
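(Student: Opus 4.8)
The plan is to walk through the tree update template in Figure~\ref{code-dotreeupdate} and show that a single invocation \func{UP}$(args)$ returns after the invoking process $p$ takes only finitely many steps, irrespective of how other processes are scheduled. First I would note that reaching the starting node $n_0$ from $entry$ is a deterministic local computation following a fixed finite sequence of child pointers, and so contributes finitely many steps. Then I would analyze one iteration of the main loop: it performs a single \llt, tests whether that \llt\ returned \fail\ or \finalized, reads the immutable fields of the current node, evaluates \func{Condition}, and (when \func{Condition} returns \false) evaluates \func{NextNode}. Since \func{Condition} and \func{NextNode} are locally computable functions, each such evaluation is finite; by Lemma~\ref{lem-dotreeup-constraints-invariants}.\ref{claim-dotreeup-llt-sct-preconditions} the \llt\ call satisfies its precondition, and \llt\ is wait-free (from \cite{paper1}), so it terminates after $p$ takes finitely many steps. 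Hence each loop iteration completes after $p$ takes finitely many steps.

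Next I would bound the number of loop iterations. The loop is left in exactly two ways: some \llt\ returns \fail\ or \finalized\ (in which case \func{UP} immediately returns \fail), or \func{Condition} returns \true. By the template requirement stated in Section~\ref{sec-dotreeupdate}, \func{Condition} must eventually return \true\ in any execution of \func{UP}; therefore, in every execution, the loop is exited after finitely many iterations. Combining this with the previous paragraph, $p$ takes only finitely many steps from the start of \func{UP} up to and including the exit of the loop.

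Finally I would handle the remainder of the template: unless \func{UP} has already returned \fail, it evaluates \func{\sct-Arguments} (a finite local computation), performs a single invocation of \sct\ (which satisfies its precondition by Lemma~\ref{lem-dotreeup-constraints-invariants}.\ref{claim-dotreeup-llt-sct-preconditions} and is wait-free by \cite{paper1}), and then either evaluates \func{Result} (a finite local computation) and returns, or returns \fail. Thus \func{UP} always returns after $p$ takes finitely many steps, so tree update operations are wait-free. The only mildly delicate point is the finiteness of the loop, which rests entirely on the template requirement that \func{Condition} eventually returns \true\ together with the wait-freedom of \llt\ and \sct\ imported from \cite{paper1}; no reasoning about concurrent operations is needed, since the number of steps $p$ takes inside any single \llt\ or \sct\ invocation is already independent of other processes.
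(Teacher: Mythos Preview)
Your argument is correct and follows essentially the same approach as the paper: both proofs rest on the wait-freedom of \llt\ and \sct\ from \cite{paper1}, the finiteness of the local computations \func{Condition}, \func{NextNode}, \func{\sct-Arguments} and \func{Result}, and the template requirement that \func{Condition} eventually returns \true. Your version is simply a more detailed expansion of the paper's three-sentence proof.
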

\begin{proof}
A tree update operation consists of a loop in which \llt, \func{Condition} and \func{NextNode} are invoked (see Fig.~\ref{code-dotreeupdate}), an invocation of \sct, and some other constant, local work.
The implementations of \llt\ and \sct\ given in \cite{paper1} are wait-free.
Similarly, \func{NextNode} and \func{Condition} must perform finite computation, and \func{Condition} must eventually return \true\ in every tree update operation, causing the operation to exit the loop.
\end{proof}

\begin{thm} \label{thm-dotreeup-progress}
If tree update operations are performed infinitely often, then tree update operations succeed infinitely often.
\end{thm}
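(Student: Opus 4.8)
The plan is to argue by contradiction. Assume tree update operations are performed infinitely often but only finitely many of them are \emph{successful} (i.e.\ perform a linearized \sct). The ingredients I will use are Lemma~\ref{lem-dotreeup-wait-free} (tree update operations are wait-free), Lemma~\ref{lem-dotreeup-only-one-finalized-llt} (each process performs at most one \llt$(r)$ returning \finalized\ per \rec\ $r$), Lemma~\ref{lem-dotreeup-constraints-invariants}.\ref{claim-dotreeup-finalized-before-removed} together with Observation~\ref{obs-dotreeup-satisfies-con-mark-all-removed-recs}, and the progress guarantee imported from \cite{paper1}: since the bound in Lemma~\ref{lem-dotreeup-only-one-finalized-llt} holds, if some process sets up invocations of \sct\ infinitely often, then invocations of \sct\ succeed infinitely often. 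Recall also that every \sct\ is performed by a tree update operation (Constraint~\ref{con-dotreeup-exclusively-does-sct}) and each operation performs at most one \sct.

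First I would extract the easy consequences of the contradiction hypothesis. Only finitely many successful operations means only finitely many invocations of \sct\ are linearized, and since the tree changes only at linearized \sct s (Theorem~\ref{thm-dotreeup-linearizable} / Lemma~\ref{lem-dotreeup-constraints-invariants}), there is a configuration $C_0$ after which the tree never changes. Moreover, by Lemma~\ref{lem-dotreeup-constraints-invariants}.\ref{claim-dotreeup-finalized-before-removed} and Observation~\ref{obs-dotreeup-satisfies-con-mark-all-removed-recs}, a \rec\ is finalized exactly when it is removed, removal happens only at a linearized \sct, and each such \sct\ finalizes only the (finitely many) \rec s in its $R$ set; hence only finitely many \rec s are ever finalized in the whole execution.

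Next I would show that no process sets up \sct\ infinitely often: otherwise, by the \cite{paper1} guarantee, invocations of \sct\ would succeed infinitely often, and (Constraint~\ref{con-dotreeup-exclusively-does-sct}, and the fact that distinct linearized \sct s belong to distinct operations) infinitely many tree update operations would be successful, contradicting our assumption. Since there are finitely many processes, \sct\ is therefore set up --- and hence invoked --- only finitely often, so there is a configuration $C_1 \ge C_0$ after which no \sct\ is invoked. Since the implementation of \sct\ is wait-free, every invoked \sct\ terminates, so there is a configuration $C_2 \ge C_1$ after which no \sct\ is running; consequently no \llt\ beginning after $C_2$ returns \fail\ (an \llt\ can fail only while concurrent with an \sct\ on the same \rec). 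Now apply the hypothesis: by pigeonhole some process $p$ performs infinitely many tree update operations, each terminating by Lemma~\ref{lem-dotreeup-wait-free}, so $p$ begins infinitely many of them after $C_2$. Such an operation $O$ does not invoke \sct\ (none are invoked after $C_1$), so it must return \fail\ because one of its \llt s returned \fail\ or \finalized; by the choice of $C_2$ that \llt\ returned \finalized. Thus each of $p$'s infinitely many operations started after $C_2$ contains an \llt\ returning \finalized, so (different operations must touch different \rec s here) $p$ performs infinitely many \llt s that return \finalized. But by Lemma~\ref{lem-dotreeup-only-one-finalized-llt}, $p$ performs at most one \llt$(r)$ returning \finalized\ for each \rec\ $r$, and only finitely many \rec s are ever finalized --- a contradiction. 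Hence tree update operations succeed infinitely often.

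I expect the main obstacle to be the bootstrapping in the third paragraph and getting the quantifiers right: the chain ``few successful operations $\Rightarrow$ few finalized \rec s and no \sct\ set up after some point $\Rightarrow$ no \sct\ running after some point $\Rightarrow$ no \llt\ failures after some point $\Rightarrow$ every late operation fails at a \finalized\ \llt'' must be assembled in the correct dependency order ($C_0$, then $C_1$, then $C_2$), and each use of Lemma~\ref{lem-dotreeup-only-one-finalized-llt} and of the \cite{paper1} progress property must be justified exactly where ``only finitely many \rec s are finalized'' (which itself relies on the contradiction hypothesis) is available. Also slightly delicate is the appeal to the property that an \llt\ can only fail while a concurrent \sct\ on the same \rec\ is running; this is a property of the implementation in \cite{paper1} and is what lets wait-freedom of \sct\ rule out \fail\ returns after $C_2$. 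Everything else --- pigeonholing $p$, the finiteness of each $R$ set, and the equivalence between successful \sct s and successful operations --- is routine.
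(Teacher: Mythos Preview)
Your proof is correct, but it follows a different route from the paper's. The paper argues directly: once the tree stops changing, only finitely many \rec s are ever initiated, so Lemma~\ref{lem-dotreeup-only-one-finalized-llt} together with the finite number of processes yields a time $t_2$ after which no \llt\ in a tree update operation returns \finalized; the paper then concludes that tree update operations after $t_2$ set up \sct s infinitely often and applies the imported progress property from \cite{paper1} to obtain infinitely many successful \sct s, a contradiction. You instead apply the \emph{contrapositive} of that imported property up front to conclude that no process sets up \sct\ infinitely often, deduce that \sct s are invoked (and, by wait-freedom, running) only finitely often, and then use an additional implementation fact---that an \llt\ can fail only while concurrent with an \sct---to force every late operation to fail on a \finalized\ return; the contradiction then comes from Lemma~\ref{lem-dotreeup-only-one-finalized-llt} and the finiteness of the set of finalized \rec s.

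What each approach buys: your version makes the handling of \fail\ returns fully explicit, at the cost of importing one more implementation property from \cite{paper1} that is not stated in this paper; the paper's version is shorter but is silent about why \llt s would not return \fail\ after $t_2$ (it effectively packages that issue inside the imported progress guarantee). Two minor remarks: the pigeonhole on a single process $p$ is not needed---the counting argument works across all processes at once, since there are finitely many processes and finitely many finalized \rec s---and your appeal to wait-freedom of \sct\ should really be read as ``eventually no \op\ is in progress,'' which in \cite{paper1}'s implementation relies on helping rather than on the invoking process running to completion.
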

\begin{proof}
Suppose, to derive a contradiction, that tree update operations are performed infinitely often but, after some time $t_0$, no tree update operation succeeds.
Then, after some time $t_1 \ge t_0$, no tree update operation is successful.
Since a tree update operation is successful if and only if it performs a linearized \sct, no invocation of \sct\ is linearized after $t_1$.
Therefore, after $t_1$, the data structure does not change, so only a finite number of \rec s are ever initiated in the execution.
%By Lemma~\ref{lem-dotreeup-only-one-finalized-llt}, there is a finite number of times that any process will perform an invocation of \llt\ that returns \finalized.
%Thus, after some time $t_2 \ge t_1$, no invocation of \llt\ will return \finalized.
By Lemma~\ref{lem-dotreeup-only-one-finalized-llt}, after some time $t_2 \ge t_1$, no invocation of \llt\ performed by a tree update operation will return \finalized.
Consider any tree update operation $O$ (see Figure~\ref{code-dotreeupdate}).
First, \llt s are performed on a sequence of \rec s.
If these \llt s all return values different from \fail\ or \finalized, then an invocation of \sct\ is performed.
If this invocation of \sct\ is successful, then $O$ will be successful. %immediately return.
Since tree update operations are performed infinitely often after $t_2$, Definition~\ref{defn-set-up-sct} implies that invocations of \sct\ are set up infinitely often.
Thus, invocations of \sct\ succeed infinitely often.
Finally, Constraint~\ref{con-dotreeup-exclusively-does-sct} implies that tree update operations succeed infinitely often, which is a contradiction.
\end{proof}

%\eric{I didn't think it was necessary to gather all the main lemmas into one theorem, so I commented it out.  Instead, I upgraded the most important lemmas \ref{thm-dotreeup-progress} and \ref{thm-dotreeup-linearizable} to theorems. }

\section{Chromatic search trees}
\label{chromatic}

\subsection{Pseudocode}
\label{sec-pseudocode}

\begin{wrapfigure}{R}{0.4\textwidth}
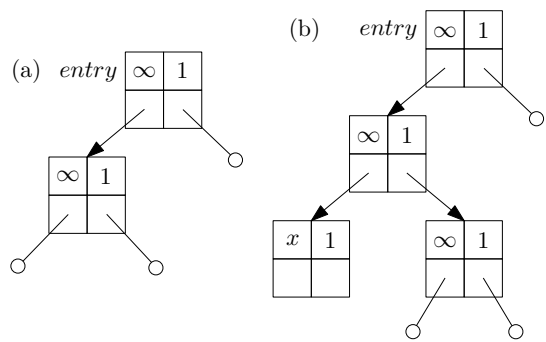

%\begin{minipage}{0.49\textwidth}
\vspace{-2mm}
\raisebox{8.5mm}{\includegraphics[scale=.9]{figures/treetop-empty.pdf}}\hspace*{3mm}
\includegraphics[scale=.9]{figures/treetop-nonempty.pdf}
%\end{minipage}
\caption{(a) initial tree; (b) non-empty tree.
Weights are on the right.}
\label{fig-treetop}
\end{wrapfigure}

To avoid special cases when applying operations near the root, we add sentinel nodes with a special key $\infty$ that is larger than any key  in the dictionary.
When the dictionary is empty, it looks like Fig.~\ref{fig-treetop}(a).
When non-empty, it looks like Fig.~\ref{fig-treetop}(b), with all dictionary keys stored in
a chromatic tree rooted at $x$.
%After an \ins%Whenever the dictionary is non-empty
%, the tree looks like Fig.~\ref{fig-treetop}(b).
%Deleting all keys returns the tree to its initial state.
The nodes shown in Fig.~\ref{fig-treetop} always have weight one.
%Any transformation that changes one of these weights blindly sets it to one.

Pseudocode for the dictionary operations on chromatic search trees
are given in Figure~\ref{code-chromatic1} and \ref{code-chromatic2}.
As described in Section~\ref{sec-chromatic}, 
a search for a key simply uses reads of child pointers to locate a leaf, as in an
ordinary (non-concurrent) BST.
\ins\ and \del\ search for the location to perform the update and then call \tryins\ or \trydel,
which each follow the tree update template to swing a pointer that accomplishes the update.  
This is repeated until the update is successful, and the
update then calls \cleanup, if necessary, to remove the violation the update created.
The tree transformations that accomplish the updates are the first three shown in Figure~\ref{fig-chromatic-rotations}.
\func{Insert1} adds a key that was not previously present in the tree; one of the two new leaves contains the new key, and the other contains the key (and value) that were stored in $\ux$.
\func{Insert2} updates the value associated with a key that is already present in the tree by
creating a new copy of the leaf containing the key.
The \func{Delete} transformation
deletes the key stored in the left child of $\ux$.  (There is a symmetric version
for deleting keys stored in the right child of a node.)

\begin{figure}[tbp]
\def\figsize{1}
\centering
%\begin{tabularx}{\textwidth}{YYY}
%\includegraphics[scale=1]{figures/INS1.pdf} & \includegraphics[scale=1]{figures/INS2.pdf} & \includegraphics[scale=1]{figures/DEL.pdf} \\
%\ins 1 & \ins 2 & \del
%\end{tabularx}
%\centering
%
%\vspace{-1.2cm}
\begin{tabularx}{\textwidth}{YY}
\includegraphics[scale=\figsize]{figures/INS1.pdf} & \hspace{-0.4cm}\includegraphics[scale=\figsize]{figures/INS2.pdf} \\
%\ins 1 & \ins 2 \\
\hspace{-0.2cm}\includegraphics[scale=\figsize]{figures/DEL.pdf} & \hspace{-0.4cm}\includegraphics[scale=\figsize]{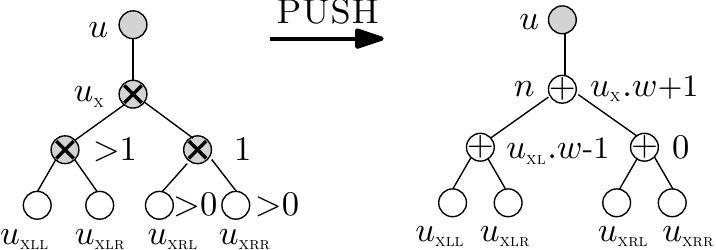} \\
%\del & \func{PUSH} \\
\hspace{-1cm}\includegraphics[scale=\figsize]{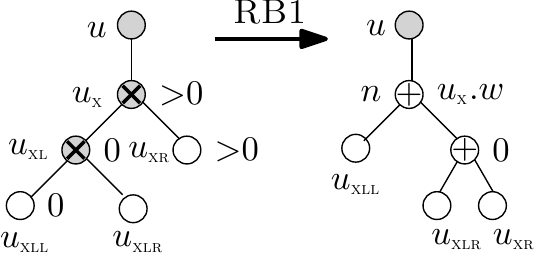} & \hspace{-1.6cm}\includegraphics[scale=\figsize]{figures/RB2.pdf} \\
%\func{RB1} & \func{RB2} \\
\hspace{-1.2cm}\includegraphics[scale=\figsize]{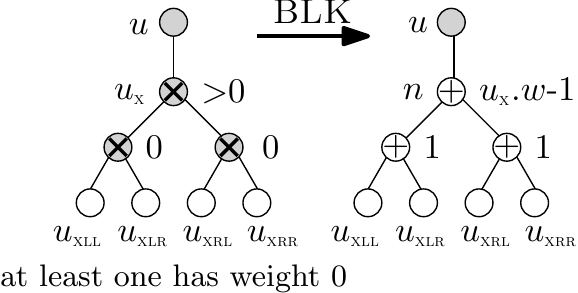} & \hspace{-0.4cm}\includegraphics[scale=\figsize]{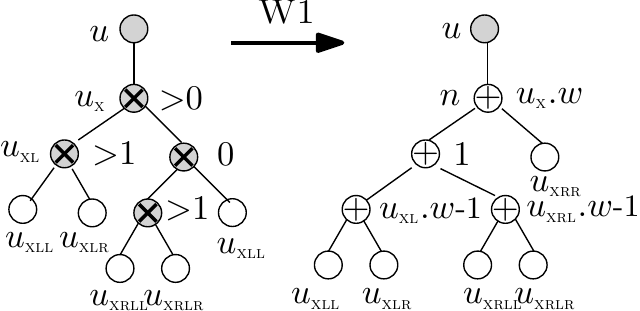} \\
%\func{BLK} & \func{W1} \\
\hspace{-0.4cm}\includegraphics[scale=\figsize]{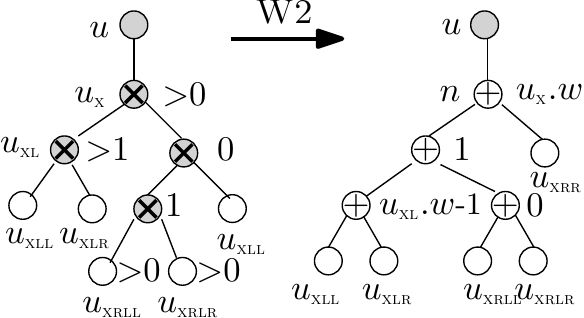} & \hspace{-0.8cm}\includegraphics[scale=\figsize]{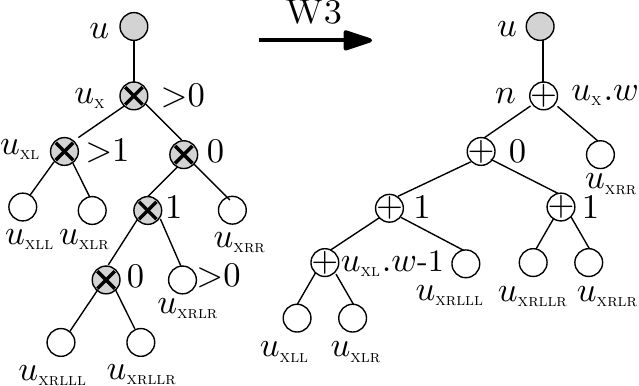} \\
%\func{W2} & \func{W3} \\
\includegraphics[scale=\figsize]{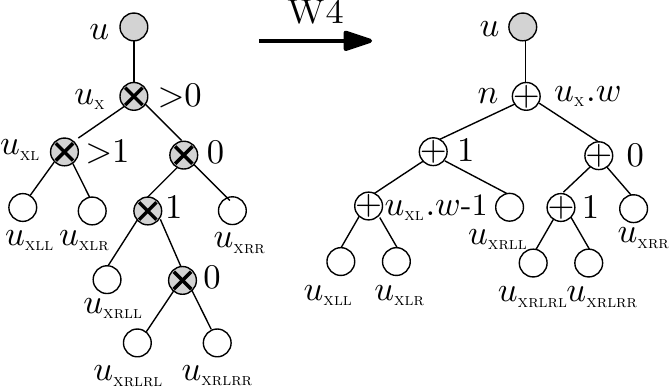} & \hspace{-1cm}\includegraphics[scale=\figsize]{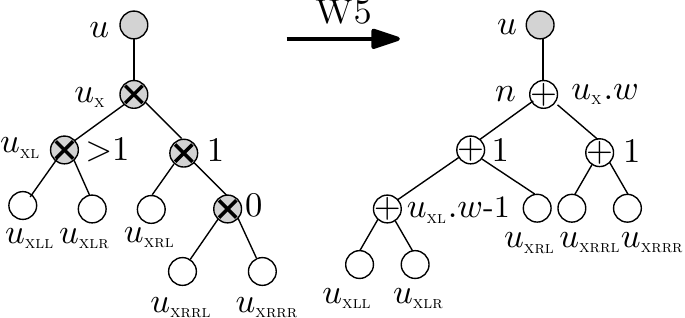} \\
%\func{W4} & \func{W5} \\
\hspace{-0.5cm}\includegraphics[scale=\figsize]{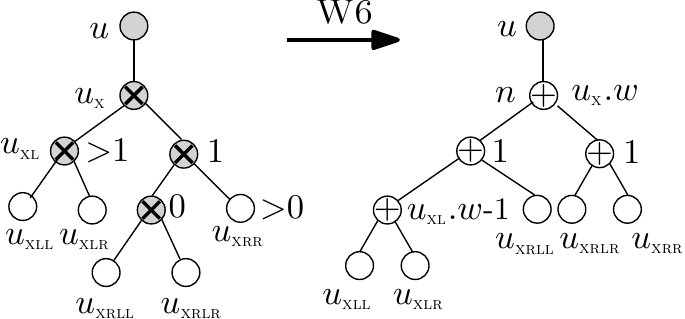} & \hspace{0.2cm}\includegraphics[scale=\figsize]{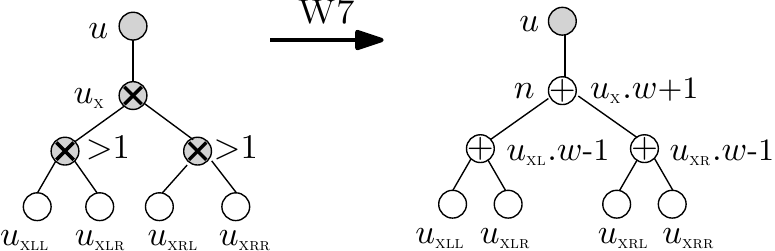} \\
%\func{W6} & \func{W7}
\end{tabularx}

%\includegraphics[scale=0.8]{figures/chromatic-rotations1.png}
%\caption{}
%\label{fig-chromatic-rotations1}
%\end{figure}
%\begin{figure}[tbp]
%\centering
%\includegraphics[scale=0.8]{figures/chromatic-rotations2.png}
\caption{Transformations for chromatic search trees.  Each transformation also has a mirror image.}
\label{fig-chromatic-rotations}
\end{figure}

%\begin{figure}[tbp]
%\centering
%\includegraphics[scale=0.42]{graphviz/chromatic-freezing.png}
%\caption{Depiction of precisely which nodes are frozen (included in the  $V$ sequence) for each update or rebalancing step.}
%\end{figure}

\begin{figure}[tbp]
\newcommand{\wcnarrow}[2]{\parbox{\namewidth}{#1} \com \mbox{#2}}
\hspace*{-7mm}
%\begin{minipage}[t]{85mm}
\def\namewidth{18mm}
\preplisting
\begin{lstlisting}[mathescape=true,style=nonumbers]
    type// \node
        //\com User-defined fields
        //\wcnarrow{$left, right$}{child pointers (mutable)}
        //\wcnarrow{$k, v, w$}{key, value, weight (immutable)}
        //\com Fields used by \llt/\sct\ algorithm
        //\wcnarrow{$\info$}{pointer to \op}
        //\wcnarrow{$marked$}{Boolean}
%\end{lstlisting}
%\end{minipage}
%\hspace*{-5mm}
%\begin{minipage}[t]{110mm}
%\def\namewidth{17mm}
%\preplisting
%\begin{lstlisting}[mathescape=true,style=nonumbers]
%    type// \lrec
%        //\com User-defined fields
%        //\wcnarrow{$k, v, w$}{key, value, weight (immutable)}
%        //\com Fields used by \llt/\sct\ algorithm
%        //\wcnarrow{$\info$}{pointer to \op}
%        //\wcnarrow{$marked$}{Boolean}
%\end{lstlisting}
%\end{minipage}

\prepnewlisting
\hrule
\vspace{-2mm}
\begin{lstlisting}[mathescape=true]
    //\func{Get}$(key)$
      //\com Returns the value associated with $key$, or $\bot$ if no value is associated with $key$
      do// a standard BST search for $key$ using reads, ending at a leaf $l$
      if $l.k = key$ then return $l.v$
      else return $\bot$// \\ \hrule %
    
    //\ins$(key, value)$
      //\com Associates $value$ with $key$ in the dictionary and returns the old associated value, or $\bot$ if none existed %Replaces $\langle key, old \rangle$ with $\langle key, value \rangle$ in the dictionary, and returns $old$, or $\bot$ if $key$ was not in the dictionary
      
      do
        do// a standard BST search for $key$ using reads, ending at a leaf $l$ with parent $p$
        $result := \tryins(p, l, key, value)$
      while $result = \fail$
      $\langle createdViolation, value \rangle := result$
      if $createdViolation$ then $\cleanup(key)$
      return $value$// \\ \hrule %
    
    //\tryins$(p, l, key, value)$ 
      //\tline{\com Returns $\langle \true, \bot \rangle$ if $key$ was not in the dictionary %
               and inserting it caused a violation,} %
              {$\langle \false, \bot \rangle$ if $key$ was not in the dictionary %
               and inserting it did not cause a violation,} %
              {$\langle \false, oldValue \rangle$ if $\langle key, oldValue \rangle$ was in the dictionary, and %} %
              %{
              \fail\ if we should try again} \vspace{2mm}%

      if $(result := \llt(p)) \in \{\fail, \finalized\}$ then return $\fail$ else $\langle p_{L}, p_{R} \rangle := result$ 
      if $p_L = l$ then $ptr := \&p.left$
      else if $p_R = l$ then $ptr := \&p.right$
      else return $ \fail $ //\vspace{2mm}%
      
      if $\llt(l) \in \{\fail, \finalized\}$ then return $\fail$ //\vspace{2mm}%

      $newLeaf :=$ //pointer to a new \node$\langle \nil, \nil, key, value, 1 \rangle$
      if $l.k = key$ then
        $oldValue := l.v$
        $new := newLeaf$
      else
        $oldValue := \bot$
        if $l \mbox{ is a sentinel node}$ then $newWeight := 1$ else $newWeight := l.w - 1$
        if $key < l.k$ then $new :=$ //pointer to a new \node%
                                       $\langle newLeaf, l, l.k, \nil, newWeight \rangle$
        else $new :=$ //pointer to a new \node%
                        $\langle l, newLeaf, key, \nil, newWeight \rangle$ \vspace{2mm}%

      if $\sct(\langle p, l \rangle, \langle l \rangle, ptr, new)$ then
        return $\langle (new.w = p.w = 0), oldValue \rangle$
      else return $\fail$
\end{lstlisting}
	\caption{Data structure, and pseudocode for \func{Get}, \func{Insert} and \tryins\ (which follows the tree update template).}
	\label{code-chromatic1}
\end{figure}

\begin{figure}[tbp]
\preplisting
\hrule
\vspace{-2mm}
\begin{lstlisting}[mathescape=true]
    //\del$(key)$ 
      //\com Deletes $key$ and returns its associated value, or returns $\bot$ if $key$ was not in the dictionary
      do
        do// a standard BST search for $key$ using reads, ending at a leaf $l$ with parent $p$ and grandparent $gp$ \label{del-search-line}
        $result := \trydel(gp, p, l, key)$
      while $result = \fail$
      $\langle createdViolation, value \rangle := result$
      if $createdViolation$ then $\cleanup(key)$
      return $value$// \\ \hrule %
    
    //\trydel$(gp, p, l, key)$ 
      //\tline{\com Returns $\langle \true, value \rangle$ if $\langle key, value \rangle$ was in the dictionary %
               and deleting $key$ caused a violation,} %
              {$\langle \false, value \rangle$ if $\langle key, value \rangle$ was in the dictionary %
               and deleting $key$ did not cause a violation,} %
              {$\langle \false, \bot \rangle$ if $key$ was not in the dictionary, and %} %
              %{
              \fail\ if we should try again} \vspace{2mm}%
              
      if $l.k \neq key$ then return $\langle \false, \bot \rangle$ //\vspace{2mm}%
      
      if $(result := \llt(gp)) \in \{\fail, \finalized\}$ then return $\fail$ else $\langle gp_{L}, gp_{R} \rangle := result$
      if $gp_L = p$ then $ptr := \&gp.left$
      else if $gp_R = p$ then $ptr := \&gp.right$
      else return $\fail$ //\vspace{2mm}%

      if $(result := \llt(p)) \in \{\fail, \finalized\}$ then return $\fail$ else $\langle p_{L}, p_{R} \rangle := result$
      if $p_L = l$ then $s := p_R$
      else if $p_R = l$ then $s := p_L$
      else return $\fail$ //\vspace{2mm}%
      
      if $\llt(l) \in \{\fail, \finalized\}$ then return $\fail$
      if $\llt(s) \in \{\fail, \finalized\}$ then return $\fail$ //\vspace{2mm}%

      if $p \mbox{ is a sentinel node }$ then $newWeight := 1$ else $newWeight := p.w + s.w$
      if $\sct(\langle gp, p, l \rangle, \langle p, l \rangle, ptr, \mbox{new copy of } s \mbox{ with weight } newWeight)$ then
        return $\langle (newWeight > 1), l.v \rangle$
      else return $\fail$// \\ \hrule %

    //$\func{Successor}(key)$
      //\com Returns the successor of $key$ and its associated value (or $\langle \bot, \bot\rangle$ if there is no such successor)
      $l := root$
      loop until $l$// is a leaf
        if $\llt(l) \in \{\fail, \finalized\}$ then $\mbox{retry }\func{Successor}(key)\mbox{ from scratch}$
        if $key < l.key$ then
          $lastLeft := l$
          $l := l.left$
          $V := \langle lastLeft \rangle$
        else
          $l := l.right$
          //add $l$ to end of $V$ \vspace{2mm}%
      
      if $lastLeft = root$ then return $\langle \bot, \bot \rangle$ //\hfill \com Dictionary is empty \label{succ-return-empty}
      else if $key < l.k$ then return $\langle l.k, l.v \rangle$ //\label{succ-return-l}
      else //\hfill \com Find next leaf after $l$ in in-order traversal
        $succ := lastLeft.right$
        loop until $succ$// is a leaf
          if $\llt(succ) \in \{\fail, \finalized\}$ then $\mbox{retry }\func{Successor}(key)\mbox{ from scratch}$
          //add $succ$ to end of $V$
          $succ := succ.left$
        if $succ.key = \infty$ then $result:=\langle \bot,\bot\rangle$ else $result:=\langle succ.k,succ.v\rangle$
        if $\vlt(V)$ then return $result$ //\label{succ-return-succ}
        else $\mbox{retry }\func{Successor}(key)\mbox{ from scratch}$
\end{lstlisting}
\vspace*{-3mm}
	\caption{Code for \del, \trydel\ (which follows the tree update template), and \func{Successor}.}
	\label{code-chromatic2}
\end{figure}

\begin{figure}[tbp]
\centering
\includegraphics[scale=0.42]{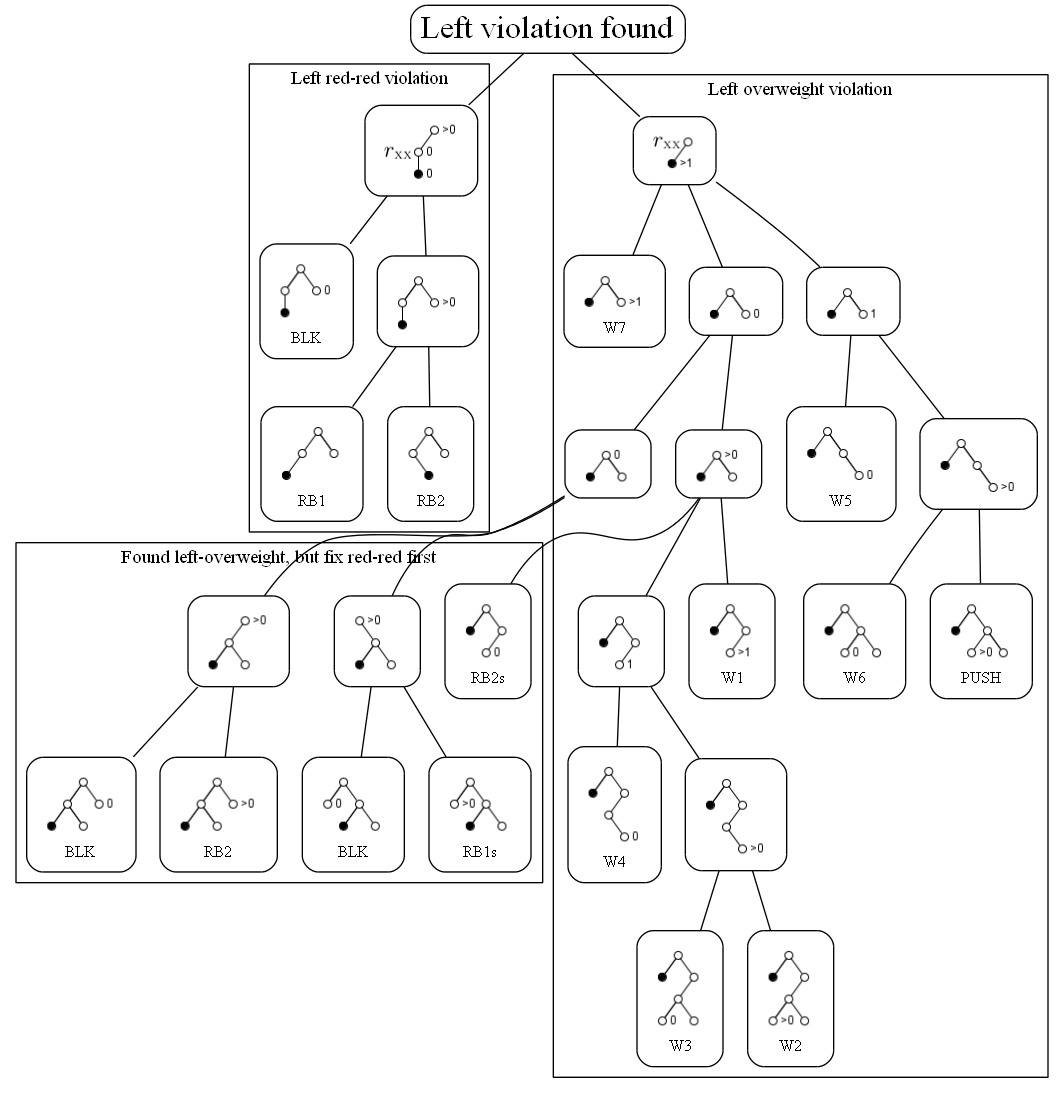}
\caption{Decision tree used by the algorithm to determine which rebalancing operation to apply when a violation is encountered at a node (highlighted in black). % that is a left child.  \eric{Confusing:  There are some cases in upper-left box where black dot is a right child of its parent}
The corresponding diagram to cover right violations can be obtained by: horizontally flipping each miniature tree diagram, changing each rebalancing step \func{DoX}$(\cdot)$ to its symmetric version \func{DoXs}$(\cdot)$, and changing each symmetric rebalancing step \func{DoXs}$(\cdot)$ back to its original version \func{DoX}$(\cdot)$.}
\label{fig-decision-tree}
\end{figure}

\begin{figure}[tbp]
\preplisting
\hrule
\vspace{-2mm}
\begin{lstlisting}[mathescape=true]
    //\cleanup$(key)$
      //\com Ensures the violation created by an \ins\ or \del\ of $key$ gets eliminated
      while $\true$
        $ggp := \nil$;  $gp:=\nil$;  $p:=\nil$;  $l := root$ //\hfill\com Save four last nodes traversed\label{cleanup-start}
        while //\true 
          if //$l$ is a leaf then \textbf{return} \hfill \com Arrived at leaf without finding a violation\label{cleanup-terminate}
          if $key < l.key$ then {$ggp := gp$;  $gp := p$;  $p := l$; $l:=l.left$} //\label{move-l-left}
          else {$ggp := gp$;  $gp := p$;  $p := l$;  $l:=l.right$} //\label{move-l-right}
          if $l.w > 1$ or ($p.w = 0$ and $l.w = 0$) then //\hfill \com Found a violation\label{find-violation}
            $\tryrebalance(ggp,gp,p,l)$ //\hfill \com Try to fix it
            exit loop //\hfill \com Go back to $root$ and traverse again\label{cleanup-end} \\ \hrule %

    //\tryrebalance$(ggp, gp, p, l)$
    //\com Precondition: $l.w > 1$ or $l.w = p.w = 0 \neq gp.w$%, } %
            {%     $l$ has been a child of $p$, $p$ has been a child of $gp$, and %
             %     $gp$ has been a child of $ggp$}
             
      $r := ggp$
      if $(result := \llt(r)) \in \{\fail, \finalized\}$ then return else $\langle \rl, \rr \rangle := result$
        
      $\rx := gp$
      if $\rx \notin\{\rl, \rr\}$ then return
      if $(result := \llt(\rx)) \in \{\fail, \finalized\}$ then return else $\langle \rxl, \rxr \rangle := result$

      $\rxx := p$
      if $\rxx \notin\{\rxl, \rxr\}$ then return
      if $(result := \llt(\rxx)) \in \{\fail, \finalized\}$ then return else $\langle \rxxl, \rxxr \rangle := result$
        
      if $l.w >1 $ then        //\tabto{8cm}\com Overweight violation at $l$
        if $l = \rxxl$ then //\tabto{8cm}\com Left overweight violation ($l$ is a left child)
          if $(result := \llt(\rxxl)) \in \{\fail, \finalized\}$ then return
          //$\func{OverweightLeft}($all $r$ variables$)$
        else if $l = \rxxr $ then //\tabto{8cm}\com Right overweight violation ($l$ is a right child)
          if $(result := \llt(\rxxr)) \in \{\fail, \finalized\}$ then return
          //$\func{OverweightRight}($all $r$ variables$)$
      else        //\tabto{8cm}\com Red-red violation at $l$
        if $\rxx = \rxl$ then //\tabto{8cm}\com Left red-red violation ($p$ is a left child)
          if $\rxr.w = 0$ then
            if $(result := \llt(\rxr)) \in \{\fail, \finalized\}$ then return else $\langle \rxrl, \rxrr \rangle := result$
            //$\func{DoBLK}(\langle r,\rx,\rxx,\rxr \rangle,$ all $r$ variables$)$
          else if $l = \rxxl$ then return //$\func{DoRB1}(\langle r,\rx,\rxx\rangle,$ all $r$ variables$)$
          else if $l = \rxxr$ then
            if $(result := \llt(\rxxr)) \in \{\fail, \finalized\}$ then return
            //$\func{DoRB2}(\langle r,\rx,\rxx,\rxxr \rangle,$ all $r$ variables$)$
        else //\com $\rxx = \rxr $ \tabto{8cm}\com Right red-red violation ($p$ is a right child)
          if $\rxl.w = 0$ then
            if $(result := \llt(\rxl)) \in \{\fail, \finalized\}$ then return else $\langle \rxll, \rxlr \rangle := result$
            //$\func{DoBLK}(\langle r,\rx,\rxl,\rxx \rangle,$ all $r$ variables$)$
          else if $l = \rxxr$ then return //$\func{DoRB1s}(\langle r,\rx,\rxx \rangle,$ all $r$ variables$)$
          else if $l = \rxxl$ then
            if $(result := \llt(\rxxl)) \in \{\fail, \finalized\}$ then return
            //$\func{DoRB2s}(\langle r,\rx,\rxx,\rxxl \rangle,$ all $r$ variables$)$
\end{lstlisting}
	\caption{Pseudocode for \tryrebalance\ (which follows the tree update template) and \cleanup.
	% if each invocation of a subroutine is replaced by the code of that subroutine (and if this is recursively applied to the invocations made by \func{OverweightLeft} and \func{OverweightRight}).
	}
	\label{code-chromatic-tryrebalance}
\end{figure}

\begin{figure}[tbp]
%   //%\dline{\com Precondition: $l \in \{p_L,p_R\rangle, p = gp_L, gp \in \{ggp_L,ggp_R\}$ and,}
%            {for each $x \in \{l,p,gp,ggp\}$, $x_L$ ($x_R$) was %
%             read from the left (right) pointer of $x$}
\hspace{-2mm}
\begin{minipage}{1.02\textwidth}
\preplisting
\hrule
\vspace{-2mm}
\begin{lstlisting}[mathescape=true]
    //\func{OverweightLeft}$(r,\rx,\rxx,\rxxl,\rl,\rr,\rxl,\rxr,\rxxr)$
      if $\rxxr.w = 0 $ then
        if $\rxx.w = 0 $ then
          if $\rxx = \rxl$ then
            if $\rxr.w = 0$ then
              if $(result := \llt(\rxr)) \in \{\fail, \finalized\}$ then return else $\langle \rxrl, \rxrr \rangle := result$
              //$\func{DoBLK}(\langle r,\rx,\rxx,\rxr\rangle,$ all $r$ variables$)$
            else //\com $\rxr.w >0 $
              if $(result := \llt(\rxxr)) \in \{\fail, \finalized\}$ then return else $\langle \rxxrl, \rxxrr \rangle := result$
              //$\func{DoRB2}(\langle r,\rx,\rxx,\rxxr\rangle,$ all $r$ variables$)$
          else //\com $\rxx = \rxr $
            if $\rxl.w = 0$ then
              if $(result := \llt(\rxl)) \in \{\fail, \finalized\}$ then return else $\langle \rxll, \rxlr \rangle := result$
              //$\func{DoBLK}(\langle r,\rx,\rxl,\rxx\rangle,$ all $r$ variables$)$
            else //$\func{DoRB1s}(\langle r,\rx,\rxx\rangle,$ all $r$ variables$)$
        else //\com $\rxx.w >0 $
          if $(result := \llt(\rxxr)) \in \{\fail, \finalized\}$ then return else $\langle \rxxrl, \rxxrr \rangle := result$
          if $(result := \llt(\rxxrl)) \in \{\fail, \finalized\}$ then return
          if $\rxxrl.w > 1$ then //$\func{DoW1}(\langle\rx,\rxx,\rxxl,\rxxr,\rxxrl\rangle,$ result, all $r$ variables$)$
          else if $\rxxrl.w = 0$ then
            $\langle \rxxrll, \rxxrlr \rangle := result$
            //$\func{DoRB2s}(\langle\rx,\rxx,\rxxr,\rxxrl\rangle,$ all $r$ variables$)$
          else //\com $\rxxrl.w = 1$
            $\langle \rxxrll, \rxxrlr \rangle := result$
            if $\rxxrlr = \nil$ then return//\label{overweightleft-check-nil-rxxrlr} \tabto{10cm}\com a node we performed \llt\ on was modified
            if $\rxxrlr.w = 0$ then
              if $(res := \llt(\rxxrlr)) \in \{\fail, \finalized\}$ then return else $\langle \rxxrlrl, \rxxrlrr \rangle := res$//\label{overweightleft-bad-llt-rxxrlr}
              //$\func{DoW4}(\langle\rx,\rxx,\rxxl,\rxxr,\rxxrl,\rxxrlr\rangle,$ all $r$ variables$)$
            else //\com $\rxxrlr.w > 0$
              if $\rxxrll.w = 0$ then
                if $(res := \llt(\rxxrll)) \in \{\fail, \finalized\}$ then return else $\langle \rxxrlll, \rxxrllr \rangle := res$//\label{overweightleft-bad-llt-rxxrll}
                //$\func{DoW3}(\langle\rx,\rxx,\rxxl,\rxxr,\rxxrl,\rxxrll\rangle,$ all $r$ variables$)$
              else //$\func{DoW2}(\langle\rx,\rxx,\rxxl,\rxxr,\rxxrl\rangle,$ all $r$ variables$)$
      else if $\rxxr.w = 1 $ then
        if $(result := \llt(\rxxr)) \in \{\fail, \finalized\}$ then return else $\langle \rxxrl, \rxxrr \rangle := result$
        if $\rxxrr = \nil$ then return//\label{overweightleft-check-nil-rxxrr} \tabto{10cm}\com a node we performed \llt\ on was modified
        if $\rxxrr.w = 0 $ then
          if $(result := \llt(\rxxrr)) \in \{\fail, \finalized\}$ then return else $\langle \rxxrrl, \rxxrrr \rangle := result$//\label{overweightleft-bad-llt-rxxrr}
          //$\func{DoW5}(\langle\rx,\rxx,\rxxl,\rxxr,\rxxrr\rangle,$ all $r$ variables$)$
        else if $\rxxrl.w = 0 $ then
          if $(result := \llt(\rxxrl)) \in \{\fail, \finalized\}$ then return else $\langle \rxxrll, \rxxrlr \rangle := result$//\label{overweightleft-bad-llt-rxxrl}
          //$\func{DoW6}(\langle\rx,\rxx,\rxxl,\rxxr,\rxxrl\rangle,$ all $r$ variables$)$
        else //$\func{DoPush}(\langle\rx,\rxx,\rxxl,\rxxr\rangle,$ all $r$ variables$)$
      else
        if $(result := \llt(\rxxr)) \in \{\fail, \finalized\}$ then return else $\langle \rxxrl, \rxxrr \rangle := result$
        //$\func{DoW7}(\langle\rx,\rxx,\rxxl,\rxxr\rangle,$ all $r$ variables$)$ \\ \hrule %

    //\func{OverweightRight}$(r,\rx,\rxx,\rxxr,\rl,\rr,\rxl,\rxr,\rxxl)$
    //\com Obtained from \func{OverweightLeft} by flipping each $R$ in the subscript of an $r$ variable to an $L$ (and vice versa), and by flipping each rebalancing step \func{DoX}$(\cdot)$ to its symmetric version \func{DoXs}$(\cdot)$ (and vice versa).
\end{lstlisting}
\end{minipage}
	\caption{Pseudocode for \func{OverweightLeft}.}
	\label{code-chromatic-overweight-left}
\end{figure}

\begin{figure}[tb]
\begin{minipage}{0.38\textwidth}
\vspace{-1.6cm}
\hspace{-1.6cm}
\includegraphics[scale=1]{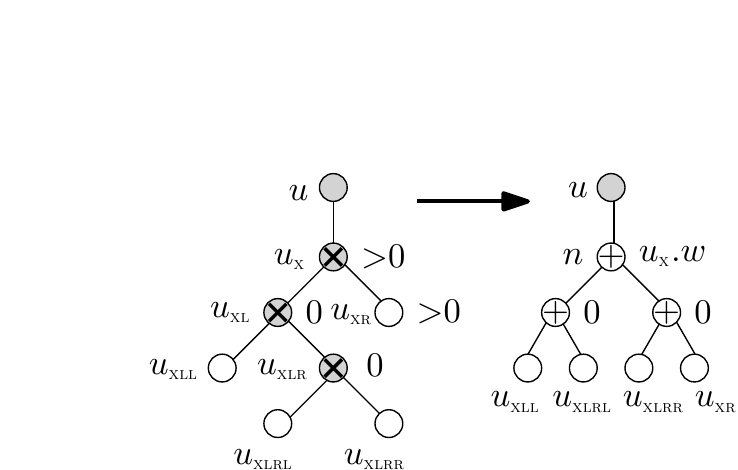}
\end{minipage}
\begin{minipage}{0.62\textwidth}
\preplisting
\hrule
\vspace{-2mm}
\begin{lstlisting}[mathescape=true]
    //$\func{DoRB2}(u,\ux,\uxl,\uxr,\uxll,\uxlr,\uxlrl,\uxlrr)$
      //\com Create new nodes according to the right-hand diagram
      //create node $\nL$ with $k=\uxl.k$, $w=0$, {\it left} $ = \uxll$, $right=\uxlrl$
      //create node $\nR$ with $k=\ux.k$, $w=0$, {\it left} $=\uxlrr$, $right=\uxr$
      //create node $n$ with $k=\uxlr.k$, $w=\ux.w$, {\it left} $=\nL$, $right=\nR$
      //\com Perform the \sct\ to swing the child pointer of node $u$
      if $\ux = \ul$ then $ptr := \&u.left$ else $ptr := \&u.right$
      $\sct(\langle u,\ux,\uxx,\uxlr \rangle, \langle \ux,\uxl,\uxlr \rangle, ptr, n)$ // \vspace{2mm} \hrule %
\end{lstlisting}
\end{minipage}
	\caption{
	Implementing rebalancing step \func{RB2}.
	Other rebalancing steps are handled similarly using the diagrams shown in Figure~\ref{fig-chromatic-rotations}.
	}
	\label{code-chromatic-dorb2}
\end{figure}

The \cleanup\ repeatedly searches for the key that was inserted or deleted,
fixing any violation it encounters, until it can traverse the entire path from
$root$ to a leaf without seeing any violations.
When a violation is found, it calls \tryrebalance, which uses the decision tree
given in Figure \ref{fig-decision-tree} to decide which rebalancing step should
be applied to fix the violation.
At each node of the decision tree,
the process decides which child to proceed to by looking at the weight of one node,
as indicated in the child.  The leaves of the decision tree are labelled by the 
rebalancing operation to apply.  
%We designed this decision tree to ensure the property that when a violation moves, it
%remains on the search path to the key whose insertion or deletion originally caused
%the violation.
The code to implement the decision tree is given in Figure \ref{code-chromatic-tryrebalance} 
and \ref{code-chromatic-overweight-left}.
Note that this decision tree is a component of the sequential chromatic tree algorithm that was left to the implementer in \cite{Boyar97amortizationresults}.

%\eric{I added some explanation to this paragraph.  Remove this comment if you like it (and it hasn't already been done elsewhere).}
The rebalancing steps, which are shown in Figure~\ref{fig-chromatic-rotations}, are a slight modification of those in \cite{Boyar97amortizationresults}.\footnote{Specifically, we do not allow \func{W1}, \func{W2}, \func{W3} or \func{W4} to be applied when the node labeled $\ux$ has weight 0.  Under this restriction, this set of rebalancing steps has the desirable property that when a violation moves,  it remains on the search path to the key whose insertion or deletion originally caused the violation.  It is easy to verify that an alternative rebalancing step can always be performed when $\ux.w = 0$, so this modification does not affect the chromatic tree's convergence to a RBT.}
Each also has a symmetric mirror-image version, denoted by an S after the name, except \func{BLK}, which is its own mirror image.
We use a simple naming scheme for the nodes in the diagram.
Consider the node $\ux$.
We denote its left child by $\uxl$, and its right child by $\uxr$.
Similarly, we denote the left child of $\uxl$ by $\uxll$, and so on.
(The subscript {\sc x} indicates that we do not care whether it is a left or right child.)
For each transformation shown in Figure \ref{fig-chromatic-rotations}, 
the transformation is achieved by an \sct\ 
that swings a child pointer of $u$ and depends on  \llt s of all of the shaded nodes.
The nodes marked with $\times$ are finalized (and removed from the data structure).
The nodes marked by a $+$ are newly created nodes.
The nodes with no marking may be internal nodes, leaves or \nil.
Weights of all newly created nodes are shown.
The keys stored in newly created nodes are the same as in the removed nodes (so that an in-order
traversal encounters them in the same order).
Figure~\ref{code-chromatic-dorb2} implements
one of the rebalancing steps.  The others can be generated from their diagrams in a similar way.

The \func{Successor}($key$) function uses an ordinary BST search for $key$ to find a leaf.
If this leaf's key is bigger than $key$, it is returned.
Otherwise, \func{Successor}
finds the leaf that would be reached next by an in-order traversal of the BST
and then performs a \vlt\
that verifies the path connecting the two leaves has not changed.
The \func{Predecessor} function can be implemented similarly.
 
%\clearpage 
% I put this here to put all of the figures for the pseudocode together, before the proof begins.
% If they are interspersed with the proof, it makes the proof harder to read and look longer than
% it is.
% With new figures, this is no longer necessary.
 
\subsection{Correctness of chromatic trees}
\label{chromatic-correctness}

\after{
(1) Proving successor's calls to LLX/VLX satisfy their preconditions
(2) Proving if an insert or delete creates a violation, then it performs cleanup
(3) Describing how to check whether a node is a sentinel (to check node r, evaluate (r.key = \infty or r = root.left.left))
(4) Proving that the way we check whether something is a sentinel plays nicely with the rest of the algorithm
}

%\begin{obs} \label{obs-chromatic-removed-nodes-form-a-tree}
%Consider a successful operation $O$ that follows the tree update template.
%Let $x$ be the node pointed to by the child pointer that $O$ changes, just before $O$.
%The set of nodes removed from the data structure by $O$ form a tree rooted at $x$.
%\end{obs}
%\begin{proof}
%Immediate from the constraints of the tree update template.
%\end{proof}

%\begin{cor} \label{cor-chromatic-removed-nodes-pointed-to-only-by-removed-nodes}
%A node that has been removed from the data structure can only be pointed to by another node that has been removed from the data structure and, hence, only by a node that is not in the data structure.
%\end{cor}
%\begin{proof}
%Immediate from Lemma~\ref{lem-dotreeupdate-rec-cannot-be-added-after-removal} and Definition~\ref{defn-rec-in-added-removed}.
%(Technically, this should be a straightforward induction on the sequence of changes to the data structure in an execution.)
%\end{proof}

%\eric{Glue to introduce next lemma or two}

The following lemma proves that \tryins, \trydel\ and \tryrebalance\ follow the tree update template, and that \sct s are performed only by tree update operations, so that we can invoke results from Appendix~\ref{sec-dotreeup-correctness} to argue that the transformations in Fig.~\ref{fig-chromatic-rotations} are performed atomically.
Our proof that \tryrebalance\ follows the tree update template is complicated slightly by the fact that its subroutine, \func{OverweightLeft} (or \func{OverweightRight}), can effectively follow the template for a while, and then return early at line~\ref{overweightleft-check-nil-rxxrlr} or line~\ref{overweightleft-check-nil-rxxrr} if it sees a \nil\ child pointer.
(We later prove this can happen only if a tree update has changed a node since we last performed \llt\ on it, so that our \sct\ would be unsuccessful, anyway.)
We thus divide the invocations of \tryrebalance\ into those that follow the template, and those that follow the template until just before they return (at one of these two lines), and show that the latter do not invoke \sct\ (which we must show because they are not technically tree update operations).
We prove these claims together inductively with an invariant that the top of the tree is as shown in Fig.~\ref{fig-treetop}.
%This requires proving that these algorithms never invoke $\llt(r)$ where $r  = \nil $ but, to prove this, we need to know that the transformations in Fig.~\ref{fig-chromatic-rotations} are performed atomically, and at the top of the tree looks like Fig.~\ref{fig-treetop}.
%Hence, we prove these claims together inductively.

\begin{lem} \label{lem-chromatic-invariants}
Our implementation of a chromatic search tree
satisfies the following.
\begin{enumerate}
\item \tryins\ and \trydel\ follow the tree update template and satisfy all constraints specified by the template.
%\label{claim-chromatic-invariants-ins-del-follow-template}
%\item
If an invocation of \tryrebalance\ does not return at line~\ref{overweightleft-check-nil-rxxrlr} or line~\ref{overweightleft-check-nil-rxxrr}, then it follows the tree update template and satisfies all constraints specified by the template. %, otherwise it does not perform an \sct.
Otherwise, it follows the tree update template up until it returns without performing an \sct, and it satisfies all constraints specified by the template.
%\label{claim-chromatic-invariants-rebalance-follow-template}
\label{claim-chromatic-invariants-follow-template}
\item The tree rooted at $root$ always looks like Fig.~\ref{fig-treetop}(a) if it is empty, and Fig.~\ref{fig-treetop}(b) otherwise.
\label{claim-chromatic-invariants-top-of-tree}
\end{enumerate}
\end{lem}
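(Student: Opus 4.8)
The plan is to prove the two claims simultaneously by induction on the sequence of steps in the execution, since they are mutually dependent: verifying that \tryins, \trydel\ and \tryrebalance\ follow the template (Claim~\ref{claim-chromatic-invariants-follow-template}) relies on the invariant about the top of the tree (Claim~\ref{claim-chromatic-invariants-top-of-tree}) holding throughout their execution, while verifying that a linearized \sct\ preserves the top-of-tree invariant relies on Claim~\ref{claim-chromatic-invariants-follow-template}, so that the machinery of Appendix~\ref{sec-dotreeup-correctness} applies. Concretely, once Claim~\ref{claim-chromatic-invariants-follow-template} is known, every \sct\ in the execution is performed by a tree update operation (the \sct s occur only inside \tryins, \trydel, or the \func{DoX} subroutines invoked by \tryrebalance), so Constraint~\ref{con-dotreeup-exclusively-does-sct} holds and Lemma~\ref{lem-dotreeup-constraints-invariants} and Theorem~\ref{thm-dotreeup-linearizable} tell us that each linearized \sct\ atomically performs exactly the transformation depicted in Figure~\ref{fig-chromatic-rotations}, removing the nodes in $R$ and adding the nodes in $N$. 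The base case is immediate: in the initial configuration the tree is exactly Figure~\ref{fig-treetop}(a), and no invocation of \tryins, \trydel\ or \tryrebalance\ has taken any step.

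For the inductive step, assume both claims hold in every configuration up to the one immediately before a step $s$. For Claim~\ref{claim-chromatic-invariants-follow-template}, the only relevant case is that $s$ is a step of an invocation $O$ of \tryins, \trydel\ or \tryrebalance. Using the inductive hypothesis on Claim~\ref{claim-chromatic-invariants-top-of-tree} throughout $O$, I would check, by inspection of the pseudocode, that: the reads that located the node passed into $O$ (the preceding \func{Search}, or the \cleanup\ traversal) constitute the ``follow zero or more pointers from $entry$'' preamble of the template and reach a node with the parent/grandparent structure required, precisely because of the sentinels; the subsequent \llt s form a sequence $\sigma$ in which each new node is a non-\nil\ child of an earlier node (so \func{NextNode} is well-defined and locally computable), any \llt\ returning \fail\ or \finalized\ triggers an immediate \fail\ return, and \func{Condition} returns \true\ after the fixed number of iterations along the path taken; and the computed $V$, $R$, $fld$, $new$ satisfy PC1 to PC\ref{con-R-non-empty-then-GR-a-non-empty-tree}, with $V$ ordered by a breadth-first traversal to get PC\ref{con-V-sequences-ordered-consistently}, and with $G_N$, $G_R$, $F_N$, $F_R$ read directly off the corresponding diagram in Figure~\ref{fig-chromatic-rotations}. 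For \tryins\ and \trydel\ this is short, and their early returns that bypass the template (e.g.\ when the leaf does not contain $key$) are harmless exactly as argued for \trydel\ in Section~\ref{sec-chromatic}, since no \sct\ is performed and the caller simply terminates.

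The substantial case of Claim~\ref{claim-chromatic-invariants-follow-template} is \tryrebalance, which selects one of the $22$ rebalancing steps via the decision tree of Figure~\ref{fig-decision-tree}, coded across \tryrebalance, \func{OverweightLeft} and \func{OverweightRight}. I would use the technique of Section~\ref{sec-chromatic-rebalancing-alg}: treat each root-to-leaf path through the conditionals separately, exhibit the (locally computable) \func{Condition}, \func{NextNode}, \func{\sct-Arguments}, \func{Result} for that path, and verify PC1 to PC\ref{con-R-non-empty-then-GR-a-non-empty-tree} against the diagram of the rebalancing step labelling that leaf; conditionally combining the per-path procedures then yields one template instance. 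Two points need care. First, the branch tests read weight fields, including weights of fringe nodes not in $\sigma$; this is legitimate because weights are immutable, so such a value can change after being read only if a child pointer of a node in $V$ changes, in which case the \sct\ fails anyway. Second, \func{OverweightLeft}/\func{OverweightRight} may return early at line~\ref{overweightleft-check-nil-rxxrlr} or line~\ref{overweightleft-check-nil-rxxrr} upon reading a \nil\ child pointer; for those paths I would verify that the return strictly precedes any \sct\ invocation, so the invocation ``follows the template up until it returns without performing an \sct'' (and this does not violate Constraint~\ref{con-dotreeup-exclusively-does-sct}); the fact that this early return can happen only after an interfering modification is not needed here and is deferred.

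Finally, for Claim~\ref{claim-chromatic-invariants-top-of-tree}, the only way $s$ can change the tree is as the linearized \sct\ of some invocation $O$. By Claim~\ref{claim-chromatic-invariants-follow-template} (just established for $O$) and the consequences above, $s$ atomically performs exactly one of \func{Insert1}, \func{Insert2}, \func{Delete}, or one of the $22$ rebalancing steps. It then remains to check, transformation by transformation, that applying it to a tree of the form in Figure~\ref{fig-treetop} again yields a tree of that form. The key sub-facts are that the special cases in the code which set a new node's weight to one when it takes the place of a sentinel (in \tryins\ and \trydel) keep the sentinel-path weights equal to one, and that no rebalancing step is ever applied at or directly below a sentinel, because sentinels have weight one and hence are never the site of a red-red or overweight violation. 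I expect this concluding case analysis to be routine but lengthy; the genuine obstacle is the per-path verification of the template postconditions for \tryrebalance, owing to the sheer number of branches in \func{OverweightLeft} and \func{OverweightRight}.
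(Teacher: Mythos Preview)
Your proposal is correct and follows the same inductive skeleton as the paper: simultaneous induction on steps, using Claim~\ref{claim-chromatic-invariants-top-of-tree} to justify that the \llt s in Claim~\ref{claim-chromatic-invariants-follow-template} receive valid arguments, and using Claim~\ref{claim-chromatic-invariants-follow-template} plus the template machinery (Theorem~\ref{thm-dotreeup-linearizable}) to reduce Claim~\ref{claim-chromatic-invariants-top-of-tree} to a case analysis over the transformations in Figure~\ref{fig-chromatic-rotations}.

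The difference is one of emphasis and of which auxiliary facts are singled out. The paper observes that Claim~\ref{claim-chromatic-invariants-follow-template} ``follows almost immediately from inspection of the code'' with a single non-trivial point: showing that no \llt$(r)$ is ever invoked with $r=\nil$. You mention this in passing (``each new node is a non-\nil\ child'') but spread your attention evenly over all of PC1--PC\ref{con-R-non-empty-then-GR-a-non-empty-tree}, and you do not name the structural invariants the paper extracts from Figure~\ref{fig-chromatic-rotations} to discharge the hard cases inside \func{OverweightLeft}/\func{OverweightRight}: every leaf has weight at least one, every node has zero or two children, and a leaf's child pointers never change. These are exactly what make the explicit \nil\ checks at lines~\ref{overweightleft-check-nil-rxxrlr} and~\ref{overweightleft-check-nil-rxxrr} sufficient to guarantee that the later \llt s at lines~\ref{overweightleft-bad-llt-rxxrll} and~\ref{overweightleft-bad-llt-rxxrl} receive non-\nil\ arguments. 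For Claim~\ref{claim-chromatic-invariants-top-of-tree}, the paper's argument that rebalancing never replaces a sentinel is also slightly sharper than yours: rather than reasoning from where violations sit, it reads off the weight preconditions of each step (\func{BLK}/\func{RB1}/\func{RB2} require $\uxl.w=0$, the remaining steps require $\uxl.w>1$) to conclude directly that the pointer swung lies below the sentinels. Your plan would work, but the paper's version is shorter because it isolates the one genuinely delicate sub-claim.
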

\begin{proof}
We proceed by induction on the sequence of steps in the execution.

Claim~\ref{claim-chromatic-invariants-follow-template} follows almost immediately from inspection of the code.
The only non-trivial part of the proof is showing that these algorithms never invoke $\llt(r)$ where $r = \nil$, and the only step that can affect this sub-claim is an invocation of \llt.
Suppose the inductive hypothesis holds just before an invocation $\llt(r)$.
For \ins 1, \ins 2\ and \del, $r \neq \nil$ follows from inspection of the code, inductive Claim~\ref{claim-chromatic-invariants-top-of-tree}, and the fact that every key inserted or deleted from the dictionary is less than $\infty$ (so every key inserted or deleted minimally has a parent and a grandparent).
For rebalancing steps, $r \neq \nil$ follows from inspection of the code and the decision tree in Fig.~\ref{fig-decision-tree}, using a few facts about the data structure.
\tryrebalance\ performs \llt s on its arguments $ggp, gp, p, l$, and then on a sequence of nodes reachable from $l$, as it follows the decision tree.
From Fig.~\ref{fig-treetop}(b), it is easy to see that any node with weight $w \neq 1$ minimally has a parent, grandparent, and great-grandparent.
Thus, the arguments to \tryrebalance\ are all non-\nil.
By inspection of the transformations in Fig.~\ref{fig-chromatic-rotations}, each leaf has weight $w \ge 1$, every node has zero or two children, and the child pointers of a leaf do not change. %, and all paths from $root$ to a leaf have the same total weight.
This is enough to argue that all \llt s performed by \tryrebalance, and nearly all \llt s performed by \func{OverweightLeft} and \func{OverweightRight}, are passed non-\nil\ arguments.
Without loss of generality, we restrict our attention to \llt s performed by \func{OverweightLeft}.
The argument for \func{OverweightRight} is symmetric.
The only \llt s that require different reasoning are performed at lines~\ref{overweightleft-bad-llt-rxxrlr}, \ref{overweightleft-bad-llt-rxxrll}, \ref{overweightleft-bad-llt-rxxrr} and \ref{overweightleft-bad-llt-rxxrl}.
For lines~\ref{overweightleft-bad-llt-rxxrlr} and \ref{overweightleft-bad-llt-rxxrr}, the claim follows immediately from lines~\ref{overweightleft-check-nil-rxxrlr} and line~\ref{overweightleft-check-nil-rxxrr}, respectively.
Consider line~\ref{overweightleft-bad-llt-rxxrll}.
If $\rxxrll = \nil$ then, since every node has zero or two children, and the child pointers of a leaf do not change, $\rxxrl$ is a leaf, so $\rxxrlr = \nil$.
Therefore, \func{OverweightLeft} will return before it reaches line~\ref{overweightleft-bad-llt-rxxrll}.
By the same argument, $\rxxrl \neq \nil$ when line~\ref{overweightleft-bad-llt-rxxrl} is performed.
Thus, $r \neq \nil$ no matter where the \llt\ occurs in the code.

We now prove Claim~\ref{claim-chromatic-invariants-top-of-tree}.
The only step that can modify the tree is an \sct.
Suppose the inductive hypothesis holds just before an invocation $S$ of \sct.
By inductive Claim~\ref{claim-chromatic-invariants-follow-template}, the algorithm that performed $S$ followed the tree update template up until it performed $S$.
Therefore, Theorem~\ref{thm-dotreeup-linearizable} implies that $S$ atomically performs one of the transformations in Fig.~\ref{fig-chromatic-rotations}.
By inspection of these transformations, when the tree is empty, \ins 1 at the left child of $root$ changes the tree from looking like Fig.~\ref{fig-treetop}(a) to looking like Fig.~\ref{fig-treetop}(b) and, otherwise, does not affect the claim.
When the tree has only one node with $key \neq \infty$, \del\ at the leftmost grandchild of $root$ changes the tree back to looking like Fig.~\ref{fig-treetop}(a) and, otherwise, does not affect the claim.
Clearly, \ins 2 does not affect the claim.
Without loss of generality, let $S$ be a left rebalancing step.
%The argument for mirror steps is symmetric.
Each rebalancing step in \{\func{BLK}, \func{RB1}, \func{RB2}\} applies only if $\uxl.w = 0$, and every other rebalancing step applies only if $\uxl.w > 1$.
Therefore, $S$ must change a child pointer of a descendent of the left child of $root$ (by the inductive hypothesis).
Since the child pointer changed by $S$ was traversed while a process was searching for a key that it inserted or deleted, and $\infty$ is greater than any such key, $S$ can replace only nodes with $key < \infty$.% and, hence, cannot affect the claim.
\end{proof}

Using the same reasoning as in the proof of Lemma~\ref{lem-chromatic-invariants}.\ref{claim-chromatic-invariants-follow-template}, it is easy to verify that, each time the chromatic tree algorithm accesses a field of $r$, $r \neq \nil$.

\begin{defn} \label{defn-searchpath}
The \textbf{search path to} $key$ \textbf{starting at} a node $r$ is the path that an ordinary BST search starting at $r$ would follow.
If $r = root$, then we simply call this the search path to $key$.
\end{defn}

Note that this search is well-defined even if the data structure is not a BST.
Moreover, the search path starting at a node is well-defined, even if the
node has been removed from the tree.
In any case, we simply look at the path that an ordinary BST search would follow, if it were performed on the data structure.  

The following few lemmas establish that the tree remains a BST at all times and that 
searches are linearizable.  They are proved in a way similar to \cite{EFRB10:podc}, although
the proofs here must deal with the additional complication of rebalancing operations occuring
while a search traverses the tree.

\begin{lem}
\label{still-on-path}
If a node $v$ is in the data structure in some configuration $C$ and
$v$ was on the search path for key $k$ in some earlier configuration $C'$, 
then $v$ is on the search path for $k$ in $C$.
\end{lem}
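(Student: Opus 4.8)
I will prove the lemma by induction on the number of configurations between $C'$ and $C$, reducing it to a statement about a single step. The base case $C=C'$ is trivial. For the inductive step, let $D$ be the configuration immediately before $C$ and let $s$ be the step from $D$ to $C$. First I would dispose of a degenerate possibility: if $v$ is not in the data structure in $D$, then, since $v$ is on the search path for $k$ in $C'\le D$ (hence in the data structure in $C'$), $v$ must have been removed at some step in $[C',D]$; by Lemma~\ref{lem-dotreeupdate-rec-cannot-be-added-after-removal} it cannot reappear, contradicting the hypothesis that $v$ is in the data structure in $C$. So $v$ is in the data structure in $D$, and the induction hypothesis applied to $C'$ and $D$ gives that $v$ is on the search path for $k$ in $D$. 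It therefore suffices to show: if $v$ is on the search path for $k$ in $D$ and $v$ is in the data structure in $C$ (one step later), then $v$ is on the search path for $k$ in $C$.

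For this single-step statement, recall that a record is added to or removed from the data structure, and more generally a child pointer is modified, only by a linearized invocation of \sct\ (Lemma~\ref{lem-dotreeup-constraints-invariants}.\ref{claim-dotreeup-finalized-before-removed} and the remark after Definition~\ref{defn-rec-in-added-removed}). If $s$ is not such an invocation, the tree and hence the search path for $k$ are unchanged, and we are done. Otherwise, by Lemma~\ref{lem-chromatic-invariants}.\ref{claim-chromatic-invariants-follow-template} and Theorem~\ref{thm-dotreeup-linearizable}, $s$ atomically performs one of the transformations of Figure~\ref{fig-chromatic-rotations}: it changes a child pointer $fld$ of a node $parent$ from $old$ to $new$, removing the records of a connected set $R$ (which, by postcondition PC\ref{con-R-non-empty-then-GR-a-non-empty-tree}, forms a down-tree rooted at $old$) and adding the records of a connected set $N$ of new nodes rooted at $new$, while the fringe $F_N=F_R$ and everything strictly below it are untouched; write $f_1,\ldots,f_t$ for the fringe nodes and $T_i$ for the (unchanged) subtree rooted at $f_i$, so that in $D$ the subtree rooted at $old$ is the disjoint union of $R$ and the $T_i$. (The cases $old=\nil$ and $R=\emptyset$, covered by PC\ref{con-old-nil-then-R-empty} and PC\ref{con-fringe-of-GN-is-old}, are similar and easier, and in fact do not arise in our implementation since every \tryins, \trydel\ and \tryrebalance\ removes at least one node.) Now I would split on where $v$ lies. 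If $v$ is not in the subtree rooted at $old$ in $D$ (i.e. $v$ is $parent$, a proper ancestor of $parent$, or in the subtree hanging from $parent$'s other child), then the path from $root$ to $v$ does not use the pointer $fld$ and the turn the search makes at $parent$ (if it reaches $parent$ at all) is determined by the immutable field $parent.k$, so the search path to $v$ is unchanged and $v$ stays on it. If $v\in R$, then $v$ was removed by $s$, contradicting that $v$ is in the data structure in $C$. So the only remaining case is $v\in T_i$ for some $i$.

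In that case, since $v$ was on the search path for $k$ in $D$ and lies strictly below $parent$ in the subtree reached via $fld$, the search path for $k$ in $D$ has the form $root,\ldots,parent\to old\to\cdots(\text{nodes of }R)\cdots\to f_i\to\cdots(\text{nodes of }T_i)\cdots\to v\to\cdots$; in particular the comparison of $k$ with $parent.k$ routes the search along $fld$, and it enters $T_i$ at $f_i$. Because $parent.k$ is immutable and $parent$ is still in the data structure in $C$, the search for $k$ in $C$ again follows $fld$ at $parent$ and reaches $new$, and because $T_i$ is unchanged it will follow the same path through $T_i$ to $v$ once it reaches $f_i$; hence it suffices to show that the search for $k$, having entered the replaced region at $new$, exits it at the fringe node $f_i$. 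This is the main obstacle, and I would discharge it by inspection of the transformations in Figure~\ref{fig-chromatic-rotations}: each one is a local, BST-structured restructuring in which the new internal nodes carry the same keys as the removed internal nodes in the same in-order position, so the left-to-right order of the fringe subtrees $T_1,\ldots,T_t$ and of the routing keys separating them is preserved; consequently the set of keys routed into $T_i$ by the new routing at and below $new$ equals the set routed into $T_i$ by the old routing at and below $old$, so a search for $k$ that exited at $f_i$ before the step also exits at $f_i$ after it. Combining the three cases, $v$ is on the search path for $k$ in $C$, which completes the induction.
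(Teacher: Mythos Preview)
Your proof is correct and follows essentially the same approach as the paper's own proof: both reduce to showing that each successful \sct\ preserves the property, split into the case where $v$ lies outside the replaced subtree versus below a fringe node, and discharge the latter case by inspection of the transformations in Figure~\ref{fig-chromatic-rotations} to verify that a fringe node on the search path for $k$ before the step remains on it afterward. Your version is more explicit about the inductive framework and the degenerate case, but the content is the same.
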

\begin{proof}
Since $v$ is in the data structure 
in both $C'$ and $C$, it must be in the data structure at all times between $C'$ and $C$
by Lemma \ref{lem-dotreeupdate-rec-cannot-be-added-after-removal}.
Consider any successful \sct\ $S$ that occurs between $C'$ and $C$.  We show that it preserves
the property that $v$ is on the search path for $k$.
Suppose that the \sct\ changes a child pointer of a node $u$ from $old$ to $new$.
If $v$ is not a descendant of $old$ immediately before $S$, then this change cannot
remove $v$ from the search path for $k$.
So, suppose $v$ is a descendant of $old$ immediately prior to $S$.

Since $S$ does not remove $v$ from the data structure, $v$ must be a descendant of a node
$f$ in the fringe set $F$ of $S$ (see Figure~\ref{fig-replace-subtree} and Figure~\ref{fig-replace-subtree2}).
%\after{Properly define this fringe set $F$, or find something else to say here.}
Moreover, $f$ must be on the search path for $k$ before $S$.
It is easy to check by inspection of each possible tree modification 
in Figure~\ref{fig-chromatic-rotations} that if the 
node $f\in F$ is on the search path for $k$ from $u$ prior to the modification,
then it is still on the search path for $k$ from $u$ after the modification.
So $f$ and $v$ are still on the search path for $k$ after S.
\end{proof}

Since a search only reads child pointers, and the tree may change as the search traverses the tree,
we must show that it still ends up at the correct leaf.
In other words, we must show that the search is linearizable even if it traverses some
nodes that are no longer in the tree.

\begin{lem}
\label{was-on-path}
If an ordinary BST search for key $k$ starting from the $root$ reaches a node $v$, then
there was some earlier configuration during the search when $v$ was on the search path for $k$.
\end{lem}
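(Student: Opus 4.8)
The plan is to argue by induction on the number of nodes traversed by the search, in the style of the proof of Lemma~\ref{still-on-path}, but now using that lemma as a black box. Write the sequence of nodes the search reaches as $v_0 = root, v_1, v_2, \ldots$, where $v_j$ is obtained by reading the left child field of $v_{j-1}$ if $k < v_{j-1}.k$, and its right child field otherwise; since the key fields are immutable, this is the same local decision an instantaneous search would make at $v_{j-1}$. I would prove, by induction on $j$, that there is a configuration during the search, no later than the moment the search reaches $v_j$, at which $v_j$ is on the search path for $k$. The base case $j=0$ is immediate, since $root$ lies on the search path for $k$ in every configuration.

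For the inductive step, set $u := v_{j-1}$ and $v := v_j$. By the induction hypothesis there is a configuration $C$, no later than the moment the search reaches $u$, at which $u$ is on the search path for $k$; in particular $u$ is in the data structure at $C$. Let $D$ be the configuration just before the step in which the search reads the relevant child field of $u$ and obtains $v$. If $u$ is still in the data structure at $D$, then Lemma~\ref{still-on-path} gives that $u$ is on the search path for $k$ at $D$; since the search and the search path make the same decision at $u$ (both compare $k$ with the immutable key $u.k$), the value $v$ that the search reads is precisely the next node of the search path for $k$ at $D$, so $v$ is on the search path at $D$, as needed.

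The hard part is the remaining case, where $u$ has already been removed from the data structure by the time the search reads its child field --- that is, $u$ is not in the data structure at $D$. Since $u$ is in the data structure at $C$ and not at $D$, some linearized \sct, call it $S$, removes $u$ at a step after $C$ and before $D$ (and by Lemma~\ref{lem-dotreeupdate-rec-cannot-be-added-after-removal} there is only one such step). Just before $S$, node $u$ is in the data structure, so applying Lemma~\ref{still-on-path} with earlier configuration $C$ shows $u$ is on the search path for $k$ just before $S$; let $v'$ be the next node of that path --- this is well defined, because $u$'s relevant child field is non-\nil\ at $D$ (the search reads the node $v$ from it) while a leaf has \nil\ child fields that never change, so $u$ must be internal just before $S$. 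Because $S$ removes $u$, node $u$ is finalized (Constraint~\conmarkallremovedrecs), and because the field $S$ changes belongs to the parent of the root of the removed subtree --- which, by Lemma~\ref{lem-dotreeup-constraints-invariants}, is a node other than $u$ --- no \sct\ alters any field of $u$ from just before $S$ onward. Hence $u$'s relevant child field still holds $v'$ at $D$, so the search in fact reads $v = v'$. Therefore $v$ lies on the search path for $k$ in the configuration just before $S$, which occurs during the search and before $D$; this completes the induction and proves the lemma.

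I expect the only real difficulty to be the bookkeeping in this last case: one must pin down that the value the search reads from the already-removed node $u$ equals the child $u$ had immediately before its removal, which is exactly where finalization (mutable fields frozen) and the fact that $S$ modifies a strictly higher node are used, and one must invoke the minor structural facts about the chromatic tree --- every node has zero or two children, leaves never change, and hence internal nodes stay internal while in the tree --- that make the next node of the search path from $u$ well defined whenever the search reads one of $u$'s child pointers. It is worth noting that no fresh inspection of the rebalancing transformations of Figure~\ref{fig-chromatic-rotations} is needed here, since all of that reasoning is already packaged inside Lemma~\ref{still-on-path}.
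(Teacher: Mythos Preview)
Your proposal is correct and follows essentially the same route as the paper: induction on the nodes visited, the same case split on whether the previously visited node $u$ is still in the tree when its child pointer is read, invoking Lemma~\ref{still-on-path} in the easy case and the freezing of $u$'s mutable fields upon finalization in the hard case. The only cosmetic difference is that the paper names the last configuration in which $u$ is in the tree while you name the \sct\ that removes $u$; these identify the same moment, and your explicit remark that this \sct\ changes a field of $parent\notin R$ rather than of $u$ is a detail the paper leaves implicit.
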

\begin{proof}
We prove this by induction on the number of nodes visited so far by the search.

{\bf Base case}:  $root$ is always on the search path for $k$.

{\bf Inductive step}:  Suppose that some node $v$ that is visited by the search was on the search path for $k$ in some configuration $C$ between the beginning of the search and the time that the search reached $v$.
Let $v'$ be the next node visited by the search after $v$.
We prove that there is a configuration $C'$ between $C$ and the time the search reaches $v'$ when
$v'$ is on the search path for $k$.  
Without loss of generality, assume $k < v.key$.  (The argument when $k\geq v.key$ is symmetric.)
Then, when the search reaches $v'$, $v'$ is the left child of $v$.
We consider two cases.

{\bf Case 1} When the search reaches $v'$, $v$ is in the data structure:
Let $C'$ be the configuration immediately before the search reads $v'$ from $v.left$.
Then, by Lemma \ref{still-on-path}, $v$ is still on the search path for $k$ in $C'$.
Since $k<v.key$, $v' = v.left$ is also on the search path for $k$ in $C'$.

{\bf Case 2} 
When the search reaches $v'$, $v$ is not in the data structure:  
If $v'=v.left$ at $C$, then $v'$ is in the data structure at $C$.
Otherwise, $v.left$ is changed to $v'$ some time after $C$, and when that change occurs,
$v$ has not been finalized, and therefore $v'$ is in the data structure after that change
(by Constraint~\conmarkallremovedrecs).
Either way, $v'$ is in the data structure some time at or after $C$.
 
Let $C'$ be the last configuration at or after $C$ when $v$ is in the data structure.
By Lemma \ref{still-on-path}, $v$ is on the search path for $k$ in $C'$.
Since the only steps that can modify child pointers are successful \sct s, the next step after $C'$
must be a successful \sct\ $S$.  
Since all updates to the tree satisfy Constraint~\conmarkallremovedrecs,
$v$ must be in the $R$-sequence of $S$.  
Thus, $v$ is finalized by $S$ and its left child pointer never changes again after $C'$.
So, in $C'$, $v.left = v'$.
Since $v$ is on the search path for $k$ in $C'$ and $k<v.key$, $v.left = v'$ is also on the search
path for $k$ in~$C'$.
\end{proof}

\begin{lem}
\label{BST-property}
At all times, the tree rooted at the left child of $root$ is a BST.
\end{lem}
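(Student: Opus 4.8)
The plan is to prove this by induction on the sequence of steps of the execution, in fact establishing the slightly stronger invariant that at all times the tree rooted at the left child of $root$ is a leaf-oriented BST in which every internal node has exactly two children, every leaf stores a key, and every internal routing key strictly separates the leaf keys of its left subtree from those of its right subtree. In the initial configuration the tree has the shape of Fig.~\ref{fig-treetop}(a) by Lemma~\ref{lem-chromatic-invariants}.\ref{claim-chromatic-invariants-top-of-tree}, which is trivially a BST, so the base case holds. Only a successful \sct\ can change a child pointer, so for the inductive step I fix a successful invocation $S$ of \sct\ and assume the invariant holds in the configuration just before $S$.

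By Lemma~\ref{lem-chromatic-invariants}.\ref{claim-chromatic-invariants-follow-template}, $S$ is performed by a tree update operation, so Theorem~\ref{thm-dotreeup-linearizable} implies that $S$ atomically applies one of the transformations in Fig.~\ref{fig-chromatic-rotations} (namely \func{Insert1}, \func{Insert2}, \func{Delete}, one of the rebalancing steps, or a mirror image). Such a transformation changes a child pointer $fld$ of a node $parent$ from $old$ to the root $new$ of a newly created subtree $N$; using postcondition PC\ref{con-R-non-empty-then-GR-a-non-empty-tree} (together with PC\ref{con-old-nil-then-R-empty} and PC\ref{con-fringe-of-GN-is-old}) via Lemma~\ref{lem-dotreeup-constraints-invariants}.\ref{claim-dotreeup-finalized-before-removed}, its effect is to replace the subtree rooted at $old$, minus the subtrees hanging below the fringe $F_N = F_R$, by $N$, with the fringe subtrees reattached unchanged in the positions shown in the right-hand diagram. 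Hence it suffices to check two things: (a) the resulting subtree rooted at $new$, with the fringe subtrees in place, is again a BST of the required form; and (b) the set of leaf keys of this subtree lies in the key interval that $old$'s subtree occupied relative to $parent$, so the routing invariant at $parent$ and at all of its ancestors is preserved. Part (b) is easy: $parent$ and everything above it is unchanged, $new$ occupies exactly $old$'s position, a search for a key reached $old$ exactly when that key lay in this interval, and the only leaf key that appears (for \func{Insert1}) or disappears (for \func{Delete}) is the inserted or deleted key, which by the search that led there lies in this interval.

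Part (a) is the local, per-diagram check, and it is the main obstacle --- not conceptually hard, but tedious. Applying the inductive hypothesis inside the modified region, before $S$ the subtree rooted at $old$ was a BST, so the fringe subtrees were BSTs whose leaf-key ranges were ordered consistently with the positions of the $\ux$, $\uxl$, $\uxll$, \ldots\ nodes in the left-hand diagram. Inspecting each diagram of Fig.~\ref{fig-chromatic-rotations} together with the rule in its caption --- each new internal node is assigned the key of one of the removed nodes, placed so that an in-order traversal visits all keys in the same order as before --- one verifies directly that in the right-hand diagram every new routing key still strictly separates its two subtrees, and that leaf-orientedness (two children per internal node, leaves keeping $\nil$ children and storing keys) is preserved, the latter also following from Lemma~\ref{lem-chromatic-invariants}. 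I would carry this out explicitly for one representative case, e.g.\ \func{RB2} using the node creation in Fig.~\ref{code-chromatic-dorb2} (the new nodes get keys $\uxl.k$, $\uxlr.k$, $\ux.k$, and the inductive BST property at $\ux$ gives $\uxl.k < \uxlr.k \le \ux.k$, from which the routing constraints at all three new nodes are immediate), and then note that every other diagram is checked in exactly the same way. Combining (a) and (b) shows that the tree rooted at the left child of $root$ is again a BST after $S$, completing the induction; this invariant is the static counterpart of the search-correctness Lemmas~\ref{still-on-path} and~\ref{was-on-path}, and no machinery beyond Theorem~\ref{thm-dotreeup-linearizable} and the diagram-by-diagram inspection is required.
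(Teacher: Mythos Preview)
Your inductive skeleton and the per-diagram check for rebalancing steps, \func{Insert2} and \func{Delete} match the paper's proof: those transformations only rearrange existing keys, so ``by inspection'' the BST property is preserved locally and hence globally. The gap is in your handling of \func{Insert1}, and it is exactly the point where concurrency bites.

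For \func{Insert1} you need that the newly inserted key $k$ lies in the key interval associated with the leaf $l$ being replaced, with respect to the \emph{current} tree (the one just before the \sct). Your justification---``by the search that led there lies in this interval''---conflates two different configurations. The search for $k$ was performed earlier, possibly long before the \sct, and concurrent \sct s may have restructured the ancestors of $l$ in between. Your claim that ``$parent$ and everything above it is unchanged'' means unchanged \emph{by this} \sct\ $S$; it does not mean the path from $root$ to $parent$ is the same as when the search ran. So the interval for $l$ at \sct\ time need not be the interval the search witnessed.

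The paper closes this gap precisely with Lemmas~\ref{was-on-path} and~\ref{still-on-path}: the search reached the parent $u$, so by Lemma~\ref{was-on-path} $u$ was on the search path for $k$ at some time during the search; $u$ is not finalized before its own \sct\ succeeds, so it is still in the data structure; hence by Lemma~\ref{still-on-path} $u$ is on the search path for $k$ \emph{at the moment of the \sct}, which is what makes the insertion BST-preserving. You cite these lemmas only to dismiss them (``no machinery beyond Theorem~\ref{thm-dotreeup-linearizable} and the diagram-by-diagram inspection is required''), but for \func{Insert1} they are the missing ingredient. Plug them in for that case and your argument goes through.
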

\begin{proof}
Initially, the left child of $root$ is a leaf node, which is a BST.

Keys of nodes are immutable, and the only operation that can change child pointers are successful \sct s, so we prove that 
every successful \sct\ preserves the invariant.
Each \sct\ atomically implements one of the changes shown in Figure~\ref{fig-chromatic-rotations}.

The only \sct\ that can change a child of the $root$ is \ins 1 (when the
tree contains no keys) and \del\ (when the tree contains exactly one key).
By inspection, both of these changes preserve the invariant.

So, for the remainder of the proof
consider an \sct\ that changes a child pointer of 
some descendant of the left child of the $root$.
By inspection, the invariant is preserved by each rebalancing step, \del\
and \ins 2.

It remains to consider an \sct\ that performs \ins 1 to insert a new key $k$. 
Let $u$ be the node whose child pointer is changed by the insertion.
The node $u$ was reached by a search for $k$, so $u$ was on the search path for $k$ at some earlier
time, by Lemma \ref{was-on-path}.
Since $u$ cannot have been finalized prior to the \sct\ that changes its child pointer, 
it is still in the data structure, by Constraint~\conmarkallremovedrecs.
Thus, by Lemma \ref{still-on-path}, $u$ is still on the search
path for $k$ when the \sct\ occurs.
Hence, the \sct\ preserves the BST property.
\end{proof}

We define the linearization points for chromatic tree operations as follows.
\begin{compactitem}
\item \func{Get}($key$) is linearized at the time during the operation when the leaf reached was on the search path for $key$.  (This time exists, by Lemma \ref{was-on-path}.)
\item An \ins\ is linearized at its successful execution of \sct\ inside \tryins\ (if such an \sct\ exists).
\item A \del\ that returns $\bot$ is linearized at the time the leaf reached during the last execution of line \ref{del-search-line} was on the search path for $key$.  (This time exists, by Lemma \ref{was-on-path}.)
\item A \del\ that does not return $\bot$ is linearized at its successful execution of \sct\ inside \trydel\ (if such an \sct\ exists).
\item A \func{Successor} query that returns at line \ref{succ-return-empty} is linearized when it performs \llt($root$).
\item A \func{Successor} query that returns at line \ref{succ-return-l} at the time during the operation that $l$ was on the search path for $key$.  (This time exists, by Lemma \ref{was-on-path}.)

\item A \func{Successor} query that returns at line \ref{succ-return-succ} is linearized when it performs  its successful \vlt.
\end{compactitem}
It is easy to verify that every operation that terminates is assigned a linearization point during
the operation.

\begin{thm}
The chromatic search tree is a linearizable implementation of an ordered dictionary
with the operations \func{Get}, \ins, \del, \func{Successor}.
\end{thm}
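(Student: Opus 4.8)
The plan is to fix, in every configuration, the \emph{abstract dictionary state} to be the set of pairs $\{\langle r.k,r.v\rangle : r$ is a leaf with $r.k\neq\infty\}$, i.e.\ the $\langle$key$,$value$\rangle$ pairs stored in the leaves of the chromatic tree rooted at the leftmost grandchild of $root$ (the sentinel leaf of key $\infty$ is excluded). By Lemma~\ref{BST-property} this tree is a BST, so no two of its leaves carry the same key, and hence the set above is a well-formed dictionary state. The linearization points are already specified, and it has been noted that every terminating operation is assigned one inside its interval; so it remains to prove (i) each successful \ins\ and \del\ transforms the abstract state exactly as the sequential dictionary specification dictates, while \func{Get}, \func{Successor}, and the non-updating \del s leave it unchanged (they perform no \sct, hence cannot change the tree); and (ii) every operation returns the value the sequential specification prescribes for the abstract state at that operation's linearization point.

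For (i): by Lemma~\ref{lem-chromatic-invariants}, \tryins, \trydel, and \tryrebalance\ follow the tree update template and \sct\ is invoked only by these tree update operations, so by Theorem~\ref{thm-dotreeup-linearizable} each successful \sct\ atomically performs one of the transformations of Figure~\ref{fig-chromatic-rotations} at the \sct's linearization point. Inspecting those transformations: every rebalancing step leaves the multiset of $\langle$key$,$value$\rangle$ pairs stored at leaves unchanged (new leaves carry the same keys and values as the removed ones, in the same in-order positions); \func{Insert1} adds a leaf carrying the new $\langle key,value\rangle$ pair; \func{Insert2} replaces the value stored at a leaf; and \func{Delete} removes a leaf. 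It remains to check that each update acts where $key$ belongs. \func{Search}$(key)$ returns a leaf $l$ (with parent $p$, and for \del\ grandparent $gp$) reached by an ordinary BST search for $key$, so by Lemma~\ref{was-on-path}, $l$ was on the search path for $key$ in some configuration during the search. The corresponding \sct\ has $l\in R$ and $p$ (and $gp$) in $V$, so $l$ is in the data structure immediately before the \sct\ is linearized (a \rec\ is not finalized before it is removed, and $l$ is removed here); hence by Lemma~\ref{still-on-path}, $l$ is still on the search path for $key$ at the \sct's linearization point, and since $p$ has not changed and $ptr$ still points to $l$, $p$ is $l$'s parent there. Thus the \sct\ modifies precisely the portion of the BST where $key$ resides, so the abstract state immediately after it is obtained from the state immediately before by the intended dictionary operation.

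For (ii): \func{Get}$(key)$ is linearized at a configuration in which the reached leaf $l$ is on the search path for $key$; by Lemma~\ref{BST-property} it is then the unique leaf whose key could be $key$, so returning $l.v$ if $l.k=key$ and $\bot$ otherwise is correct, and a \del\ returning $\bot$ is identical (at the configuration from its linearization point, line~\ref{del-search-line}, $l.k\neq key$). An \ins\ or value-returning \del\ is linearized at its successful \sct, where (by (i)) $l$ is the search-path leaf for $key$, so the value returned---$l.v$ for \func{Insert2} and for \func{Delete}, $\bot$ for \func{Insert1}---is exactly the value associated with $key$ just before. For \func{Successor}$(key)$, with $key<\infty$, there are three cases. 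If it returns at line~\ref{succ-return-empty}, then $lastLeft=root$, so the snapshot from its linked \llt$(root)$ shows $root$'s left child is a leaf, which by the tree-shape invariant (Lemma~\ref{lem-chromatic-invariants}) means the dictionary is empty at that \llt's linearization point; no successor exists. If it returns at line~\ref{succ-return-l}, then at a configuration where $l$ is on the search path for $key$ we have $l.k>key$; by standard BST reasoning (using that every internal node has two children, Lemma~\ref{lem-chromatic-invariants}) the search-path leaf for $key$ holds the largest key $\le key$ when one exists and the smallest key otherwise, so $l.k>key$ forces $l.k$ to be the smallest dictionary key, which is the successor of $key$. If it returns at line~\ref{succ-return-succ}, its \vlt\ succeeds, so none of the nodes on the path it assembled in $V$---which by construction runs from $lastLeft$ through its left child down to $l$ (rightmost leaf of $lastLeft$'s left subtree), then from $lastLeft$'s right child down to $succ$ (leftmost leaf of its right subtree)---has changed since its linked \llt; hence at the \vlt's linearization point this path is intact, so $l$ and $succ$ are consecutive leaves in the in-order traversal, and moreover $l$ is still in the data structure and, by Lemma~\ref{still-on-path}, still on the search path for $key$, with $l.k\le key$; since $l.k$ is then the largest dictionary key $\le key$ and $succ.k$ the smallest dictionary key $>l.k$, $succ.k$ is the successor of $key$, reported as $\langle succ.k,succ.v\rangle$ or as $\langle\bot,\bot\rangle$ when $succ.k=\infty$. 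One also checks by inspection that the \llt s and \vlt\ of \func{Successor} satisfy their preconditions: each \rec\ placed in $V$ has a preceding linked \llt\ on it by the same process with no intervening \llt/\sct/\vlt\ on that \rec, and $V$ is ordered by the fixed in-order traversal.

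The step I expect to demand the most care is the \func{Successor} case returning at line~\ref{succ-return-succ}: showing that a successful \vlt\ over the assembled path really pins down, at a single instant, that $l$ and $succ$ are in-order-adjacent leaves---despite rebalancing steps reshaping the tree concurrently---and then combining this with the characterization of the search-path leaf to identify $succ.k$ as the successor. The remaining cases are largely bookkeeping on top of Theorem~\ref{thm-dotreeup-linearizable}, Lemma~\ref{lem-chromatic-invariants}, and Lemmas~\ref{still-on-path} and~\ref{was-on-path}.
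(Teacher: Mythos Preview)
Your approach is essentially the paper's: invoke Theorem~\ref{thm-dotreeup-linearizable} to make each \sct\ atomic, observe by inspection of Figure~\ref{fig-chromatic-rotations} that rebalancing steps preserve the multiset of leaf keys/values while \func{Insert}/\func{Delete} modify it correctly, then verify each return value at the stated linearization point using Lemmas~\ref{was-on-path}, \ref{still-on-path}, and~\ref{BST-property}. Your write-up is more explicit than the paper's (e.g.\ you spell out why the updated leaf is still on the search path at the \sct's linearization point, and you check the \llt/\vlt\ preconditions for \func{Successor}), but the skeleton matches.

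There is one genuine error in your argument for the case where \func{Successor} returns at line~\ref{succ-return-l}. You claim that ``the search-path leaf for $key$ holds the largest key $\le key$ when one exists and the smallest key otherwise,'' and from $l.k>key$ conclude that $l.k$ is the \emph{smallest} dictionary key. This characterization is false in a leaf-oriented BST satisfying the paper's BST property. For instance, after inserting $1,3,5,7$ and then deleting $3$, one can obtain a tree whose root has routing key $3$ (with leaves $\{1,5,7\}$); searching for $4$ goes right at the root and reaches leaf~$5$, yet leaf~$1<4$ is present. So $l.k>key$ does \emph{not} force $l.k$ to be the smallest key. What is true (and is what the paper uses) is the weaker fact that the search-path leaf for $key$ holds $key$, its predecessor, or its successor; hence $l.k>key$ directly gives that $l.k$ is the successor. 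Your later use of the dual statement in the line~\ref{succ-return-succ} case---that $l.k\le key$ makes $l.k$ the largest dictionary key $\le key$---is in fact correct, since one can show there is no leaf key strictly between $l.k$ and $key$; only the inference ``$l.k>key\Rightarrow$ no predecessor exists'' is wrong. Replace that step with the paper's one-line argument and the proof goes through.
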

\begin{proof}
Theorem~\ref{thm-dotreeup-linearizable} asserts that the \sct s implement atomic changes to the tree as shown in Figure~\ref{fig-chromatic-rotations}.  
By inspection of these transformations, the set of keys and associated values stored in leaves are not altered by any rebalancing steps.  Moreover, 
the transformations performed by each linearized \ins\ and \del\ maintain the invariant
that the set of keys and associated values stored in leaves of the tree is exactly the
set that should be in the dictionary.

When a \func{Get}$(key)$ is linearized, the search path for $key$ ends at the leaf found by
the traversal of the tree.  If that leaf contains $key$, \func{Get} returns the associated value,
which is correct.  If that leaf does not contain $key$, then, by Lemma \ref{BST-property}, it is nowhere else in the tree, so \func{Get} is correct to return $\bot$.

If \func{Successor}($key$) returns $\langle \bot,\bot\rangle$ at line \ref{succ-return-empty}, then at its linearization point,
the left child of the $root$ is a leaf.  By Lemma \ref{lem-chromatic-invariants}.\ref{claim-chromatic-invariants-top-of-tree}, the dictionary is empty.

If \func{Successor}($key$) returns $\langle l.k,l.v\rangle$ at line \ref{succ-return-empty}, then
at its linearization point, $l$ is the leaf on the search path for $key$.  So, $l$ contains either $key$ or its predecessor or successor at the linearization point.  Since $key<l.k$, $l$ is $key$'s successor.

Finally, suppose \func{Successor}($key$) returns $\langle succ.k,succ.v\rangle$ at line \ref{succ-return-succ}.
Then $l$ was on the search path for $key$ at some time during the search.
Since $l$ is among the nodes validated by the $\vlt$, it is not finalized, so it
is still on the search path for $key$ at the linearization point, by Lemma \ref{still-on-path}.
Since $key \geq l.k$, the successor of $key$ is the next leaf after $l$ in an in-order traversal of the tree.
Leaf $l$ is the rightmost leaf in the subtree rooted at the left child of $lastLeft$
and the key returned is the leftmost leaf in the subtree rooted at the right child of $lastLeft$.
The paths from $lastLeft$ to these two leaves are not finalized and therefore are in the tree.
Thus, the correct result is returned.
\end{proof}

\subsection{Progress}

\begin{lem} \label{lem-chromatic-tryrebalance-follows-template-infinitely-often}
If \tryrebalance\ is invoked infinitely often, then it follows the tree update template infinitely often.
% invocations of \tryrebalance\ follow the tree update template.
\end{lem}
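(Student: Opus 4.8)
The plan is to reduce the statement to controlling the two ``early return'' lines inside \func{OverweightLeft} and \func{OverweightRight}. By Lemma~\ref{lem-chromatic-invariants}.\ref{claim-chromatic-invariants-follow-template}, an invocation of \tryrebalance\ follows the tree update template \emph{unless} it returns at line~\ref{overweightleft-check-nil-rxxrlr} or line~\ref{overweightleft-check-nil-rxxrr} of \func{OverweightLeft} (or the symmetric lines of \func{OverweightRight}); in that exceptional case it follows the template up to the point at which it returns without performing an \sct. Hence it suffices to show that, whenever \tryrebalance\ is invoked infinitely often, infinitely many of its invocations do \emph{not} take such an early return. I would argue by contradiction: assume \tryrebalance\ is invoked infinitely often but, from some configuration $C_0$ onwards, every invocation of \tryrebalance\ returns at one of those lines.

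The core step is to show that each such early return can only occur because a successful \sct\ by another operation is linearized during the invocation. Consider, for concreteness, an early return at line~\ref{overweightleft-check-nil-rxxrr}: it happens because the \llt($\rxxr$) performed by \func{OverweightLeft} reports a \nil\ child of $\rxxr$. Since in our chromatic tree every node has either zero or two children, and the child pointers of a leaf never change, and no \sct\ ever stores \nil\ into a child field (its new value is always the root of a non-empty graph of new nodes), this means $\rxxr$ is a leaf at the time that \llt\ is linearized. On the other hand, the code reaches \func{OverweightLeft} only after \llt($\rxx$) returned a snapshot showing that $\rxx$ has children $\rxxl$ and $\rxxr$, with $\rxxl=l$ satisfying $\rxxl.w>1$ and $\rxxr.w=1$; by Lemma~\ref{lem-rec-in-data-structure-just-before-llt}, $\rxx$, and hence $\rxxl$ and $\rxxr$, were in the tree at that earlier time. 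Using the chromatic-tree invariant that the sum of the weights is the same along every root-to-leaf path (which applies strictly below the root of the chromatic tree, and $\rxxl.w>1$ forces $\rxx$ to lie there), together with the immutability of weights, one checks that $\rxxr$ cannot have been a leaf at that earlier time, and that a node that stays in the tree cannot change from internal to being a leaf. Therefore some node in $\{\rxx,\rxxr\}$ that the invocation performed an \llt\ on must have been modified or removed between that \llt\ and the \llt($\rxxr$) that observed the \nil\ pointer; by Constraint~\conmarkallremovedrecs\ and the semantics of \sct\ this requires a linearized \sct. The argument for line~\ref{overweightleft-check-nil-rxxrlr} is the same but longer: it tracks the chain $\rxx\to\rxxr\to\rxxrl$ (with $\rxxr.w=0$, $\rxx.w>0$, $\rxxrl.w=1$, $\rxxl.w>1$) and uses the same weight bookkeeping to conclude that the observed configuration is impossible in an unchanged chromatic tree, so again a linearized \sct\ occurs during the invocation. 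The mirror cases in \func{OverweightRight} are symmetric.

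Combining the two previous paragraphs, under the assumption every invocation of \tryrebalance\ that starts after $C_0$ overlaps a linearized (hence successful) \sct, so infinitely many \sct s are linearized after $C_0$. Every \sct\ is performed by a tree update operation that follows the template, namely \tryins, \trydel, or an invocation of \tryrebalance\ that does \emph{not} take an early return; by the assumption the last kind does not occur after $C_0$, so all of these infinitely many successful \sct s are performed by \tryins\ and \trydel, i.e., infinitely many \ins\ and \del\ operations perform their modifying \sct. I would then derive a contradiction from an accounting argument: by \cite{Boyar97amortizationresults}, after $i$ insertions and $d$ deletions at most $3i+d$ rebalancing steps (successful \tryrebalance\ \sct s) can ever be performed, whereas no rebalancing \sct\ succeeds after $C_0$; moreover, every \ins\ or \del\ that creates a violation keeps re-invoking the (always early-returning) \tryrebalance\ inside \cleanup\ and so never terminates, and there are only finitely many processes. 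Chasing these facts contradicts ``\tryrebalance\ is invoked infinitely often but follows the template only finitely often.''

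The main obstacle I expect is precisely this last step: turning ``every eventual \tryrebalance\ overlaps a successful \sct'' into an outright contradiction. The local weight bookkeeping in the core step is routine but tedious, because \func{OverweightLeft} and \func{OverweightRight} branch on many weight patterns, so one must check that \emph{every} path reaching line~\ref{overweightleft-check-nil-rxxrlr} or line~\ref{overweightleft-check-nil-rxxrr} is unreachable in an unchanged chromatic tree. The closing argument must also be phrased so that it does not implicitly assume the interfering \ins\ and \del\ operations terminate (their \sct s are linearized even if the operations themselves stall in \cleanup), which is why I would route it through the fixed process count and the $3i+d$ bound rather than through Theorem~\ref{thm-dotreeup-progress} directly.
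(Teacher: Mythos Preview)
Your core step — that an early return at line~\ref{overweightleft-check-nil-rxxrlr} or~\ref{overweightleft-check-nil-rxxrr} forces a tree-modifying operation to be concurrent with the invocation, argued via the chromatic equal-path-weight invariant — is exactly the paper's argument. The paper phrases it more compactly (assume the tree is quiescent during the invocation, read off the weights along the observed chain, and contradict the path-weight property), but the content is the same.

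The divergence is in the closing contradiction. The paper's is one sentence: pick any $I$ starting after the cutoff $t$, observe that the concurrent operation that changes the tree returns at a line other than the two early-return lines, and declare a contradiction. You are right to flag this step as delicate; the paper's sentence is terse and, read literally, does not visibly exclude the concurrent operation being a \tryins\ or \trydel\ rather than a \tryrebalance. But the route you propose does not close the gap either. The linchpin of your sketch --- that after $C_0$ every \ins\ or \del\ that creates a violation is trapped forever in \cleanup\ --- is false: another process's \tryins\ or \trydel\ can eliminate an existing violation (for instance, \func{Insert1} at an overweight leaf of weight $w>1$ produces a new internal node of weight $w-1$, removing one overweight violation), so \cleanup\ can return even though no rebalancing \sct\ ever succeeds. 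That leaves the door open for unboundedly many successful \tryins/\trydel, which is precisely the scenario you must exclude. The $3i+d$ bound from \cite{Boyar97amortizationresults} is also misapplied: it upper-bounds the number of successful rebalancing steps given $i$ insertions and $d$ deletions, but in your regime there are zero such steps, so it gives no control over $i$ and $d$ themselves.
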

\begin{proof}
We first prove that each invocation $I'$ of \tryrebalance\ that returns at line~\ref{overweightleft-check-nil-rxxrlr} or line~\ref{overweightleft-check-nil-rxxrr} is concurrent with a tree update operation that changes the tree during $I'$.
Suppose not, to derive a contradiction.
Then, since the tree is only changed by \sct s, which are only performed by tree update operations, there is some invocation $I$ of \tryrebalance\ during which the tree does not change.
Suppose $I$ returns at line~\ref{overweightleft-check-nil-rxxrlr}.
Then, by inspection of \func{OverweightLeft}, $\rxxrl.w = 1$, and $\rxxrlr = \nil$, so $\rxxrl$ is a leaf.
Furthermore, since the tree does not change during $I$, $\rxxr$ is the parent of $\rxxrl$ and $\rxxr.w = 0$, and $\rxxl$ is the sibling of $\rxxr$ and $\rxxl.w > 1$.
Therefore, the sum of weights on a path from $root$ to a leaf in the sub-tree rooted at $\rxxr$ is different from $root$ to a leaf in the sub-tree rooted at $\rxxl$, so the tree is not a chromatic search tree, which is a contradiction.
The proof when $I$ returns at line~\ref{overweightleft-check-nil-rxxrr} is similar.

We now prove the stated claim.
To derive a contradiction, suppose \tryrebalance\ is invoked infinitely often, but only finitely many invocations of \tryrebalance\ follow the tree update template.
Then, Lemma~\ref{lem-chromatic-invariants}.\ref{claim-chromatic-invariants-follow-template} implies that, after some time $t$, every invocation of \tryrebalance\ returns at line~\ref{overweightleft-check-nil-rxxrlr} or line~\ref{overweightleft-check-nil-rxxrr}.
Let $I$ be an invocation of \tryrebalance\ that returns at line~\ref{overweightleft-check-nil-rxxrlr} or line~\ref{overweightleft-check-nil-rxxrr}, and that starts after $t$.
However, we have just argued that $I$ must be concurrent with a tree update operation that changes the tree during $I$ and, hence, returns at a line different from line~\ref{overweightleft-check-nil-rxxrlr} or line~\ref{overweightleft-check-nil-rxxrr} (by inspection of the code), which is a contradiction.
\end{proof}

\begin{thm}
The chromatic tree operations are non-blocking.
\end{thm}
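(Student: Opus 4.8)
The plan is to argue by contradiction: suppose some operation is performed infinitely often but, after some time $t_0$, no operation ever terminates. I would first observe that it suffices to consider each operation type and show that if operations of that type are invoked infinitely often, then some operation succeeds (terminates) infinitely often. The main engine is Theorem~\ref{thm-dotreeup-progress} (tree update operations succeed infinitely often if invoked infinitely often), together with properties P\ref{prop-progress-llx} and P\ref{prop-progress-vlx} for \llt\ and \vlt, and the combinatorial bound from \cite{Boyar97amortizationresults} that at most $3i+d$ rebalancing steps can follow $i$ insertions and $d$ deletions.

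First I would handle \func{Get}: a \func{Get}$(key)$ performs only a bounded BST traversal using \func{Read}s (each \func{Read} returns), so it is wait-free and always terminates — this case is immediate. Next, for \ins\ and \del: each such operation is a loop that performs a \func{Search} (wait-free, bounded by the tree height at the time, which is finite) followed by a call to \tryins/\trydel, repeating until the call does not return \fail; if a violation is created it then calls \cleanup. By Lemma~\ref{lem-chromatic-invariants}.\ref{claim-chromatic-invariants-follow-template}, \tryins\ and \trydel\ follow the tree update template, so each is wait-free by Lemma~\ref{lem-dotreeup-wait-free}. Thus a single iteration of the \ins/\del\ loop takes finitely many steps. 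If some \ins\ or \del\ ran forever, it would either loop forever calling \tryins/\trydel\ (so tree update operations are performed infinitely often, yet never succeed after $t_0$, contradicting Theorem~\ref{thm-dotreeup-progress} — note a successful \tryins/\trydel\ causes \ins/\del\ to exit its loop), or it would loop forever inside \cleanup.

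The heart of the argument is therefore \cleanup. Each iteration of \cleanup's outer loop performs a bounded traversal (again bounded by the finite current tree height) and then either returns (if it reaches a leaf with no violation) or calls \tryrebalance. So if \cleanup\ runs forever, \tryrebalance\ is invoked infinitely often. By Lemma~\ref{lem-chromatic-tryrebalance-follows-template-infinitely-often}, \tryrebalance\ then follows the tree update template infinitely often, so by Theorem~\ref{thm-dotreeup-progress} some tree update operation succeeds infinitely often, i.e.\ infinitely many \sct s are linearized after $t_0$. Each such successful \sct\ performs one rebalancing step (by Lemma~\ref{lem-chromatic-invariants} and Theorem~\ref{thm-dotreeup-linearizable}), or a successful \ins/\del\ update. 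But a successful \tryrebalance\ cannot recur forever without also allowing the enclosing \ins/\del\ to make progress: by the $3i+d$ bound of \cite{Boyar97amortizationresults}, only finitely many rebalancing steps can be performed when only finitely many insertions and deletions complete; and we are assuming no insertion or deletion completes after $t_0$. This contradiction shows \cleanup\ terminates. Finally, for \func{Successor}: each attempt performs bounded \llt\ traversals followed by a \vlt; by P\ref{prop-progress-llx} the \llt s return snapshots or \finalized\ infinitely often, and by P\ref{prop-progress-vlx}, if \sct s are not performed infinitely often then \vlt s return \true\ infinitely often — and since (as just argued) the absence of termination forces the structure to stop changing, eventually no \sct\ is linearized, so a retried \func{Successor} eventually validates successfully.

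\medskip

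\textbf{Main obstacle.} The delicate point is the \cleanup/\tryrebalance\ interaction: one must rule out an infinite regress in which rebalancing steps succeed infinitely often yet no \ins\ or \del\ ever completes. This is resolved by combining the amortized bound of \cite{Boyar97amortizationresults} (finitely many completed insertions/deletions $\Rightarrow$ finitely many rebalancing steps) with the assumption-for-contradiction that no insertion or deletion terminates after $t_0$ — forcing only finitely many rebalancing steps after $t_0$, hence only finitely many successful \tryrebalance\ calls, hence \cleanup\ eventually reaches a leaf and returns. Care is also needed to confirm that a successful \sct\ inside \tryins/\trydel\ actually causes the caller's loop to exit (rather than the loop being re-entered on a stale value), which follows because the \sct\ is the operation's linearization point and the \tryins/\trydel\ return value is then not \fail.
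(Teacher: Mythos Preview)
Your proposal follows the same overall strategy as the paper and correctly identifies the key ingredients: Theorem~\ref{thm-dotreeup-progress}, Lemma~\ref{lem-chromatic-tryrebalance-follows-template-infinitely-often}, and the amortized bound from \cite{Boyar97amortizationresults}. The treatment of \cleanup\ and of \func{Successor} is right.

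There is, however, a small logical slip in the case where an \ins\ or \del\ loops forever calling \tryins/\trydel. You write that ``tree update operations are performed infinitely often, yet never succeed after $t_0$'', but Theorem~\ref{thm-dotreeup-progress} is a \emph{global} statement: it guarantees that \emph{some} tree update succeeds infinitely often, not necessarily the particular \tryins/\trydel\ calls of the stuck process. Those successes could, for instance, be \tryrebalance\ calls by other processes in their \cleanup\ phases. The paper avoids this by not splitting into ``stuck in the \tryins/\trydel\ loop'' versus ``stuck in \cleanup'': it simply observes that if any \ins/\del\ runs forever then \tryins, \trydel, or \tryrebalance\ is invoked infinitely often, hence (via Lemma~\ref{lem-chromatic-invariants}.\ref{claim-chromatic-invariants-follow-template} and Lemma~\ref{lem-chromatic-tryrebalance-follows-template-infinitely-often}) tree updates succeed infinitely often; but each process contributes at most one successful \tryins/\trydel\ after $T$ (after which it enters \cleanup\ and, by hypothesis, never returns), so all but finitely many of the successes are \tryrebalance\ calls---contradicting the bound from \cite{Boyar97amortizationresults}. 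Your \cleanup\ paragraph already contains exactly this reasoning; you just need to apply it uniformly to both sub-cases rather than claiming a direct contradiction with Theorem~\ref{thm-dotreeup-progress} in the \tryins/\trydel-loop case.
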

\begin{proof}
To derive a contradiction, suppose there is some time $T$ after which some 
processes continue to take steps but none complete an operation.
We consider two cases.

If the operations that take steps forever do not include \ins\ or \del\ operations,
then eventually no process performs any \sct s (since the queries 
\func{get}, \func{Successor} and \func{Predecessor} do not
perform \sct s).  Thus, eventually all \llt s and \vlt s succeed, and therefore all queries
terminate, a contradiction.

If the operations that take steps forever include \ins\ or \del\ operations, then they repeatedly invoke
\tryins, \trydel\ or \tryrebalance.
By Lemma~\ref{lem-chromatic-invariants}.\ref{claim-chromatic-invariants-follow-template} and Lemma~\ref{lem-chromatic-tryrebalance-follows-template-infinitely-often}, infinitely many invocations of \tryins, \trydel\ or \tryrebalance\ follow the tree update template.
By Theorem~\ref{thm-dotreeup-progress}, infinitely many of these calls will succeed.
There is only one successful \tryins\ or \trydel\ performed by each process after $T$.
So, there must be infinitely many successful calls to \tryrebalance.
Boyar, Fagerberg and Larsen proved \cite{Boyar97amortizationresults} 
proved that after a bounded number of
rebalancing steps, the tree becomes a RBT, and then no further rebalancing
steps can be applied, a contradiction.
\end{proof}

\subsection{Bounding the chromatic tree's height}
\label{height-bound}

We now show that the height of the chromatic search tree is $O(\log n + c)$ where
$n$ is the number of keys stored in the tree and $c$ is the number of incomplete
\ins\ and \del\ operations.
Each \ins\ or \del\ can create one new violation.
We prove that no \ins\ or \del\ terminates until the violation it created is destroyed.
Thus, the number of violations in the tree at any time is bounded by $c$, and the required 
bound follows.

\begin{defn}
Let $x$ be a node that is in the data structure.  We say that $x.w-1$ \textbf{overweight violations occur at $x$} if $x.w >1$.
We say that a \textbf{red-red violation occurs at $x$} if $x$ and its parent in the data structure both have weight 0.
%We say that a \textbf{violation occurs at $x$} 
%if either an overweight or a red-red violation occurs at $x$.
\end{defn}

The following lemma says that red-red violations can never be created at a node, except when the node is first added to the data structure.

\begin{lem}
\label{no-new-red-red}
Let $v$ be a node with weight 0.
Suppose that when $v$ is added to the data structure, its (unique) parent has non-zero weight. 
Then $v$ is never the child of a node with weight 0.
\end{lem}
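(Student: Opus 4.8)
The plan is to argue by contradiction and track the moment at which $v$ first becomes the child of a weight-$0$ node. Suppose the claim fails: let $C$ be the first configuration in which $v$ is the child of some node $u$ with $u.w = 0$. Since keys and weights are immutable (every node is represented by a \rec\ whose $w$ field never changes), the only way a parent-child relationship can change is via a successful \sct, which, by Lemma~\ref{lem-chromatic-invariants}.\ref{claim-chromatic-invariants-follow-template} and Theorem~\ref{thm-dotreeup-linearizable}, atomically performs one of the transformations in Figure~\ref{fig-chromatic-rotations}. So $C$ is the configuration immediately after some such \sct\ $S$, and just before $S$ the node $v$ was either not in the data structure, or was a child of a different node (or $v$'s parent had nonzero weight). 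In the first sub-case, $v$ is being \emph{added} by $S$; but then the hypothesis of the lemma says $v$'s parent at that moment has nonzero weight, contradicting $u.w = 0$. Hence $v$ is already in the tree before $S$, and $S$ changes $v$'s parent to $u$ with $u.w=0$.

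Next I would pin down which node can be $u$. In each transformation in Figure~\ref{fig-chromatic-rotations}, the only nodes whose child pointers are changed are among the finalized/removed set $R$ and the node $parent$ (the node containing $fld$); and the nodes in the fringe set $F_N = F_R$ — which includes $v$, since $v$ survives $S$ — are re-attached as children of \emph{newly created} nodes in $N$. (A node in $F_R$ that were re-attached to an old node would have had its incoming edge unchanged, contradicting that $v$'s parent changes.) So $u \in N$, i.e. $u$ is one of the new nodes created by $S$, and $u.w = 0$. Now I use the structural bookkeeping already done for the rebalancing steps: in every transformation in Figure~\ref{fig-chromatic-rotations}, whenever a newly created node has weight $0$, its children in the new subgraph are themselves newly created nodes with positive weight — in other words, no new weight-$0$ node is given an \emph{old} fringe node as a child, and no two new weight-$0$ nodes are adjacent in $G_N$. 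This is a finite case check over the 22 rebalancing steps plus \ins1, \ins2, \del\ (and their mirror images), exactly analogous to the footnote restriction on \func{W1}–\func{W4} and to the "no rebalancing step increases the number of violations" property. Since $v$ is an old node (it was in the tree before $S$), it cannot be the child of $u$ in $G_N$; contradiction.

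The main obstacle is the case analysis in the previous paragraph: I need to verify, transformation by transformation, that the diagrams in Figure~\ref{fig-chromatic-rotations} never attach a fringe (old) node directly beneath a freshly created weight-$0$ node. For the insertion/deletion transformations this is immediate from the explicit weights shown; for the rebalancing steps it follows from the stated convention that keys are preserved and weights of new nodes are as drawn, together with the modification in the footnote (no \func{W}$i$ applies when $\ux.w = 0$). Once that structural fact is in hand, the contradiction is immediate. A secondary, easy point to handle cleanly is the "$v$ is added by $S$" sub-case: I should note that by Lemma~\ref{lem-dotreeupdate-rec-cannot-be-added-after-removal} a node is added to the data structure exactly once, so "when $v$ is added" is well-defined and the lemma's hypothesis applies precisely at that $S$.
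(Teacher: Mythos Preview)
Your overall architecture (contradiction on the first bad configuration, reduce to the effect of a single \sct, split on whether $v$ is added by $S$ or survives it) matches the paper's approach. The gap is in the case analysis you propose for the surviving case. You assert that ``no new weight-$0$ node is given an old fringe node as a child.'' That claim is false. In \func{RB2}, for instance, the new nodes $\nL$ and $\nR$ both have weight~$0$, and their children $\uxll,\uxlrl,\uxlrr,\uxr$ are all old fringe nodes. So the case check you outline would fail immediately.

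What actually holds, and what the paper checks, is weaker but sufficient: whenever an \emph{old node $v$ with $v.w=0$} acquires a new weight-$0$ parent in a transformation, $v$'s \emph{previous} parent already had weight~$0$. In \func{RB2} this is true because the preconditions force $\uxl.w=\uxlr.w=0$ (these are the old parents of $\uxll,\uxlrl,\uxlrr$), and $\uxr$ is required to have positive weight (so it is not a candidate for $v$). The same pattern recurs in the other transformations: the preconditions that make a rebalancing step applicable guarantee the old parent was weight~$0$. Your contradiction then goes through, but as ``$v$'s old parent had weight~$0$, contradicting minimality of $C$,'' not as ``$v$ cannot be a child of $u$.'' So the fix is to replace your structural claim by the paper's contrapositive-style invariant and redo the inspection with the transformation preconditions in hand.
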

\begin{proof}
Node weights are immutable.
It is easy to check by inspection of each transformation in Figure~\ref{fig-chromatic-rotations}
that if $v$ is not a newly created node and it
acquires a new parent in the transformation with weight 0,
then $v$ had a parent of weight 0 prior to the transformation.
\end{proof}

\begin{defn}
A process $P$ is \textbf{in a cleanup phase for} $k$ if it is executing an \ins$(k)$ or a \del$(k)$ and it has performed a successful \sct\ inside a \tryins\ or \trydel\ that returns $createdViolation=\true$. 
If  $P$ is between line \ref{cleanup-start} and \ref{cleanup-end}, 
$location(P)$ and $parent(P)$ are the values of $P$'s local variables $l$ and $p$; otherwise $location(P)$ is the $root$ node and $parent(P)$ is \nil.
\end{defn}

We use the following invariant to show that each violation in the data structure has a
pending update operation that is responsible for removing it before terminating:
either that process is on the way towards the violation, or it will find another violation and
restart from the top of the tree, heading towards the violation.

\begin{lem}
In every configuration, there exists an injective mapping $\rho$ from violations to processes such that, for every violation $x$, 
\begin{compactitem}
\item
{\rm (A)} process $\rho(x)$ is in a cleanup phase for some key $k_x$ and 
\item
{\rm (B)} $x$ is on the search path from $root$ for $k_x$ and
\item {\rm (C)} either\\ 
{\rm (C1)} the search path for $k_x$ from $location(\rho(x))$ contains the violation $x$, or\\
{\rm (C2)} $location(\rho(x)).w=0$ and $parent(\rho(x)).w=0$, or\\
{\rm (C3)} in the prefix of the search path for $k_x$ from $location(\rho(x))$ up to and including
the first non-finalized node (or the entire search path if all nodes are finalized), there
is a node with weight greater than 1 or two nodes in a row with weight~0.
\end{compactitem}
\end{lem}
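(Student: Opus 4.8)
The plan is to prove this by induction on the steps of the execution, carrying along the \emph{existence} of a witness map $\rho$ as the invariant: in the configuration before a step $s$ there is a map $\rho$ satisfying (A), (B) and (C), and I will exhibit a witness $\rho'$ for the configuration after $s$. The base case is immediate, since the initial tree has no violations. The vast majority of steps are harmless --- any \func{read}, \llt, \vlt, \func{Condition}/\func{NextNode} computation, failed \sct, or successful \sct\ inside a \tryins/\trydel\ that does not set $createdViolation$ changes neither the set of violations (by inspection of the transformations of Figure~\ref{fig-chromatic-rotations} together with Lemma~\ref{lem-chromatic-invariants} and Theorem~\ref{thm-dotreeup-linearizable}), nor any $location(\cdot)$ or $parent(\cdot)$, nor the cleanup-phase status of any process --- so $\rho'=\rho$ works. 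The interesting steps are exactly four: (i) a successful \sct\ inside \tryins/\trydel\ with $createdViolation=\true$; (ii) a successful \sct\ performing a rebalancing step inside \tryrebalance; (iii) a \cleanup\ step that advances $l$ and $p$ (lines~\ref{move-l-left}--\ref{move-l-right}); and (iv) a step at which \cleanup\ returns at line~\ref{cleanup-terminate}.

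For case (i), the operating process $P$ enters a cleanup phase for its key $k$ at step $s$; since $P$ is not yet between lines~\ref{cleanup-start} and~\ref{cleanup-end}, $location(P)=root$. By inspecting \tryins\ and \trydel\ (using Theorem~\ref{thm-dotreeup-linearizable}, so $s$ atomically performs one of the transformations of Figure~\ref{fig-chromatic-rotations}), the unique violation $x$ created sits at a node on the search path for $k$ immediately after $s$; hence (A) and (B) hold with $\rho'(x)=P$, and (C1) holds because the search path for $k$ \emph{from $root$} is the whole search path. Since $P$ was not in a cleanup phase before $s$, $P\notin\mathrm{range}(\rho)$, so $\rho'=\rho\cup\{x\mapsto P\}$ is injective and still satisfies (A),(B),(C) for the old violations. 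For case (ii), the Boyar--Fagerberg--Larsen property says $s$ creates no violation, and VIOL says each old violation $y$ is afterwards either eliminated or present at some node $y'$ still on the search path for $k_y$; I would drop the eliminated ones from the domain and keep the same process for the survivors, which preserves injectivity and re-establishes (A) (no process changes cleanup-phase status at $s$) and (B). Re-establishing (C) is the heart of the matter: if $\rho(y)=P$ is the process performing $s$, then after $s$ it will finish \tryrebalance\ and jump to line~\ref{cleanup-start}, resetting $location(P)$ to $root$, and I would argue that in the intervening window (where $location(P)$ still equals the possibly finalized node at which \cleanup\ found its violation) one of (C1) or (C3) holds --- (C1) when $y'$ lies strictly below the reattached fringe of the transformation, whose nodes and their subtrees are untouched, and (C3) to cover the case where the search path from $location(P)$ runs through the finalized old nodes; if $\rho(y)$ does not perform $s$, I would re-establish (C) by splitting on whether the region modified by $s$ lies above, at, or below $location(\rho(y))$ on the search path for $k_y$, using that finalized nodes never change (Lemma~\ref{lem-dotreeupdate-rec-cannot-be-added-after-removal}), Lemma~\ref{still-on-path}, and, for the red--red sub-cases, Lemma~\ref{no-new-red-red}.

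For case (iii), $location(\rho(x))$ moves one step down the search path for $k_x$ for every $x$ mapped to the advancing process $P$. If (C1) held with $x\neq location(P)$, then $x$ was strictly below on the search path, so it is still reached from the new $location(P)$ (using Lemma~\ref{BST-property} and that the advance follows exactly the BST direction for $k_x$); if (C1) held with $x=location(P)$, I would use immutability of weights plus the fact that the previous iteration passed the test at line~\ref{find-violation} without calling \tryrebalance\ to deduce that $x$ must be a red--red violation whose data-structure parent differs from $parent(P)$, and then show (C2) or (C3) holds after the advance; the (C2) and (C3) cases are propagated by similar bookkeeping. For case (iv), \cleanup\ returns at line~\ref{cleanup-terminate} only from a leaf whose last evaluation of line~\ref{find-violation} returned false; using immutability of weights, the loop structure, and Lemma~\ref{no-new-red-red}, I would show no current violation is mapped to $P$, so removing $P$ from the range is harmless.

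The main obstacle is clearly clause-(C) preservation in case (ii): tracking a violation's location across a concurrent rebalancing step that edits the tree right where some \cleanup\ process is searching, and certifying that (C3) genuinely catches every \cleanup\ that has wandered into a finalized region --- this is precisely the scenario sketched informally before Section~\ref{sec-height}, and turning it into an airtight analysis over all $22$ rebalancing transformations and their mirror images (checking in each that fringe subtrees are reattached intact and that no new red--red violation can appear on a stale search prefix, via Lemma~\ref{no-new-red-red}) is the bulk of the work. A secondary subtlety, threaded through all cases, is maintaining \emph{injectivity} of $\rho$ as violations are created, moved, eliminated, and reassigned.
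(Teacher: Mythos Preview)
Your inductive framework matches the paper's, but your handling of successful \sct s has a structural gap. Every successful \sct\ --- including a \tryins/\trydel\ \sct\ with $createdViolation=\false$ --- removes nodes and adds new ones, so violations at removed nodes disappear and violations at added nodes appear; the map $\rho$ must be rebuilt, not merely restricted or extended. Your $\rho'=\rho$ for the ``harmless'' cases does not type-check (e.g.\ \func{Insert2} with $\ux.w>1$ moves $\ux.w-1$ overweight violations from $\ux$ to the new leaf; \func{Insert1} with $\ux.w>2$, which has $createdViolation=\false$, moves $\ux.w-2$ of them), and your $\rho'=\rho\cup\{x\mapsto P\}$ in case~(i) ignores that \del\ may carry over many overweight violations from $\ux$ and its sibling, each of which needs a process. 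Likewise, rebalancing steps \emph{do} create violations at new nodes (e.g.\ \func{BLK} can create a red-red at $n$), so ``drop the eliminated ones and keep the same process for the survivors'' presupposes an old-to-new correspondence you never supply. The paper treats \emph{all} successful \sct s in a single case, giving explicit tables that, for each transformation in Figure~\ref{fig-chromatic-rotations}, list the violations at every newly added node and map each one to a specific destroyed violation (or, in exactly two places, to the process just entering its cleanup phase); (B) and (C) are then checked by inspection against the figure. Relatedly, you invoke VIOL as a black-box lemma, but VIOL is not proved independently in the paper --- this lemma \emph{is} its formalization, and those tables are what establish it, so appealing to VIOL here is circular.

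Two smaller points. You omit the step at line~\ref{cleanup-end} (the paper's Case~2), which changes $location(P)$ to $root$ and is a separate step from both the \sct\ and the advance at lines~\ref{move-l-left}--\ref{move-l-right}. And in your case~(iii), the sub-case ``$x$ sits at $location(P)$'' is actually impossible: the paper rules it out using the previous iteration's test at line~\ref{find-violation}, immutability of weights, and Lemma~\ref{no-new-red-red} (which forbids a red-red violation from ever appearing at a node whose parent was non-red when the node was added). Your proposed handling of that sub-case --- deducing a red-red whose data-structure parent differs from $parent(P)$ --- contradicts Lemma~\ref{no-new-red-red} and is both unnecessary and incorrect.
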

\begin{proof}
In the initial configuration, there are no violations, so the invariant is trivially satisfied.
We show that any step $S$ by any process $P$ preserves the invariant.  
We assume there is a function $\rho$ satisfying the claim for the configuration $C$ immediately before $S$ and show that there is a function $\rho'$ satisfying the claim for the configuration $C'$ immediately after $S$.
The only step that can cause a process to leave its cleanup phase is
the termination of an \ins\ or \del\ that is in its cleanup phase.
The only
steps that can change $location(P)$ and $parent(P)$ are $P$'s execution of line \ref{cleanup-end} or the read of the child pointer on line \ref{move-l-left} or \ref{move-l-right}.  (We think of all
of the updates to local variables in the braces on those lines as happening atomically
with the read of the child pointer.)
The only steps that can change child pointers or finalize nodes  
are successful \sct s.   No other steps $S$ can cause the invariant to become false.

{\bf Case 1} $S$ is the termination of an \ins\ or \del\ that is in its cleanup phase:
We choose $\rho'=\rho$.
$S$ happens when the test in line \ref{cleanup-terminate} is true, meaning that $location(P)$ is a leaf.
Leaves always have weight greater than 0.  The weight of the leaf cannot be greater than 1, because then
the process would have exited the loop in the previous iteration after the test at line  \ref{find-violation} returned true (since weights of nodes never change).
Thus, $location(P)$ is a leaf with weight 1.
So, $P$ cannot be $\rho(x)$ for any violation $x$, so $S$ cannot make the invariant become false.

{\bf Case 2} $S$ is an execution of line \ref{cleanup-end}:
We choose $\rho'=\rho$.
Step $S$ changes $location(P)$ to $root$.  
If $P\neq \rho(x)$ for any violation $x$, then this step cannot affect the truth of the invariant.  
Now suppose $P=\rho(x_0)$ for some violation $x_0$.
The truth of properties (A) and (B) are not affected by a change in $location(P)$
and property (C) is not affected for any violation $x\neq x_0$.
Since $\rho$ satisfies property (B) for violation $x_0$ before $S$, it will satisfy 
property (C1) for $x_0$ after $S$.

{\bf Case 3} $S$ is a read of the left child pointer on line \ref{move-l-left}:
We choose $\rho'=\rho$.
Step $S$ changes $location(P)$ from some node $v$ to node $v_L$, which is $v$'s left child when $S$ is performed.
If $P\neq \rho(x)$ for any violation $x$, then this step cannot affect the truth of the invariant.  
So, suppose $P=\rho(x_0)$ for some violation $x_0$.
By (A), $P$ is in a cleanup phase for $k_{x_0}$.
The truth of (A) and (B) are not affected by a change in $location(P)$
and property (C) is not affected for any violation $x\neq x_0$.
So it 
remains to prove that (C) is true for violation $x_0$ in $C'$.

First, we prove $v.w\leq 1$, and hence there is never an overweight violation at $v$.
If $v$ is the $root$, then $v.w=1$.
Otherwise, $S$ does not occur during the first iteration of \cleanup's inner loop.
In the previous iteration, $v.w\leq 1$ at line \ref{find-violation} (otherwise, the loop would
have terminated).

Next, we prove that there is no red-red violation at $v$ when $S$ occurs.
If $v$ is the root or is not in the data structure when $S$ occurs, 
then there cannot be a red-red violation at $v$ when $S$ occurs, by definition.
Otherwise, node $v$ was read as the child of some other node $u$ in the previous iteration
of \cleanup's inner loop
and line \ref{find-violation} found that $u.w\neq 0$ or $v.w\neq 0$ (otherwise the loop would have terminated).
So, at some time before $S$ (and when $v$ was in the data structure), 
there was no red-red violation at $v$.
By Lemma \ref{no-new-red-red}, there is no red-red violation at $v$ when $S$ is performed.

Next, we prove that (C2) cannot be true for $x_0$ in configuration $C$.
If $S$ is in the first iteration of \cleanup's inner loop, then $location(P)=root$, which has weight 1.
If $S$ is not in the first iteration of \cleanup's inner loop, then the previous iteration found
$parent(P).w\neq 0$ or $location(P).w\neq 0$ (otherwise the loop would have terminated).

So we consider two cases, depending on whether (C1) or (C3) is true in configuration $C$.

{\bf Case 3a} (C1) is true in configuration $C$:
Thus, when $S$ is performed, 
the violation $x_0$ is on the search path for $k_{x_0}$ from $v$, but it is not at $v$ (as argued above).
$S$ reads the {\it left} child of $v$, so $k_{x_0} < v.k$ (since the key of node $v$ never changes).
So, $x_0$ must be on the search path for $k_{x_0}$ from $v_L$.  This means (C1) is satisfied for $x_0$ in configuration $C'$.

{\bf Case 3b} (C3) is true in configuration $C$:
We argued above that $v.w \le 1$, so the prefix must contain two nodes in a row with weight 0.
%If the prefix contains a node with weight greater than 1, we argued above that $v.w\neq 1$, so
%after $S$, (C3) is true.
%So suppose the prefix contains two nodes in a row with weight 0.
If they are the first two nodes,
$v$ and $v_L$, then (C2) is true after $S$.  Otherwise, (C3) is still true after $S$.

{\bf Case 4} $S$ is a read of the right child pointer on line \ref{move-l-right}:
The argument is symmetric to Case 3.

{\bf Case 5} $S$ is a successful \sct:
We must define the mapping $\rho'$ for each violation $x$ in configuration $C'$.
By Lemma \ref{no-new-red-red} and the fact that node weights are immutable, no transformation in Figure~\ref{fig-chromatic-rotations} can create a new violation
at a node that was already in the data structure in configuration $C$.
So, if $x$ is at a node that was in the data structure in configuration $C$,
$x$ was a violation in configuration $C$, and $\rho(x)$ is well-defined.
In this case, we let $\rho'(x)=\rho(x)$.

If $x$ is at a node that was added to the data structure by $S$, then we must define
$\rho(x)$ on a case-by-case basis for all transformations described in Figure~\ref{fig-chromatic-rotations}.
(The symmetric operations are handled symmetrically.)

If $x$ is a red-red violation at a newly added node, 
we define $\rho'(x)$ according to the following table.

\medskip

%\begin{figure}[ph!]
%\hspace{-5mm}
%\begin{minipage}{1\textwidth}
\noindent
\begin{tabular}{|l|l|l|}\hline
Transformation & Red-red violations $x$ created by $S$ & $\rho'(x)$\\\hline
RB1 & none created & --\\\hline
RB2 & none created & --\\\hline
BLK & at $n$ (if $\ux.w=1$ and $u.w=0$) & $\rho$(red-red violation at one of $\uxll,\uxlr,\uxrl,\uxrr$)\footnotemark\\\hline
PUSH & none created & --\\\hline
W1,W2,W3,W4 & none created\footnotemark & -- \\\hline
W5 & at $n$ (if $\ux.w=u.w=0$) & $\rho$(red-red violation at $\ux$) \\\hline
W6 & at $n$ (if $\ux.w=u.w=0$) & $\rho$(red-red violation at $\ux$)\\\hline
W7 & none created & --\\\hline
INSERT1 & at $n$ (if $\ux.w=1$ and $u.w=0$) & process performing the \ins\\\hline
INSERT2 & none created & -- \\\hline
DELETE & at $n$ (if $\ux.w=\uxr.w=u.w=0$)&$\rho$(red-red violation at $\ux$)\\\hline
\end{tabular}

\footnotetext[1]{By inspection of the decision tree in Figure \ref{fig-decision-tree}, BLK is only applied if one of $\uxll,\uxlr,\uxrl$ or $\uxrr$ has weight 0, and therefore a red-red violation, in configuration $C$, and this red-red violation is eliminated by the transformation.}
\footnotetext[2]{By inspection of the decision tree in Figure \ref{fig-decision-tree}, W1, W2, W3 and W4 are applied only if $\ux.w>0$.}
\eric{Should the second footnote be removed now that we have changed the rebalancing steps to make this a precondition?}

\medskip

For each newly added node that has $k$ overweight violations after $S$,
$\rho'$ maps them to the $k$ distinct processes $\{\rho(q) : q\in Q\}$, where
$Q$ is given by the following table. 

\medskip

\noindent
\begin{tabular}{|l|ll|l|}\hline
Transformation & \multicolumn{2}{c|}{Overweight violations created by $S$} & Set $Q$ of overweight violations before $S$ \\\hline
RB1 & $\ux.w-1$ at $n$ &(if $\ux.w>1$) & $\ux.w-1$ at $\ux$\\\hline
RB2 & $\ux.w-1$ at $n$ &(if $\ux.w>1$) & $\ux.w-1$ at $\ux$\\\hline
BLK & $\ux.w-2$ at $n$ &(if $\ux.w>2$) & $\ux.w-2$ of the $\ux.w-1$ at $\ux$\\\hline
PUSH & $\ux.w$ at $n$  &(if $\ux.w>0$)  & $\ux.w-1$ at $\ux$, and 1 at $\uxl$\\
PUSH & $\uxl.w-2$ at $\nL$ &(if $\uxl.w>2$) & $\uxl.w-2$ of the $\uxl.w-1$ at $\uxl$\\\hline
W1 & $\ux.w-1$ at $n$  &(if $\ux.w>1$)  & $\ux.w-1$ at $\ux$\\
W1 & $\uxl.w-2$ at $\nLL$ &(if $\uxl.w>2$) & $\uxl.w-2$ of the $\uxl.w-1$ at $\uxl$\\
W1 & $\uxrl.w-2$ at $\nLR$ &(if $\uxrl.w>2$) & $\uxrl.w-2$ of the $\uxrl.w-1$ at $\uxl$\\\hline
W2 & $\ux.w-1$ at $n$ & (if $\ux.w>1$) & $\ux.w-1$ at $\ux$\\
W2 & $\uxl.w-2$ at $\nLL$ & (if $\uxl.w>2$) & $\uxl.w-2$ of the $\uxl.w-1$ at $\uxl$\\\hline
W3 & $\ux.w-1$ at $n$  &(if $\ux.w>1$)  & $\ux.w-1$ at $\ux$\\
W3 & $\uxl.w-2$ at $\nLLL$ &(if $\uxl.w>2$) & $\uxl.w-2$ of the $\uxl.w-1$ at $\uxl$\\\hline
W4 & $\ux.w-1$ at $n$ &(if $\ux.w>1$) & $\ux.w-1$ at $\ux$\\
W4 & $\uxl.w-2$ at $\nLL$ &(if $\uxl.w>2$) & $\uxl.w-2$ of the $\uxl.w-1$ at $\uxl$\\\hline
W5 & $\ux.w-1$ at $n$ &(if $\ux.w>1$) & $\ux.w-1$ at $\ux$\\
W5 & $\uxl.w-2$ at $\nLL$ &(if $\uxl.w>2$) & $\uxl.w-2$ of the $\uxl.w-1$ at $\uxl$\\\hline
W6 & $\ux.w-1$ at $n$ &(if $\ux.w>1$) & $\ux.w-1$ at $\ux$\\
W6 & $\uxl.w-2$ at $\nLL$ &(if $\uxl.w>2$) & $\uxl.w-2$ of the $\uxl.w-1$ at $\uxl$\\\hline
W7 & $\ux.w$ at $n$  &(if $\ux.w>0$)  & $\ux.w-1$ at $\ux$, and 1 at $\uxl$\\
W7 & $\uxl.w-2$ at $\nL$ &(if $\uxl.w>2$) & $\uxl.w-2$ of the $\uxl.w-1$ at $\uxl$\\
W7 & $\uxr.w-2$ at $\nR$ &(if $\uxr.w>2$) & $\uxr.w-2$ of the $\uxr.w-1$ at $\uxr$\\\hline
INSERT1 & $\ux.w-2$ at $n$ &(if $\ux.w>2$) & $\ux.w-2$ of the $\ux.w-1$ at $\ux$\\\hline
INSERT2 & $\ux.w-1$ at $n$ &(if $\ux.w>1$) & $\ux.w-1$ at $\ux$\\\hline
DELETE & $\ux.w+\uxr.w-1$ at $n$ & (if $\ux.w+\uxr.w>1$) & $\max(0,\ux.w-1)$ at $\ux$ and\\
	&							&					  & $\max(0,\uxr.w-1)$ at $\uxr$\footnotemark\\
\hline
\end{tabular}

\footnotetext{In this case, the number of violations in $Q$ is one too small if both $\ux.w$ and $\uxr.w$ are greater than 0, so the remaining violation is assigned to the process that performed the \del's \sct.}
%\end{minipage}
%\end{figure}

\medskip

The function $\rho'$ is injective, since $\rho'$ maps each violation created by $S$ to
a distinct process that $\rho$ assigned to a violation that has been removed by $S$, with only two exceptions:  for red-red violations caused by INSERT1 and one overweight violation
caused by DELETE, $\rho'$ maps 
the red-red violation to the process that has just begun its cleanup phase (and therefore
was not assigned any violation by $\rho$).

Let $x$ be any violation in the tree in configuration $C'$.  We show that $\rho'$ satisfies
properties (A), (B) and (C) for $x$ in configuration $C'$.

{\bf Property (A)}:
Every process in the image of $\rho'$ was either in the image of $\rho$ or a process that just
entered its cleanup phase at step $S$, so every process in the image of $\rho'$ is in its
cleanup phase.  

{\bf Property (B) and (C)}: We consider several subcases.

{\bf Subcase 5a}
Suppose $S$ is an INSERT1's \sct, and $x$ is the red-red violation  created by $S$.
Then, $P$ is in its cleanup phase for the inserted key, which is one of the children of the node containing the red-red violation $x$.
Since the tree is a BST, $x$ is on the search path for this key, so (B) holds.

In this subcase, $location(\rho'(x)) = root$ since $P=\rho'(x)$ has just entered its cleanup phase.
So property (B) implies property (C1).

{\bf Subcase 5b}
Suppose $S$ is a DELETE's \sct, and $x$ is the overweight violation assigned to $P$ by $\rho'$. 
Then, $P$ is in a cleanup phase for the deleted key, which was in one of the children of $\ux$ before $S$.
Therefore, $x$ (at the root of the newly inserted subtree) is on the search path for this key, so (B) holds.

As in the previous subcase, $location(\rho'(x)) = root$ since $P=\rho'(x)$ 
has just entered its cleanup phase.
So property (B) implies property (C1).

{\bf Subcase 5c}
If $x$ is at a node that was added to the data structure by $S$ (and is not covered by the above
two cases), then $\rho'(x)$ is
$\rho(y)$ for some violation $y$ that has been removed from the tree by $S$, as described
in the above two tables.  
Let $k$ be the key such that process $\rho(y)=\rho'(x)$ is in the cleanup phase for $k$.
By property (B), $y$ was on the search path for $k$ before $S$.
It is easy to check by inspection of the tables and Figure~\ref{fig-chromatic-rotations} that 
any search path that went through $y$'s node in configuration $C$ goes through $x$'s node in configuration $C'$.
(We designed the tables to have this property.)
Thus, since $y$ was on the search path for $k$ in configuration $C$, 
$x$ is on the search path for $k$ in configuration $C'$, satisfying property (B).

If (C2) is true for violation $y$ in configuration $C$, then (C2) is true for $x$
in configuration $C'$ (since
$S$ does not affect $location()$ or $parent()$ and $\rho(y)=\rho'(x)$).
If (C3) is true for violation $y$ in configuration $C$, then (C3) is true for $x$ in 
configuration $C'$ (since any node that is finalized remains finalized forever, and its child pointers do not change).

So, for the remainder of the proof of subcase 5c, suppose (C1) is true for $y$ in configuration $C$.
Let $l=location(\rho(y))$ in configuration $C$.
Then $y$ is on the search path for $k$ from $l$ in configuration $C$.

First, suppose $S$ removes $l$ from the data structure.
\begin{compactitem}
\item
If $y$ is a red-red violation at node $l$ in configuration $C$, then
the red-red violation was already there when process $\rho(y)$ read $l$ as the child 
of some other node (by Lemma \ref{no-new-red-red}) and (C2) is true for $x$ in configuration $C'$.
\item
If $y$ is an overweight violation at node $l$ in configuration $C$, then it makes (C3) true for $x$ in configuration $C'$.
\item
Otherwise, 
since both $l$ and its descendant, the parent of the node that contains $y$, are removed by $S$,
the entire path between these two nodes is removed from the data structure by $S$.
So, all nodes along this path are finalized by $S$ because Constraint~\conmarkallremovedrecs\ is satisfied.  Thus, the violation $y$ makes
(C3) true for $x$ in configuration $C'$.
\end{compactitem}

Now, suppose $S$ does not remove $l$ from the data structure.
In  configuration $C$, the search path from $l$ for $k$ contains $y$.
It is easy to check by inspection of the tables defining $\rho'$ and Figure~\ref{fig-chromatic-rotations} that 
any search path from $l$ that went through $y$'s node in configuration $C$ 
goes through $x$'s node in  configuration  $C'$.
So, (C1) is true in configuration $C'$.

{\bf Subcase 5d}
If $x$ is at a node that was in the data structure in configuration $C$, 
then $\rho'(x)=\rho(x)$.
Let $k$ be the key such that this process is in the cleanup phase for $k$.
Since $x$ was on the search path for $k$ in configuration $C$ 
and $S$ did not remove $x$ from the data structure,
$x$ is still on the search path for $k$ in configuration $C'$ (by inspection of Figure~\ref{fig-chromatic-rotations}).
This establishes property (B).

If (C2) or (C3) is true for $x$ in configuration $C$, 
then it is also true for $x$ in configuration $C'$, for
the same reason as in Subcase 5c. 

So, suppose (C1) is true for $x$ in configuration $C$.
Let $l=location(\rho(x))$ in configuration $C$.  
Then, (C1) says that $x$ is on the search path for $k$ from $l$ in configuration $C$. 
If $S$ does not change any of the child pointers on this path between $l$ and $x$, then
$x$ is still on the search path from $location(\rho'(x)) = l$ in configuration $C'$, 
so property (C1) holds for $x$ in $C'$.
So, suppose $S$ does change the child pointer of some node on this path from $old$ to $new$.
Then the search path from $l$ for $k$ in configuration $C$
goes through $old$ to some node $f$ in the Fringe set 
$F$ of $S$ and then onward to the node containing violation $x$.
By inspection of the transformations in Figure~\ref{fig-chromatic-rotations},
the search path for $k$ from $l$ in configuration $C'$ 
goes through $new$ to the same node $f$, and then
onward to the node containing the violation $x$.
Thus, property (C1) is true for $x$ in configuration $C'$.
\end{proof}

\begin{cor}
\label{violation-bound}
The number of violations in the data structure is bounded by the number of incomplete \ins\ and \del\ operations.
\end{cor}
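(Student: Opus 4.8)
The plan is to obtain the corollary as an immediate consequence of the invariant lemma just established, which guarantees that in every configuration there is an injective mapping $\rho$ from the set of violations to the set of processes such that, for each violation $x$, the process $\rho(x)$ is in a cleanup phase for some key $k_x$ (property (A)). First I would note that, because $\rho$ is injective, the number of violations present in any configuration is at most the number of processes that are in a cleanup phase in that configuration. Second, I would unwind the definition of a cleanup phase: a process $P$ is in a cleanup phase for $k$ precisely when it is executing an \ins$(k)$ or a \del$(k)$ that has already performed a successful \sct\ inside \tryins\ or \trydel\ returning a value indicating a violation was created, but has not yet returned. Such a process is therefore in the midst of an \ins\ or \del\ operation that has not completed, i.e.\ an incomplete \ins\ or \del. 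Since each process executes at most one operation at a time, the map sending a process in a cleanup phase to the operation it is currently executing is itself injective, so the number of processes in a cleanup phase is at most the number of incomplete \ins\ and \del\ operations. Composing the two bounds yields the claim.

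I do not expect any real obstacle here: all of the substantive work — properties (B) and (C) of the invariant and the case analysis over the transformations in Figure~\ref{fig-chromatic-rotations} — is already contained in the invariant lemma, and only property (A) and the injectivity of $\rho$ are used. The only point worth stating explicitly is that the notion of an ``incomplete'' \ins/\del\ operation in the statement of the corollary is exactly the class of operations that can appear in the image of $\rho$, namely operations that have begun and not yet returned, so the correspondence in the last step is tight and needs no further argument. This corollary is then precisely the ingredient needed, together with the $O(\log n)$ height of a red-black tree and the fact that eliminating a single violation decreases the height by at most one, to conclude the $O(c + \log n)$ height bound of Section~\ref{sec-height}.
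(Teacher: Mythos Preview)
Your proposal is correct and is exactly the intended derivation: the paper states this as an immediate corollary of the preceding invariant lemma without further proof, and your argument—injectivity of $\rho$ together with property (A) and the observation that a process in a cleanup phase is executing an incomplete \ins\ or \del—is precisely what makes it immediate.
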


In the following discussion, we are discussing ``pure'' chromatic trees, without the dummy
nodes with key $\infty$ that appear at the top of our tree.
The sum of weights on a path from the root to a leaf of a chromatic tree
is called the \textit{path weight} of that leaf.
The {\it height} of a node $v$, denoted $h(v)$ is the maximum number of nodes on a path from $v$ to a leaf descendant of $v$.
We also define the \textit{weighted height} of a node $v$ as follows.
 \begin{displaymath}
   wh(v) = \left\{
     \begin{array}{ll}
       v.w & \mbox{if } v \mbox{ is a leaf} \\
       \max(wh(v.left), wh(v.right))+v.w & \mbox{otherwise}
     \end{array}
   \right.
  \end{displaymath}
%Since the path weights of all leaves of a chromatic tree are equal, we can simplify $wh(v)$.
% \begin{displaymath}
%   wh(v) = \left\{
%     \begin{array}{ll}
%       v.w & \mbox{if }v \mbox{ is a leaf} \\
%       wh(v.left)+v.w &  \mbox{otherwise}
%     \end{array}
%   \right.
%  \end{displaymath}

\begin{lem} \label{lem-chromatic-claims}
Consider a chromatic tree rooted at $root$ that contains $n$ nodes and $c$ violations.
Suppose $T$ is any red black tree rooted at $root_T$ that results from performing a sequence 
of rebalancing steps on the tree rooted at $root$ to eliminate all violations.
Then, the following claims hold.
\begin{enumerate}
\item $h(root) \le 2wh(root) + c$
\label{claim-chromatic-h-wh}
%\item $wh(root)$ changes by at most $c$ as a result of fixing all violations
%\label{claim-chromatic-wh-changes-by-at-most-c}
\item $wh(root) \le wh(root_T) + c$
\label{claim-chromatic-wh-whT}
\item $wh(root_T) \le h(root_T)$
\label{claim-chromatic-whT-hT}
\end{enumerate}
\end{lem}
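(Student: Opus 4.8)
The three claims bound the (unweighted) height of a chromatic tree in terms of the height of a red-black tree obtained by removing all violations. I would prove the three parts essentially independently, each by structural induction on the tree.

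For Claim~\ref{claim-chromatic-h-wh}, $h(root) \le 2\,wh(root) + c$, the plan is to induct on the height of the subtree. The intuition is that in a violation-free chromatic tree, every internal node has weight at least $1$ except red (weight-$0$) nodes, and no two weight-$0$ nodes are adjacent, so along any root-to-leaf path at most half the nodes have weight $0$; hence $h \le 2\,wh$. Each overweight violation contributes an extra unit of weight that does not ``cost'' a node, which only helps, and each red-red violation allows one extra adjacent weight-$0$ node, costing at most one extra node per violation. Concretely I would strengthen the statement slightly to track, for a node $v$, whether $v$ itself has weight $0$ (so that I know whether $v$'s parent is allowed to be weight $0$), prove $h(v) \le 2\,wh(v) + c_v$ where $c_v$ is the number of violations in the subtree rooted at $v$, and then combine the bounds from the two children, being careful in the case where $v$ has weight $0$: if a child also has weight $0$ that is a red-red violation charged to $c_v$, absorbing the $+1$.

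For Claim~\ref{claim-chromatic-wh-whT}, $wh(root) \le wh(root_T) + c$, the idea is that each rebalancing step in the sequence transforming the chromatic tree into $T$ either leaves $wh(root)$ unchanged or decreases it, and the only steps that decrease it are those that eliminate a violation — and each such step decreases $wh$ by at most one. Since the total number of violations eliminated is exactly $c$ (no rebalancing step creates a new violation, by the properties of the Boyar--Fagerberg--Larsen steps cited earlier), $wh$ decreases by at most $c$ in total, giving the bound. I would verify, by inspection of the transformations in Figure~\ref{fig-chromatic-rotations}, that each one preserves the path weight of every leaf (this is already noted: ``the sum of the weights on each path from the root to a leaf is the same''), so in fact $wh$ is invariant under any rebalancing step that does not change the multiset of path weights; the subtle point is what happens at the root when a step reduces an overweight root or merges weights. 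I expect to phrase this as: $wh(root)$ equals the common path weight plus the number of ``extra'' weight units (overweight/red-red slack) on the heaviest path, and each violation eliminated removes at most one such extra unit on that path.

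Claim~\ref{claim-chromatic-whT-hT}, $wh(root_T) \le h(root_T)$, is the easiest: in any tree, $wh(v) \le h(v) \cdot (\max \text{weight})$, but in a red-black tree every weight is $0$ or $1$, so summing weights along a path is at most the number of nodes on it; a one-line induction on height suffices. The main obstacle will be Claim~\ref{claim-chromatic-h-wh} --- in particular getting the charging of red-red violations exactly right when a weight-$0$ node has a weight-$0$ parent, so that each such adjacency is counted once and the additive $c$ is not overcounted across the recursion. I would handle this by making the inductive hypothesis assert $h(v) \le 2\,wh(v) + c_v$ together with the sharper bound $h(v) \le 2\,wh(v) + c_v - 1$ whenever $v.w = 0$ and $v$ is not the top of a red-red violation within its subtree, so that the parent can safely add $2\cdot 0$ for its own contribution without losing the needed slack.
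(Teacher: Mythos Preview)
Your approach to Claim~\ref{claim-chromatic-whT-hT} matches the paper's; it is indeed a one-liner.

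For Claim~\ref{claim-chromatic-h-wh}, your inductive plan is workable in principle, but the paper's argument is far simpler and avoids the delicate strengthening you anticipate. The paper just counts nodes along an arbitrary root-to-leaf path: the nodes of positive weight contribute at least one each to $wh(root)$, so there are at most $wh(root)$ of them; among the weight-$0$ nodes, those whose parent has positive weight alternate with positive-weight nodes and so also number at most $wh(root)$; the remaining weight-$0$ nodes are exactly the red-red violations on that path, hence at most $c$. Summing gives $2\,wh(root)+c$. No induction, no bookkeeping of $c_v$, no case analysis on whether $v.w=0$. Your proposed strengthened hypothesis (``$h(v)\le 2\,wh(v)+c_v-1$ when $v.w=0$ and $v$ is not the top of a red-red violation within its subtree'') is not clearly well-defined as stated, since red-red violations are located at the child, and making the induction go through when the child achieving $\max h$ differs from the child achieving $\max wh$ would require further care you have not spelled out.

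For Claim~\ref{claim-chromatic-wh-whT}, your assertion that every rebalancing step ``either leaves $wh(root)$ unchanged or decreases it'' is false: a \textsc{BLK} step applied at the root \emph{increases} $wh(root)$ by one (the new top node is supposed to get weight $\ux.w-1$, but the root's weight is forced back to~$1$). The paper's argument is closer to what you sketch afterward: any rebalancing step whose $\ux$ is not the root preserves all leaf path-weights and hence preserves $wh(root)$; only steps with $\ux$ equal to the root can change $wh(root)$, and then only because the new root's weight is coerced to~$1$. By inspecting Figure~\ref{fig-chromatic-rotations}, each such change is by at most one, and each step that changes $wh(root)$ also eliminates at least one violation. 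Since no rebalancing step ever increases the total number of violations and the sequence removes all $c$ of them, the net change in $wh(root)$ over the whole sequence is at most $c$ in absolute value, giving the bound. Your error about monotonicity is harmless for the inequality's direction (increases only help), but you should not claim it.
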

\begin{proof}
\textbf{Claim~\ref{claim-chromatic-h-wh}:}
Consider any path from $root$ to a leaf.
It has at most $wh(root)$ non-red nodes.
So, there can be at most $wh(root)$ red nodes that do not have red parents on the path (since $root$ has weight 1).
There are at most $c$ red nodes on the path that have red parents.
So the total number of nodes on the path is at most $2wh(root)+c$.

\textbf{Claim~\ref{claim-chromatic-wh-whT}:}
Consider any rebalancing step that is performed by replacing some node 
$\ux$ by $n$ (using the notation of Figure~\ref{fig-chromatic-rotations}).
If $\ux$ is not the root of the chromatic tree, then $wh(\ux)=wh(n)$,
since the path weights of all leaves in a chromatic tree must be equal. 
(Otherwise, the path weight to a leaf in the subtree rooted at $n$ would become
different from the path weight to a leaf outside this subtree.)

Thus, the only rebalancing steps that can change the weighted height of the root are those
where $\ux$ is the root of the tree.
Recall that the weight of the root is always one.
If $\ux$ and $n$ are supposed to have different weights according to Figure~\ref{fig-chromatic-rotations},
then blindly setting the weight of the 
$n$ to one will have the effect of changing the weighted height of the root.
By inspection of Figure~\ref{fig-chromatic-rotations}, the only transformation that increases
the weighted height of the root is BLK, because it is the only transformation where
the weight of $n$ is supposed to be less than the weight of $\ux$.
%only \textsc{BLK}, \textsc{W7}, and \textsc{Push} where the weights of $\ux$ and $n$ are different. 
%By trivial inspection of the rebalancing steps, only \textsc{red-push1}, \textsc{red-push2}, \textsc{weight-dec3} and \textsc{weight-push} can change the weighted height of the root.
%Furthermore, these rebalancing steps increase or decrease the weighted height of the root by one, and %, in doing so,
Thus, each application of BLK at the root increases the weighted height of the root by one, but also
eliminates at least one red-red violation at a grandchild of the root (without introducing any new violations).  Since none of the rebalancing
steps increases the number of violations in the tree,
%reduce the number of violations in the tree by at least one.
%(Since the topmost node is $root$, none of these rebalancing steps change the weight of $root$.
%With this fact, it is easy to prove that these rebalancing steps do not create any new violations.
%We now describe the violations that these rebalancing steps remove.
%\textsc{BLK} eliminates a red-red violation at a grandchild of $root$, \textsc{W7} eliminates overweight violations at the left and right children of $root$, and \textsc{Push} eliminates an overweight violation at the left child of $root$.)
performing any sequence of steps that eliminates $c$ violations will change the weighted height of the root by at most $c$. %result in the weighted height of the root increasing or decreasing by at most $c$.
The claim then follows from the fact that $T$ is produced by eliminating $c$ violations from the chromatic tree rooted at $root$.

\textbf{Claim~\ref{claim-chromatic-whT-hT}:}
Since $T$ is a RBT, it contains no overweight violations. 
Thus, the weighted height of the tree is a sum of zeros and ones. 
It follows that $wh(root_T) \le h(root_T)$.
\end{proof}

\begin{cor} \label{cor-chromatic-height-and-violations}
If there are $c$ incomplete \ins\ and \del\ operations and the data structure contains
$n$ keys, then its height is $O(log\ n + c)$.
\end{cor}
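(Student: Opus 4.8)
The plan is to assemble the bound from Corollary~\ref{violation-bound} and the three inequalities of Lemma~\ref{lem-chromatic-claims}. Fix any configuration, and let $c$ be the number of incomplete \ins\ and \del\ operations in that configuration. By Corollary~\ref{violation-bound}, the chromatic tree rooted at $root$ (ignoring the constantly many sentinel nodes at the top of the data structure) contains at most $c$ violations. Since the rebalancing steps of Boyar, Fagerberg and Larsen can always be applied until no violation remains, and since no rebalancing step changes the set of keys stored in the leaves, there is a red-black tree $T$ rooted at some $root_T$ that is obtained from the current tree by a sequence of rebalancing steps eliminating all (at most $c$) violations. This $T$ still has exactly $n$ leaves, hence $2n-1$ nodes, since in our trees every internal node has two children.

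Next I would chain the inequalities of Lemma~\ref{lem-chromatic-claims}: $h(root) \le 2\,wh(root) + c$, together with $wh(root) \le wh(root_T) + c$ and $wh(root_T) \le h(root_T)$, yields
\[
  h(root) \;\le\; 2\bigl(wh(root_T) + c\bigr) + c \;\le\; 2\,h(root_T) + 3c .
\]
It then remains to bound $h(root_T)$. Because $T$ is a red-black tree with $2n-1$ nodes, the standard red-black invariants (no red node has a red parent, and the number of black nodes is the same on every root-to-leaf path) imply $h(root_T) = O(\log n)$ — this is exactly the property recalled in Section~\ref{sec-chromatic}. Substituting gives $h(root) = O(\log n) + 3c = O(\log n + c)$, and since the full data structure merely places the constant number of sentinel nodes shown in Figure~\ref{fig-treetop} above $root$, its height is $h(root) + O(1) = O(\log n + c)$, as claimed.

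There is essentially no hard step here: everything is built from previously established lemmas, and the argument is a short composition of inequalities. The only places needing a sentence of care are (i) quoting the fact that a red-black tree on $\Theta(n)$ nodes has height $O(\log n)$, which is standard, (ii) invoking the convergence property of the Boyar--Fagerberg--Larsen rebalancing steps to guarantee the existence of the red-black tree $T$ reachable from the current tree, and (iii) the routine constant-additive accounting for the sentinel nodes; none of these presents a genuine obstacle.
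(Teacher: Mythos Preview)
Your proof is correct and follows essentially the same route as the paper: chain the three inequalities of Lemma~\ref{lem-chromatic-claims} together with the violation bound from Corollary~\ref{violation-bound} to obtain $h(root)\le 2h(root_T)+3c$, then invoke the $O(\log n)$ height of red-black trees and absorb the constant for the sentinel nodes. You spell out the chaining and the existence of $T$ more explicitly than the paper does, but the argument is the same.
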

\begin{proof}
Let $root$, $T$, $root_T$ be defined as in Lemma~\ref{lem-chromatic-claims}.
We immediately obtain $h(root) \le 2h(root_T) + 3c$  from Corollary \ref{violation-bound} and Lemma~\ref{lem-chromatic-claims}.  Since the height of a RBT is $O(\log n)$, it follows that the height of our data structure is $O(\log n + c)$ (including the two dummy nodes at the top of the tree with key $\infty$).
\end{proof}

\end{document}